\let\oldtocsection=\tocsection
\let\oldtocsubsection=\tocsubsection
\renewcommand{\tocsection}[2]{\hspace{0em} \oldtocsection{#1}{#2}}
\renewcommand{\tocsubsection}[2]{\hspace{1.7em} \oldtocsubsection{#1}{#2}} 
\DeclareSymbolFont{largesymbols}{OMX}{zplm}{m}{n} 
\numberwithin{equation}{section}
\newcolumntype{C}{>{$}c<{$}} 
\let\originalleft\left
\let\originalright\right
\renewcommand{\left}{\mathopen{}\mathclose\bgroup\originalleft}
\renewcommand{\right}{\aftergroup\egroup\originalright}
\newcommand{\eps}{\varepsilon}
\newcommand{\alg}[1]{\mathfrak{#1}}
\newcommand{\grp}[1]{\mathsf{#1}}
\newcommand{\func}[2]{#1 \left( #2 \right)} 
\newcommand{\tfunc}[2]{#1 \bigl( #2 \bigr)} 
\newcommand{\brac}[1]{\left( #1 \right)}
\newcommand{\tbrac}[1]{\bigl( #1 \bigr)}
\newcommand{\sqbrac}[1]{\left[ #1 \right]}
\newcommand{\set}[1]{\left\{ #1 \right\}}
\newcommand{\tset}[1]{\bigl\{ #1 \bigr\}}
\newcommand{\st}{\mspace{5mu} : \mspace{5mu}} 
\newcommand{\abs}[1]{\left\lvert #1 \right\rvert}
\newcommand{\ZZ}{\mathbb{Z}}
\newcommand{\RR}{\mathbb{R}}
\newcommand{\CC}{\mathbb{C}}
\newcommand{\ahol}[1]{\overline{#1}} 
\newcommand{\nonch}[1]{#1}           
\newcommand{\pd}{\partial}         
\newcommand{\apd}{\ahol{\partial}} 
\newcommand{\dd}{\mathrm{d}}   
\newcommand{\ii}{\mathfrak{i}} 
\newcommand{\ee}{\mathsf{e}}   
\newcommand{\killing}[2]{\kappa \bigl( #1 , #2 \bigr)} 
\newcommand{\normord}[1]{\mbox{${} : #1 : {}$}} 
\newcommand{\comm}[2]{\bigl[ #1 , #2 \bigr]}
\newcommand{\acomm}[2]{\bigl\{ #1 , #2 \bigr\}}
\newcommand{\bra}[1]{\bigl\langle #1 \bigr\rvert}
\newcommand{\ket}[1]{\bigl\lvert #1 \bigr\rangle}
\newcommand{\braket}[2]{\bigl\langle #1 \bigr\rvert \bigl. #2 \bigr\rangle}          
\newcommand{\bracket}[3]{\bigl\langle #1 \bigr\rvert #2 \bigl\lvert #3 \bigr\rangle} 
\newcommand{\corrfn}[1]{\bigl\langle #1 \bigr\rangle}
\newcommand{\conjaut}{\mathsf{w}}             
\newcommand{\sfaut}{\sigma}                   
\newcommand{\conjmod}[1]{#1^{\star}}          
\newcommand{\sfmod}[2]{#2^{(#1)}}             
\newcommand{\affine}[1]{\widehat{#1}}
\newcommand{\SLG}[2]{\grp{#1} \bigl( #2 \bigr)}                             
\newcommand{\SLA}[2]{\alg{#1} \bigl( #2 \bigr)}                             
\newcommand{\SLSA}[3]{\alg{#1} \left( #2 \middle\vert #3 \right)}           
\newcommand{\SLSG}[3]{\grp{#1} \left( #2 \middle\vert #3 \right)}           
\newcommand{\AKMA}[2]{\affine{\alg{#1}} \left( #2 \right)}                  
\newcommand{\AKMSA}[3]{\affine{\alg{#1}} \left( #2 \middle\vert #3 \right)} 
\newcommand{\minmod}[2]{\mathsf{M} \bigl( #1 , #2 \bigr)}                   
\newcommand{\SingAlg}[1]{\mathfrak{M} \bigl( #1 \bigr)}                     
\newcommand{\TripAlg}[1]{\mathfrak{W} \bigl( #1 \bigr)}                     
\newcommand{\ExtAlg}[1]{\mathbb{W}^{(#1)}}                                  
\newcommand{\finite}[1]{\overline{#1}}
\newcommand{\GLVer}[1]{\mathcal{V}_{#1}}  
\newcommand{\GLAtyp}[1]{\mathcal{A}_{#1}} 
\newcommand{\GLStag}[1]{\mathcal{P}_{#1}} 
\newcommand{\GLBulk}[1]{\mathcal{B}_{#1}} 
\newcommand{\GLBulkatyp}[1]{\mathcal{J}_{#1}} 
\newcommand{\FinGLVer}[1]{\finite{\mathcal{V}}_{#1}}  
\newcommand{\FinGLAtyp}[1]{\finite{\mathcal{A}}_{#1}} 
\newcommand{\FinGLStag}[1]{\finite{\mathcal{P}}_{#1}} 
\newcommand{\SLIrr}[1]{\mathcal{L}_{#1}}  
\newcommand{\SLDisc}[1]{\mathcal{D}_{#1}} 
\newcommand{\SLTyp}[1]{\mathcal{E}_{#1}}  
\newcommand{\SLStag}[1]{\mathcal{S}_{#1}} 
\newcommand{\SLBulk}[1]{\mathcal{B}_{#1}} 
\newcommand{\ExtSLTyp}[2]{\mathbb{E}^{(#1)}_{#2}}
\newcommand{\BGIrr}[1]{\mathsf{L}_{#1}}  
\newcommand{\BGTyp}[1]{\mathsf{E}_{#1}}  
\newcommand{\BGStag}[1]{\mathsf{S}_{#1}} 
\newcommand{\Fock}[1]{\mathcal{F}_{#1}}     
\newcommand{\ExtFock}[1]{\mathbb{F}_{[#1]}} 
\newcommand{\SFIrr}[1]{\mathbb{L}_{#1}}  
\newcommand{\SFVer}[1]{\mathbb{V}_{#1}}  
\newcommand{\SFStag}[1]{\mathbb{S}_{#1}} 
\newcommand{\SFStagbulk}[1]{\mathbb{B}_{#1}}
\newcommand{\SingTyp}[1]{\mathcal{F}_{#1}}  
\newcommand{\SingAtyp}[1]{\mathcal{M}_{#1}} 
\newcommand{\SingStag}[1]{\mathcal{P}_{#1}} 
\newcommand{\TripIrr}[1]{\mathcal{W}_{#1}}  
\newcommand{\TripStag}[1]{\mathcal{S}_{#1}} 
\newcommand{\TripStagbulk}[1]{\mathcal{B}_{#1}}
\newcommand{\ExtSingTyp}[2]{\mathbb{F}^{(#1)}_{#2}}  
\newcommand{\ExtSingAtyp}[2]{\mathbb{M}^{(#1)}_{#2}} 
\newcommand{\VirVer}[1]{\mathcal{V}_{#1}}   
\newcommand{\VirIrr}[1]{\mathcal{L}_{#1}}   
\newcommand{\VirQuot}[1]{\mathcal{Q}_{#1}}  
\newcommand{\VirStag}[1]{\mathcal{S}_{#1}}  
\newcommand{\LMod}{\mathcal{H}^{\textup{L}}}    
\newcommand{\RMod}{\mathcal{H}^{\textup{R}}}    
\newcommand{\LDim}{h^{\textup{L}}}              
\newcommand{\RDim}{h^{\textup{R}}}              
\newcommand{\subRMod}{\mathcal{K}^{\textup{R}}} 
\DeclareMathOperator{\tr}{tr}
\DeclareMathOperator{\str}{str}
\newcommand{\traceover}[1]{\tr_{\raisebox{-3pt}{$\scriptstyle #1$}}}
\newcommand{\straceover}[1]{\str_{\raisebox{-3pt}{$\scriptstyle #1$}}}
\newcommand{\chmap}{\mathrm{ch}}
\newcommand{\schmap}{\mathrm{sch}}
\newcommand{\ch}[1]{\chmap \bigl[ #1 \bigr]}     
\newcommand{\fch}[2]{\ch{#1} \bigl( #2 \bigr)}   
\newcommand{\sch}[1]{\schmap \bigl[ #1 \bigr]}
\newcommand{\fsch}[2]{\sch{#1} \bigl( #2 \bigr)}
\newcommand{\jth}[1]{\vartheta_{#1}}            
\newcommand{\Jth}[2]{\jth{#1} \bigl( #2 \bigr)} 
\newcommand{\modS}{\mathsf{S}} 
\newcommand{\modT}{\mathsf{T}} 
\newcommand{\modC}{\mathsf{C}} 
\newcommand{\atyp}[1]{\underline{#1}} 
\newcommand{\ExtmodS}[1]{\mathbb{S}^{(#1)}} 
\newcommand{\Sch}[1]{\Bigl. \ch{#1} \Bigr\rvert_{\modS}}   
\newcommand{\Tch}[1]{\Bigl. \ch{#1} \Bigr\rvert_{\modT}}   
\newcommand{\Ssch}[1]{\Bigl. \sch{#1} \Bigr\rvert_{\modS}} 
\newcommand{\fuse}{\mathbin{\times}}                                 
\newcommand{\grfuse}{\mathbin{\dot{\times}}}                         
\newcommand{\fakefuse}{\mathbin{\text{``{}$\fuse$''}}}               
\newcommand{\fuscoeff}[2]{\mathsf{N}_{#1}^{\hphantom{#1} #2}}        
\newcommand{\fusmat}[1]{\mathsf{N}_{#1}}                             
\newcommand{\Extfusmat}[2]{\mathbb{N}^{(#1)}_{[#2]}}                 
\newdimen{\tempwidth}
\newcommand{\Extfuscoeff}[4]{
  \settowidth{\tempwidth}{$(#1)$}
  \mathsf{N}_{[#2] [#3]}^{(#1) \hspace{-\tempwidth} \hphantom{[#2] [#3]} [#4]}
}
\newcommand{\Groth}{\mathsf{Gr}}       
\newcommand{\tGr}[1]{\bigl[ #1 \bigr]}
\newcommand{\partfunc}[1]{\mathbf{Z}_{\text{#1}}}
\newcommand{\bulkstatespace}{\mathcal{H}_{\text{bulk}}}
\newcommand{\ra}{\rightarrow}
\newcommand{\lra}{\longrightarrow}
\newcommand{\ses}[3]{0 \ra #1 \ra #2 \ra #3 \ra 0}                                  
\newcommand{\dses}[5]{0 \lra #1 \overset{#2}{\lra} #3 \overset{#4}{\lra} #5 \lra 0} 
\newcommand{\eqnref}[1]{Equation~\eqref{#1}}
\newcommand{\secref}[1]{Section~\ref{#1}}
\newcommand{\secDref}[2]{Sections~\ref{#1} and \ref{#2}}
\newcommand{\secTref}[3]{Sections~\ref{#1}, \ref{#2} and \ref{#3}}
\newcommand{\appref}[1]{Appendix~\ref{#1}}
\newcommand{\figref}[1]{Figure~\ref{#1}}
\newcommand{\figDref}[2]{Figures~\ref{#1} and \ref{#2}}
\newcommand{\tabref}[1]{Table~\ref{#1}}
\newcommand{\thmref}[1]{Theorem~\ref{#1}}
\newcommand{\cft}{conformal field theory}
\newcommand{\cfts}{conformal field theories}
\newcommand{\uea}{universal enveloping algebra}
\newcommand{\lcft}{logarithmic conformal field theory}
\newcommand{\lcfts}{logarithmic conformal field theories}
\newcommand{\WZW}{Wess-Zumino-Witten}
\newcommand{\ope}{operator product expansion}
\newcommand{\opes}{operator product expansions}
\newcommand{\hws}{highest weight state}
\newcommand{\hwss}{highest weight states}
\newcommand{\sv}{singular vector}
\newcommand{\svs}{singular vectors}
\newcommand{\hwm}{highest weight module}
\newcommand{\hwms}{highest weight modules}
\newcommand{\hwsm}{highest weight submodule}
\newcommand{\pde}{partial differential equation}
\newcommand{\pdes}{partial differential equations}
\DeclareMathOperator{\id}{id}
\DeclareMathOperator{\im}{im}
\DeclareMathOperator{\Ext}{Ext}
\DeclareMathOperator{\soc}{soc}
\DeclareMathOperator{\sgn}{sgn}
\newcommand{\ExtGrp}[3]{\Ext^{#1} \left( #2 , #3 \right)}
\newcommand{\directint}{\ominus \mspace{-17.4mu} \int} 
\theoremstyle{plain}
\newtheorem*{thm}{Theorem}
\begin{document}

\title[Logarithmic Conformal Field Theory]{Logarithmic Conformal Field Theory: \\ Beyond an Introduction}

\author[T Creutzig]{Thomas Creutzig}

\address[T Creutzig]{
Fachbereich Mathematik \\
Technische Universit\"{a}t Darmstadt \\
Schlo\ss{}gartenstra\ss{}e 7 \\
64289 Darmstadt \\
Germany
}

\email{tcreutzig@mathematik.tu-darmstadt.de}

\author[D Ridout]{David Ridout}

\address[David Ridout]{
Department of Theoretical Physics \\
Research School of Physics and Engineering;
and
Mathematical Sciences Institute;
Australian National University \\
Canberra, ACT 2600 \\
Australia
}

\email{david.ridout@anu.edu.au}

\thanks{\today}

\setstretch{1.1}

\begin{abstract}
This article aims to review a selection of central topics and examples in \lcft{}.  It begins with the remarkable observation of Cardy that the horizontal crossing probability of critical percolation may be computed analytically within the formalism of boundary \cft{}.  Cardy's derivation relies on certain implicit assumptions which are shown to lead inexorably to indecomposable modules and logarithmic singularities in correlators.  For this, a short introduction to the fusion algorithm of Nahm, Gaberdiel and Kausch is provided.

While the percolation \lcft{} is still not completely understood, there are several examples for which the formalism familiar from rational \cft{}, including bulk partition functions, correlation functions, modular transformations, fusion rules and the Verlinde formula, has been successfully generalised.  This is illustrated for three examples:  The singlet model $\SingAlg{1,2}$, related to the triplet model $\TripAlg{1,2}$, symplectic fermions and the fermionic $bc$ ghost system; the fractional level \WZW{} model based on $\AKMA{sl}{2}$ at $k=-\tfrac{1}{2}$, related to the bosonic $\beta \gamma$ ghost system; and the \WZW{} model for the Lie supergroup $\SLSG{GL}{1}{1}$, related to $\SLSG{SL}{2}{1}$ at $k=-\tfrac{1}{2}$ and $1$, the Bershadsky-Polyakov algebra $W_3^{(2)}$ and the Feigin-Semikhatov algebras $W_n^{(2)}$.  These examples have been chosen because they represent the most accessible, and most useful, members of the three best-understood families of \lcfts{}:  The logarithmic minimal models $\TripAlg{q,p}$, the fractional level \WZW{} models, and the \WZW{} models on Lie supergroups (excluding $\SLSG{OSP}{1}{2n}$).

In this review, the emphasis lies on the representation theory of the underlying chiral algebra and the modular data pertaining to the characters of the representations.  Each of the archetypal \lcfts{} is studied here by first determining its irreducible spectrum, which turns out to be continuous, as well as a selection of natural reducible, but indecomposable, modules.  This is followed by a detailed description of how to obtain character formulae for each irreducible, a derivation of the action of the modular group on the characters, and an application of the Verlinde formula to compute the Grothendieck fusion rules.  In each case, the (genuine) fusion rules are known, so comparisons can be made and favourable conclusions drawn.  In addition, each example admits an infinite set of simple currents, hence extended symmetry algebras may be constructed and a series of bulk modular invariants computed.  The spectrum of such an extended theory is typically discrete and this is how the triplet model $\TripAlg{1,2}$ arises, for example.  Moreover, simple current technology admits a derivation of the extended algebra fusion rules from those of its continuous parent theory.  Finally, each example is concluded by a brief description of the computation of some bulk correlators, a discussion of the structure of the bulk state space, and remarks concerning more advanced developments and generalisations.

The final part gives a very short account of the theory of staggered modules, the (simplest class of) representations that are responsible for the logarithmic singularities that distinguish logarithmic theories from their rational cousins.  These modules are discussed in a generality suitable to encompass all the examples met in this review and some of the very basic structure theory is proven.  Then, the important quantities known as logarithmic couplings are reviewed for Virasoro staggered modules and their role as fundamentally important parameters, akin to the three-point constants of rational \cft{}, is discussed.  An appendix is also provided in order to introduce some of the necessary, but perhaps unfamiliar, language of homological algebra.
\end{abstract}

\maketitle

\newpage

\tableofcontents

\newpage

\section{Introduction} \label{sec:Intro}

Ever since the pioneering work of Belavin, Polyakov and Zamolodchikov \cite{BelInf84}, two-dimensional \cft{} has been at the forefront of much of the progress in modern mathematical physics.  Its application to the study of critical statistical models and string theory is well known, see \cite{SalExa87,WitNon84} for example, but it also provides the basic inspiration for the mathematical theory of vertex operator algebras.  The simplest \cfts{} are constructed mathematically from irreducible representations of an infinite-dimensional symmetry algebra.  However, recent attention to non-local observables for statistical models and string theories with fermionic degrees of freedom has led to the conclusion that the corresponding field-theoretic models require, in addition, certain reducible, but indecomposable, representations.  Such models have come to be known as \emph{\lcfts{}} because the type of indecomposability required leads to logarithmic singularities in correlation functions.

As a field of study, \lcft{} dates back to the works of Rozansky and Saleur on the $\SLSG{U}{1}{1}$ (or perhaps $\SLSG{GL}{1}{1}$) \WZW{} model \cite{Rozansky:1992rx,Rozansky:1992td} and that of Gurarie on a fermionic ghost system \cite{GurLog93} related to the theory now known as symplectic fermions.  Since then, things have progressed rather rapidly with many of the standard features of rational \cft{} now understood in the logarithmic setting.  In particular, there are three fine reviews of the subject \cite{FloBit03,GabAlg03,KawLog03} which focus on, among other things, modular transformations, module structure and boundary aspects, mostly for a family of theories related to symplectic fermions.

Each of these reviews are accounts of lectures given at a workshop in 2001.  The present review aims to build upon the state of knowledge summarised there, introducing the reader to some of the recent advances that seem to be converging towards a more unified picture of \lcft{}.  Unfortunately, a detailed overview would require a rather lengthy book, hence we will restrict ourselves to foundational material and in-depth examples which we believe, hopefully without controversy, are ``archetypes'' for the discipline.  We hope that our choice will give the reader a good sense of what structures \lcft{} relies upon and what one can do with it.

In particular, we work almost entirely in the continuum, expecting that the reader is familiar with rational \cft{} as described (for example) in \cite{DiFCon97}, eschewing approaches based on statistical lattice models and conjectured scaling limits (see \cite{CarLog99,ReaExa01,PeaLog06,ReaAss07}).  We also work, for the most part, with chiral algebras, even though it is well known that the holomorphic factorisation principle of rational \cft{} fails in the logarithmic setting.  Instead, we will see how to construct physically satisfactory non-chiral fields, even when logarithmic behaviour is present.  For other approaches to \lcft{}, as well as condensed matter physics and string-theoretic applications, discussions of logarithmic vertex operator algebras and other relations to mathematics, we refer to the other articles that constitute this special issue of the Journal of Physics A.

We will outline what we cover in this review shortly.  First however, we quickly remind the reader how logarithmic singularities arise in correlation functions as consequences of a non-diagonalisable action of the Virasoro zero-mode $L_0$.  Then, we digress slightly in recalling the (non-logarithmic) theory known as the free boson, in particular, its characters, their modular transformations and the relation between these and the fusion rules (the Verlinde formula \cite{VerFus88}).  This is in order to set the scene for the analysis of the ``archetypal'' logarithmic theories that follow.  We also mention simple currents for the free boson and the corresponding extended algebra theories as these ideas will also play an important role for us.

\subsection{Correlators and Logarithmic Singularities} \label{sec:LogCorr}

Conformal field theory is relatively tractable among physical models due to its infinite-dimensional algebra of symmetries.  As is well-known, this always includes the Virasoro algebra, the infinite-dimensional Lie algebra spanned by modes $L_n$, $n \in \ZZ$, and $C$ with commutation relations
\begin{equation}
\comm{L_m}{L_n} = \brac{m-n} L_{m+n} + \frac{m^3-m}{12} \delta_{m+n=0} C, \qquad \comm{L_m}{C} = 0.
\end{equation}
The central mode $C$ will act on all representations as a fixed multiple of the identity, known as the central charge $c$.  We will identify $C$ with $c$ in what follows.  The energy-momentum tensor is $\tfunc{T}{z} = \sum_{n \in \ZZ} L_n z^{-n-2}$, as usual.

In this section, we recall how the global conformal invariance of the vacuum $\ket{0}$, meaning its annihilation by $L_{-1}$, $L_0$ and $L_1$, fixes the two-point functions of (chiral) fields and gives rise to logarithmic singularities when the corresponding Virasoro representations admit a non-diagonalisable action of $L_0$.  Given any chiral field $\tfunc{\phi}{z}$, the natural action of the Virasoro modes is given by
\begin{equation}
\comm{L_n}{\func{\phi}{w}} = \oint_w \func{T}{z} \func{\phi}{w} z^{n+1} \: \frac{\dd z}{2 \pi \ii}.
\end{equation}
If $\tfunc{\phi}{z}$ is a chiral primary field of conformal weight $h$, then this action gives
\begin{equation}
\comm{L_{-1}}{\func{\phi}{w}} = \pd \func{\phi}{w}, \quad 
\comm{L_0}{\func{\phi}{w}} = h \func{\phi}{w} + w \pd \func{\phi}{w}, \quad 
\comm{L_1}{\func{\phi}{w}} = 2h w \func{\phi}{w} + w^2 \pd \func{\phi}{w}
\end{equation}
and the invariance of $\ket{0}$ then leads to the following differential equations for the two-point functions:
\begin{equation}
\begin{gathered}
\brac{\pd_z + \pd_w}  \corrfn{\func{\phi}{z} \func{\phi}{w}} = 0, \qquad 
\brac{z \pd_z + w \pd_w + 2h} \corrfn{\func{\phi}{z} \func{\phi}{w}} = 0, \\
\brac{z^2 \pd_z + w^2 \pd_w + 2h \brac{z+w}} \corrfn{\func{\phi}{z} \func{\phi}{w}} = 0.
\end{gathered}
\end{equation}
It is a straight-forward exercise to show that the general solution of these equations has the form
\begin{equation} \label{eq:PrimCorr}
\corrfn{\func{\phi}{z} \func{\phi}{w}} = \frac{A}{\brac{z-w}^{2h}},
\end{equation}
for some constant $A$ (which could be zero).  We may identify $A$ with $\braket{\phi}{\phi}$.

So far, we have repeated a standard textbook computation (see \cite{DiFCon97} for example).  We now ask what happens if the primary field $\tfunc{\phi}{z}$ corresponds to a state $\ket{\phi}$ which has a \emph{Jordan partner} $\ket{\Phi}$ under the $L_0$-action:  $L_0 \ket{\Phi} = h \ket{\Phi} + \ket{\phi}$.  Then, the constant $A$ in \eqref{eq:PrimCorr} is
\begin{equation}
A = \braket{\phi}{\phi} = \bracket{\phi}{L_0 - h}{\Phi} = 0,
\end{equation}
since $L_0 \ket{\phi} = h \ket{\phi}$.  Moreover, the partner field $\tfunc{\Phi}{z}$ has the \ope{}%
\footnote{Here, we assume for simplicity that $L_n \ket{\Phi} = 0$ for all $n>0$.  See \secDref{sec:PercLogCorr}{sec:StagLogCorr} for a more general discussion.}
\begin{equation}
\func{T}{z} \func{\Phi}{w} \sim \frac{h \func{\Phi}{w} + \func{\phi}{w}}{\brac{z-w}^2} + \frac{\pd \func{\Phi}{w}}{z-w},
\end{equation}
so that the Virasoro modes act as
\begin{equation}
\begin{gathered}
\comm{L_{-1}}{\func{\Phi}{w}} = \pd \func{\Phi}{w}, \qquad 
\comm{L_0}{\func{\Phi}{w}} = h \func{\Phi}{w} + w \pd \func{\Phi}{w} + \func{\phi}{w}, \\
\comm{L_1}{\func{\Phi}{w}} = 2h w \func{\Phi}{w} + w^2 \pd \func{\Phi}{w} + 2 w \func{\phi}{w}.
\end{gathered}
\end{equation}
We therefore obtain a set of inhomogeneous differential equations for the two-point functions:
\begin{equation}
\begin{gathered}
\brac{\pd_z + \pd_w} \corrfn{\func{\phi}{z} \func{\Phi}{w}} = 0, \qquad 
\brac{z \pd_z + w \pd_w + 2h} \corrfn{\func{\phi}{z} \func{\Phi}{w}} = -\corrfn{\func{\phi}{z} \func{\phi}{w}}, \\
\brac{z^2 \pd_z + w^2 \pd_w + 2h \brac{z+w}} \corrfn{\func{\phi}{z} \func{\Phi}{w}} = -2 w \corrfn{\func{\phi}{z} \func{\phi}{w}}, \\
\brac{\pd_z + \pd_w} \corrfn{\func{\Phi}{z} \func{\Phi}{w}}, \qquad 
\brac{z \pd_z + w \pd_w + 2h} \corrfn{\func{\Phi}{z} \func{\Phi}{w}} = -\corrfn{\func{\Phi}{z} \func{\phi}{w}} - \corrfn{\func{\phi}{z} \func{\Phi}{w}}, \\
\brac{z^2 \pd_z + w^2 \pd_w + 2h \brac{z+w}} \corrfn{\func{\Phi}{z} \func{\Phi}{w}} = -2 z \corrfn{\func{\phi}{z} \func{\Phi}{w}} - 2 w \corrfn{\func{\Phi}{z} \func{\phi}{w}}.
\end{gathered}
\end{equation}
If we assume that $\tfunc{\phi}{z}$ and $\tfunc{\Phi}{z}$ are mutually bosonic, meaning that $\corrfn{\func{\phi}{z} \func{\Phi}{w}} = \corrfn{\func{\Phi}{z} \func{\phi}{w}}$, then solving these equations leads to two-point functions of the form
\begin{equation} \label{eq:LogCorr}
\corrfn{\func{\phi}{z} \func{\phi}{w}} = 0, \qquad 
\corrfn{\func{\phi}{z} \func{\Phi}{w}} = \frac{B}{\brac{z-w}^{2h}}, \qquad 
\corrfn{\func{\Phi}{z} \func{\Phi}{w}} = \frac{C - 2B \log \brac{z-w}}{\brac{z-w}^{2h}},
\end{equation}
where $B$ and $C$ are constants.  We conclude that combining global conformal invariance with a non-diagonalisable $L_0$-action leads to logarithmic singularities in correlation functions.

We remark that $\func{\Phi}{z}$ is not uniquely specified because we may, for example, add to it any multiple of $\func{\phi}{z}$ without affecting its defining properties.  However, adding such a multiple will change the constant $C$ in \eqref{eq:LogCorr}, though $B$ will remain invariant.  Because of this, $C$ may be tuned to any desired value, so is not expected to be physical.  The constant $B = \braket{\phi}{\Phi}$, on the other hand, is expected to be physically meaningful.

\subsection{The Free Boson} \label{sec:FreeBoson}

The free boson is a $c=1$ \cft{} with chiral algebra $\AKMA{gl}{1} = \AKMA{u}{1}$ generated by modes $a_n$, $n \in \ZZ$, and a central element $K$:
\begin{equation}
\comm{a_m}{a_n} = m \delta_{m+n,0} K.
\end{equation}
As usual, $K$ is identified with a real number $k$ times the identity when acting on representations and the Virasoro modes $L_n$ then follow from the standard Sugawara construction.  Moreover, when a \hws{} in such a representation has weight ($a_0$-eigenvalue) $\lambda$, its conformal dimension is $\lambda^2 / 2k$.  Note that the algebra for $k \neq 0$ is almost always rescaled via $a_m \to a_m / \sqrt{k}$ so as to set $k$ to $1$.%
\footnote{If we wish to preserve the adjoint (reality condition) $a_m^{\dag} = a_{-m}$, then we may only rescale $k$ to $\pm 1$.  Free bosons with $k>0$ are often called \emph{euclidean} whereas those with $k<0$ are called \emph{lorentzian}.}

The irreducible ($k=1$) \hwms{} $\Fock{\lambda}$, called Fock spaces, have characters given by
\begin{equation} \label{ch:BosonNaive}
\fch{\Fock{\lambda}}{q} = \traceover{\Fock{\lambda}} q^{L_0 - c/24} = \frac{q^{\lambda^2/2}}{\func{\eta}{q}}.
\end{equation}
It is well known (see \cite{SchNon06} for example) that the S-transformations of these characters amount to a Fourier transform and that one can recover non-negative integer fusion multiplicities from them using a continuum version of the Verlinde formula.  The only problem with this is that the characters \eqref{ch:BosonNaive} do not completely distinguish the irreducible modules:  $\ch{\Fock{\lambda}}$ and $\ch{\Fock{-\lambda}}$ are identical.  Consequently, the application of the Verlinde formula cannot, strictly speaking, reproduce the structure constants of the fusion ring, but only of a quotient of the fusion ring by an action of the two-element group $\ZZ_2$.

The obvious fix is to include the affine weight $\lambda$ in the character.  Of course, then the S-transformation will produce an unwanted factor for which the standard remedy is to include $k$ in the character.  In this way, we arrive at the full character (for general $k$):
\begin{equation} \label{ch:Boson}
\fch{\Fock{\lambda}}{y;z;q} = \traceover{\Fock{\lambda}} y^k z^{a_0} q^{L_0 - c/24} = \frac{y^k z^{\lambda} q^{\lambda^2/2k}}{\func{\eta}{q}}.
\end{equation}
Writing $y = \ee^{2 \pi \ii t}$, $z = \ee^{2 \pi \ii u}$ and $q = \ee^{2 \pi \ii \tau}$, the modular S-transformation of the characters \eqref{ch:Boson} acts as $\modS \colon \left( t \middle\vert u \middle\vert \tau \right) \to \left( t - u^2 / 2 \tau \middle\vert u / \tau \middle\vert -1 / \tau \right)$, leading to
\begin{equation} \label{eq:BosonS}
\Sch{\Fock{\lambda}} = \int_{-\infty}^{\infty} \modS_{\lambda \mu} \ch{\Fock{\mu}} \: \frac{\dd \mu}{\sqrt{k}}, \qquad \modS_{\lambda \mu} = \ee^{-2 \pi \ii \lambda \mu / k}.
\end{equation}
This follows from a standard gaussian integration, convergent for $k>0$ (when $k<0$, we have to assume the standard result through an analytic continuation):
\begin{align}
\int_{-\infty}^{\infty} \modS_{\lambda \mu} \ch{\Fock{\mu}} \: \frac{\dd \mu}{\sqrt{k}} &= \frac{\ee^{2 \pi \ii k t}}{\func{\eta}{\tau}} \int_{-\infty}^{\infty} \ee^{\ii \pi \tau \mu^2 / k + 2 \pi \ii \brac{u - \lambda / k} \mu} \: \frac{\dd \mu}{\sqrt{k}} = \frac{\ee^{2 \pi \ii k t - \ii \pi k \brac{u - \lambda / k}^2 / \tau}}{\sqrt{-\ii \tau} \func{\eta}{\tau}} = \Sch{\Fock{\lambda}}.
\end{align}
We remark that the measure $\dd \mu / \sqrt{k}$ is natural given the rescaling property of the $a_n$.

For $\modT \colon \left( t \middle\vert u \middle\vert \tau \right) \to \left( t \middle\vert u \middle\vert \tau + 1 \right)$, the transformation is
\begin{equation} \label{eq:BosonT}
\Tch{\Fock{\lambda}} = \int_{-\infty}^{\infty} \modT_{\lambda \mu} \ch{\Fock{\mu}} \: \frac{\dd \mu}{\sqrt{k}}, \qquad \modT_{\lambda \mu} = \ee^{\ii \pi \brac{\lambda^2 / k - 1/12}} \func{\delta}{\frac{\lambda}{\sqrt{k}} = \frac{\mu}{\sqrt{k}}}.
\end{equation}
It is straight-forward to check that $\modS^2$ and $\brac{\modS \modT}^3$ are the conjugation permutation $\lambda \to -\lambda$, hence that the characters span a representation of the modular group $\SLG{SL}{2 ; \ZZ}$ (of uncountably-infinite dimension).

The S-matrix (or S-density) is symmetric and unitary with respect to the rescaled weights $\lambda / \sqrt{k}$:
\begin{equation}
\int_{-\infty}^{\infty} \modS_{\lambda \mu} \modS_{\mu \nu}^{\dag} \: \frac{\dd \mu}{\sqrt{k}} = \int_{-\infty}^{\infty} \ee^{-2 \pi \ii \brac{\lambda - \nu} \mu / k} \: \frac{\dd \mu}{\sqrt{k}} = \func{\delta}{\frac{\lambda}{\sqrt{k}} = \frac{\nu}{\sqrt{k}}}.
\end{equation}
It immediately follows that the diagonal partition function $\partfunc{diag.} = \int_{-\infty}^{\infty} \ahol{\ch{\Fock{\lambda}}} \ch{\Fock{\lambda}} \: \dd \lambda / \sqrt{k}$ is modular invariant (T-invariance is manifest).  Similarly, the invariance of the charge conjugation partition function $\partfunc{c.c.} = \int_{-\infty}^{\infty} \ahol{\ch{\Fock{\lambda}}} \ch{\Fock{-\lambda}} \: \dd \lambda / \sqrt{k}$ follows from unitarity and the symmetry $\modS_{\lambda, \mu} = \modS_{-\lambda, -\mu}$.

The continuum Verlinde formula states that the fusion coefficients are given by
\begin{equation}
\fuscoeff{\lambda \mu}{\nu} = \int_{-\infty}^{\infty} \frac{\modS_{\lambda \rho} \modS_{\mu \rho} \modS_{\nu \rho}^*}{\modS_{0 \rho}} \: \frac{\dd \rho}{\sqrt{k}} =  \int_{-\infty}^{\infty} \ee^{-2 \pi \ii \brac{\lambda + \mu - \nu} \rho / k} \: \frac{\dd \rho}{\sqrt{k}} = \func{\delta}{\frac{\nu}{\sqrt{k}} = \frac{\lambda}{\sqrt{k}} + \frac{\mu}{\sqrt{k}}},
\end{equation}
where we recognise that the vacuum module is $\Fock{0}$.  The predicted fusion rules are therefore
\begin{equation} \label{FR:Boson}
\Fock{\lambda} \fuse \Fock{\mu} = \int_{-\infty}^{\infty} \fuscoeff{\lambda \mu}{\nu} \Fock{\nu} \: \frac{\dd \nu}{\sqrt{k}} = \Fock{\lambda + \mu},
\end{equation}
agreeing perfectly with the known fusion rules.  Actually, what the Verlinde formula computes is the fusion rules at the level of the characters.  However, the free boson theory has the property that its irreducible modules have linearly independent characters, if we use \eqref{ch:Boson}, and every module in the spectrum is completely reducible.  It follows that character fusion and module fusion coincide for this theory.

An important feature of the spectrum of the free boson is that every irreducible module is a simple current, meaning that they have inverses in the fusion ring \cite{SchExt89,SchSim90}.  Moreover, if we exclude the fusion identity $\Fock{0}$, then the simple current $\Fock{r}$ has no (irreducible) fixed points.%
\footnote{A fixed point of a simple current is a module for which fusion with the simple current reproduces itself.}  %
We can use the group generated by a simple current $\Fock{r}$ to construct extended algebras in a canonical fashion (see \cite{RidSU206,RidMin07} for example):  The extension is obtained by promoting the fields associated to the fusion orbit $\bigoplus_{j \in \ZZ} \Fock{jr}$ to symmetry generators.  In other words, this direct sum of $\AKMA{gl}{1}$-irreducibles becomes the (irreducible) vacuum module of the extended algebra.  We restrict attention to extended algebras which are integer-moded (hence bosonic).  Comparing conformal dimensions of the states of $\Fock{jr}$ shows that this is equivalent to demanding that $r^2 \in 2 \ZZ$ (we have scaled $k$ to $1$ for simplicity).%
\footnote{The extended algebra constructed from $\Fock{r}$ is, of course, the symmetry algebra of the free boson compactified on a circle of radius $r$.}

The irreducible modules of the extended algebra are also obtained as fusion orbits.  We denote them by
\begin{equation}
\ExtFock{\lambda} = \bigoplus_{j \in \ZZ} \Fock{\lambda + jr},
\end{equation}
where $\sqbrac{\lambda} = \lambda \bmod{r}$.  Requiring that the extended algebra act with integer moding leads to a finite set of (untwisted) extended algebra modules labelled by $\lambda = m/r$, with $m \in \set{0,1,\ldots,r^2-1}$.  The S-transformations of the characters of these modules follow readily from \eqref{eq:BosonS}:
\begin{align}
\Sch{\ExtFock{m/r}} &= \sum_{j \in \ZZ} \Sch{\Fock{m/r + jr}} = \int_{-\infty}^{\infty} \ee^{-2 \pi \ii m \mu / r} \sum_{j \in \ZZ} \ee^{-2 \pi \ii jr \mu} \ch{\Fock{\mu}} \: \dd \mu \notag \\
&= \frac{1}{r} \sum_{n=0}^{r^2-1} \sum_{j \in \ZZ} \ee^{-2 \pi \ii mn / r^2} \ch{\Fock{n/r + jr}} = \frac{1}{r} \sum_{n=0}^{r^2-1} \ee^{-2 \pi \ii mn / r^2} \ch{\ExtFock{n/r}}.
\end{align}
Here, we have applied the following summation formula:
\begin{equation}
\sum_{j \in \ZZ} \ee^{-2 \pi \ii jr \mu} = \sum_{\ell \in \ZZ} \tfunc{\delta}{r \mu = \ell} = \frac{1}{r} \sum_{\ell \in \ZZ} \tfunc{\delta}{\mu = \frac{\ell}{r}} = \frac{1}{r} \sum_{n=0}^{r^2-1} \sum_{j \in \ZZ} \tfunc{\delta}{\mu = \frac{n}{r} + jr}.
\end{equation}

The extended algebra's S-matrix is therefore given by
\begin{equation}
\ExtmodS{r}_{mn} = \frac{1}{r} \ee^{-2 \pi \ii mn / r^2}, \qquad m,n \in \set{0,1,\ldots,r^2-1}.
\end{equation}
This is again symmetric and unitary, so one can construct a diagonal modular invariant partition function $\partfunc{$r$} = \sum_{m=0}^{r^2-1} \ahol{\ch{\ExtFock{m/r}}} \ch{\ExtFock{m/r}}$ and its charge conjugate version.  Expressing this in terms of Fock space characters, one obtains an infinite set of non-diagonal modular invariants for $\AKMA{gl}{1}$ with discrete spectra.  Finally, it is easy to check that the (standard) Verlinde formula for the extended algebra gives non-negative integer coefficients:  $\Extfuscoeff{r}{m/r}{n/r}{p/r} = \delta_{p = m+n \bmod{r^2}}$.

Thus far, we have seen that the modular S-transformations of the free boson characters may be used to compute the S-transformations of those of the extended algebras.  These in turn can then be used to compute the Verlinde formula for the extended theories.  In a sense though, this is overkill because simple current technology makes it possible to reproduce the extended algebra fusion rules from those of the free boson.  Na\"{\i}vely, one might try
\begin{equation}
\ExtFock{\lambda} \fakefuse \ExtFock{\mu} = \bigoplus_{i,j \in \ZZ} \brac{\Fock{\lambda + ir} \fuse \Fock{\mu + jr}} = \bigoplus_{i \in \ZZ} \bigoplus_{j \in \ZZ} \Fock{\lambda + \mu + \brac{i+j} r} = \bigoplus_{j \in \ZZ} \ExtFock{\lambda + \mu}.
\end{equation}
However, this gives an overall multiplicity of infinity, even when $\lambda = \mu = 0$. The reason is that each of the $\Fock{\lambda + ir}$ are in the same module for the extended algebra, hence each of these Fock spaces gives exactly the same contribution to the fusion product.  It is therefore necessary to choose a single representative $\Fock{\lambda + ir}$, a convenient one has $i=0$, to avoid multiply counting the same information.  This ``renormalisation'' leads to
\begin{equation}
\ExtFock{\lambda} \fuse \ExtFock{\mu} = \bigoplus_{j \in \ZZ} \brac{\Fock{\lambda} \fuse \Fock{\mu + jr}} = \bigoplus_{j \in \ZZ} \Fock{\lambda + \mu + jr} = \ExtFock{\lambda + \mu},
\end{equation}
fixing the multiplicity issue. We therefore arrive at a very powerful strategy to compute the fusion rules of extended theories which may be summarised as follows:
\begin{itemize}
\item Compute the modular S-transformation of the (non-rational) theory with continuous spectrum.
\item Deduce fusion rules using the continuum Verlinde formula.
\item Use these fusion rules to identify simple current extensions with discrete (finite) spectrum.
\item Extract the fusion rules of the extended theory from those of the non-rational theory. 
\end{itemize}

We conclude this exercise by checking these extended algebra results at the self-dual radius $r = \sqrt{2}$, for which it is well known that the extended algebra is $\AKMA{sl}{2}$ at level $1$.  There are $r^2 = 2$ extended algebra modules $\ExtFock{0}$ and $\ExtFock{1 / \sqrt{2}}$ which are easily checked to have $1$ and $2$ ground states of dimensions $0$ and $1/4$, respectively.  The S-matrix and fusion matrices are found to be
\begin{equation}
\ExtmodS{\sqrt{2}} = \frac{1}{\sqrt{2}} 
\begin{pmatrix}
1 & 1 \\
1 & -1
\end{pmatrix}
; \qquad \Extfusmat{\sqrt{2}}{0} = 
\begin{pmatrix}
1 & 0 \\
0 & 1
\end{pmatrix}
, \quad \Extfusmat{\sqrt{2}}{1 / \sqrt{2}} = 
\begin{pmatrix}
0 & 1 \\
1 & 0
\end{pmatrix}
,
\end{equation}
which does indeed reproduce the correct data for $\AKMA{sl}{2}_1$.

\subsection{Outline}

Our review commences in \secref{sec:Perc} with an overview of the $c=0$ \lcft{} that describes the critical point of the statistical lattice model known as percolation.  We describe enough of the underlying lattice theory to introduce Cardy's celebrated formula \cite{CarCri92} for a non-local observable known as the horizontal crossing probability.  While it is clear that Cardy's derivation cannot be accommodated within a unitary theory (the only unitary $c=0$ \cft{} is the trivial minimal model $\minmod{2}{3}$), we show that his derivation actually implies a logarithmic theory, following \cite{RidPer07}.  This necessitates a brief introduction to the famous fusion algorithm of Nahm, Gaberdiel and Kausch \cite{NahQua94,GabInd96}.  We compute a few fusion products explicitly before describing the results of more involved calculations that detail the structures of the indecomposable modules so constructed.  We then use the results to derive a couple of logarithmic correlators, generalising the analysis of \secref{sec:LogCorr}, before briefly discussing other non-local percolation observables and other $c=0$ models.

\secref{sec:Trip} introduces the first of our ``archetypal'' \lcfts{}, the symplectic fermions of Kausch \cite{KauCur95}.  More precisely, we discuss a family of $c=-2$ theories which include symplectic fermions, the triplet model $\TripAlg{1,2}$ studied in \cite{Kausch:1991,Flohr:1995ea,GabRat96}, and the corresponding singlet model $\SingAlg{1,2}$, itself a special case of Zamolodchikov's original W-algebra \cite{ZamInf85}.  We begin with the symplectic fermion algebra, constructing its irreducible and indecomposable (twisted) representations and verifying the non-diagonalisability of $L_0$ on the latter, before decomposing its representations into those of the subalgebras $\TripAlg{1,2}$ and $\SingAlg{1,2}$.  For the singlet, the spectrum is continuous and it is here that we derive character formulae and deduce modular transformations.  The S-matrix is found to be symmetric and unitary, so we apply a continuous version of the Verlinde formula and find that the resulting (Grothendieck) fusion coefficients are positive integers.

As far as we are aware, the modular properties of the singlet model's characters are new (the generalisation to $\SingAlg{1,p}$ will be reported in \cite{CreW1p13}).  Assuming that the continuous Verlinde formula does give the correct (Grothendieck) fusion coefficients, we also deduce many fusion rules, in particular concluding that the singlet model possesses a countable infinity of simple currents.  We identify the maximal simple current extension as symplectic fermions and the maximal bosonic extension as the triplet model.  This also seems to be new.  We moreover use our singlet results to determine what the (Grothendieck) fusion rules for the triplet model should be, finding agreement with the fusion computations of \cite{GabRat96}.  This then provides a stringent consistency check of the continuous Verlinde formula.  We also conjecture the existence of certain singlet indecomposable modules before briefly discussing the known issue with obtaining an S-matrix for the triplet model (the S-matrix entries are not constant) and how this is manifested in the simple current extension formalism we have developed.

Finally, we discuss the bulk (non-chiral) aspects of these $c=-2$ theories.  Bulk \lcfts{} are not as well understood as their chiral counterparts, though progress has been steady \cite{GabLoc99,Schomerus:2005bf,Saleur:2006tf,GabFro08,GabMod11,RunLog12,GaiAss12}.  For this, it has proven useful to study analogous situations in mathematics.  For example, the representation of a semisimple finite-dimensional associative algebra, where it is acting on itself by left-multiplication, decomposes as a direct sum of irreducibles, with every irreducible appearing with multiplicity equal to its dimension (Wedderburn's theorem).  However, the non-semisimple case gives a direct sum of projectives, where the multiplicity of each is now the dimension of the irreducible it covers.  The semisimple case is also the result for compact Lie groups $\grp{G}$ acting on the Hilbert space $\func{L^2}{\grp{G} , \mu}$ (with $\mu$ the Haar measure), whereas the non-semisimple case seems to be roughly correct for Lie supergroups and many non-compact groups (this is the \emph{minisuperspace limit} \cite{MalStr01,Schomerus:2005bf}).

This is relevant because the modular invariant partition functions that have been constructed for \lcfts{} often have the form
\begin{equation}
\partfunc{} = \sum_i \ahol{\ch{L_i}} \ch{P_i} = \sum_i \ahol{\ch{P_i}} \ch{L_i},
\end{equation}
where $i$ labels the irreducibles $L_i$ in the spectrum and $P_i$ denotes an indecomposable cover of $L_i$ which one expects to be projective in some category.  (In the rational case, each $P_i$ and $L_i$ coincide and this reduces to the standard diagonal invariant.)  Because of this, the state space of a logarithmic theory seems likely to decompose, upon restricting to the chiral or antichiral algebra, into a direct sum of projectives (each with infinite multiplicity).

Digressions over, we conclude our discussion of this family of $c=-2$ \lcfts{} by noting the well known structure for the bulk state space of the symplectic fermions theory and discuss the structure one obtains by restricting to the triplet algebra \cite{GabLoc99}.  We then indicate how one could have guessed this structure, without \emph{a priori} knowledge of the symplectic fermions structure, based on the form of the diagonal modular invariant partition function and the above analogy.  This leads to a simple proposal for constructing bulk module structures from chiral ones which we stress automatically satisfies the physical \emph{locality} requirement that bulk correlators are single-valued.  Algebraically, this will be met if the chiral and antichiral states have conformal dimensions that differ by an integer and, in a logarithmic theory, if the spin operator $L_0 - \ahol{L}_0$ acts diagonalisably.  We conclude by computing some correlation functions and briefly mentioning what is known about the more general $\TripAlg{1,p}$ and $\TripAlg{q,p}$ theories that have received so much attention in the literature.

\secref{sec:SL2} considers an example of a fractional level \WZW{} model, specifically that whose symmetry algebra is $\AKMA{sl}{2}$ at level $k = -\tfrac{1}{2}$.  The existence of such fractional level theories was first suggested by Kent \cite{KenInf86} in order to provide a unified coset construction of all Virasoro minimal models, unitary and non-unitary.  They began to be studied seriously once Kac and Wakimoto discovered \cite{KacMod88} that the levels required for Kent's cosets, the \emph{admissible levels}, were the only ones which admitted modules whose characters carried a finite-dimensional representation of the modular group $\SLG{SL}{2 ; \ZZ}$.  Assuming, naturally enough, that this meant that admissible level models were rational, Koh and Sorba computed the fusion rules given by the Verlinde formula, noting that this sometimes resulted in negative integer fusion coefficients \cite{KohFus88}.  This puzzle was subsequently addressed by many groups \cite{BerFoc90,MatFra90,AwaFus92,RamNew93,FeiFus94,AndOpe95,PetFus96,DonVer97,FurAdm97,MatPri99}, without any real progress, before Gaberdiel pointed out \cite{GabFus01} that the assumption of rationality was in error (see also \cite{AdaVer95}).  He constructed enough fusion products for $\AKMA{sl}{2}$ at level $k=-\tfrac{4}{3}$ to conclude that the theory was logarithmic, but was unable to solve the puzzle of negative fusion multiplicities.  The level $k=-\tfrac{1}{2}$ was subsequently argued to be logarithmic using a free field realisation \cite{LesSU202,LesLog04}, but a complete picture including indecomposable module structure, characters, modular properties and the Verlinde formula has only recently emerged \cite{RidSL208,RidSL210,RidFus10,CreMod12,CreMod13}.  The purpose of \secref{sec:SL2} is to explain this progress for $k=-\tfrac{1}{2}$.

We start by introducing the closely related $\beta \gamma$ ghost system and derive the current algebra $\AKMA{sl}{2}_{-1/2}$ as an orbifold.  Instead of considering the representation theory of the ghost algebra, as we did for \secref{sec:Trip}, we determine the spectrum of $\AKMA{sl}{2}_{-1/2}$ directly.  As before, we find a continuum of generically irreducible modules, but this time they are neither highest nor lowest weight.  At the parameter values where the continuum modules become reducible, four \hwms{} are constructed (these are the admissible modules of Kac and Wakimoto).  The characters of these admissibles can be meromorphically continued using Jacobi theta functions, leading to a four-dimensional representation of the modular group.  We then illustrate the paradox of negative Verlinde fusion coefficients before indicating its resolution \cite{RidSL208} using spectral flow automorphisms.

A very important point here is that one must be careful with regions of convergence of characters.  Indeed, certain non-isomorphic modules, related by spectral flow, have (up to a sign) exactly the same meromorphically-continued character.  However, the regions where these characters converge are disjoint, being separated by a common pole.  We then interpret the sum of these characters as a distribution supported at this pole.  With this formalism, we obtain modular properties, apply the Verlinde formula and construct a discrete series of modular invariants.  Here, the story is very similar to the previous section and we are again able to propose a structure for the local bulk modules.  Finally, we use a free field realisation to compute correlation functions and conclude with a brief discussion of how all this generalises to the level $k=-\tfrac{4}{3}$.
 
\secref{sec:GL11} addresses the last of the ``archetypal'' examples, the \WZW{} theory on the Lie supergroup $\SLSG{GL}{1}{1}$.  As usual, supergroup models depend upon a level $k$ and the symmetry algebra is an affine Lie superalgebra.  But, in contrast to (integer level) bosonic \WZW{} models, our understanding of these superanalogues is still rather rudimentary.  Aside from the \emph{rational} theories associated with $\SLSG{OSP}{1}{2n}$, see \cite{EnnOSP97} for example, only the theories associated to the Lie supergroups $\SLSG{GL}{1}{1}$ and $\SLSG{PSL}{1}{1}$ (which is just symplectic fermions) are completely understood.  Indeed, these were the first logarithmic conformal field theories investigated over two decades ago.

We structure this section so as to bring out the analogy with the previous two examples.  We start with the algebra and representation theory, then continue with modular data and correlation functions following \cite{Schomerus:2005bf,Saleur:2006tf,Creutzig:2011cu}.  This example, like the two that preceded it, exhibit all the features of the simplest known logarithmic conformal field theories.  There are certainly many more logarithmic theories that should be considered, some with similar indecomposable structures to our examples, and some which are more complicated.  We mention that there are many applications which involve supergroup theories of the latter class as, for example, in statistical physics \cite{Zirnbauer:1999ua,Guruswamy:1999hi} and the AdS/CFT correspondence \cite{Berkovits:1999im} --- these will therefore need a detailed investigation in the near future. 

\secref{sec:Stag} aims to briefly outline a reasonably general approach to understanding the mathematical structures that underlie \lcft{}.  It commences with a somewhat technical discussion which introduces the important idea of a \emph{staggered module}, familiar from Virasoro studies \cite{RohRed96,RidSta09}, for a large class of associative algebras.  Some very basic results are proven at this level of generality (these results have not before been published) before restricting to a discussion of the \emph{logarithmic couplings} that parametrise the isomorphism classes of staggered modules for the Virasoro algebra.  We emphasise that these numbers are as important to \lcft{} as the three-point constants are to rational theories and we detail how they arise when computing two-point functions.  We conclude with a brief analysis of an example of a Virasoro indecomposable whose structure is more complicated than that of a staggered module because $L_0$ acts with a \emph{rank $3$ Jordan block}.

\secref{sec:Conc} then summarises what we have presented, describing a proposed approach to understanding quite general classes of \lcfts{}.  Finally, we provide a short appendix in which we have collected some of the necessary basic information about homological algebra, a very useful tool (and language) for describing the structure of indecomposable but reducible representations.

\section{Percolation as a Logarithmic Conformal Field Theory} \label{sec:Perc}

Percolation may be loosely defined as a collection of closely related probabilistic models whose observed behaviour is believed to be reasonably typical for more general classes of statistical theories.  In particular, these models exhibit phase transitions as their defining parameters pass through certain critical values \cite{KesPer82}.  Moreover, percolation is particularly easy to simulate numerically, so it is a popular choice for testing predictions such as conformal invariance and universality \cite{LanUni92,LanCon94}.  In this section, we discuss how 
the hypothesis of conformal invariance, which led Cardy to his celebrated formula \cite{CarCri92} for the horizontal crossing formula, can be accommodated within the standard framework of (boundary) \cft{}.  It has long been suspected (see \cite{CarLog99} for example) that the conformal invariance of percolation requires a logarithmic theory.  Here, we follow \cite{RidPer07} to deduce from the assumptions underlying Cardy's derivation that the spectrum of percolation contains indecomposable modules on which the Virasoro mode $L_0$ acts non-diagonalisably, hence that critical percolation is described by a \lcft{}.

\subsection{Critical Percolation and the Crossing Formula} \label{sec:Crossing}

As with many other statistical models, the primary consideration of percolation is the degree to which a very large number of identical objects tend to cluster together when distributed in a random fashion.  The setup for one of the basic percolation models is as follows:  Consider a square lattice with a given edge length and choose a fixed rectangular subdomain whose sides are a union of lattice edges.  A percolation configuration is then obtained by declaring that each edge within the subdomain is open with probability $p$ and closed with probability $1-p$.  The idea is that the subdomain represents a porous material and that open edges permit the flow of a liquid medium whereas closed edges do not.  When $p=0$, all edges are closed and the material is impermeable to the liquid.  When $p=1$, all edges are open and there is no obstruction to the liquid's flow.  For $0<p<1$, one is then led to question whether the liquid is able to percolate through the material and it is this, of course, that gives the model its name.

To be more precise, we may ask for the probability that a randomly chosen configuration of edges in our rectangular subdomain contains an open path connecting a chosen side of the rectangle with the opposite side.  Such a path is called a \emph{crossing} and \figref{fig:perc} shows an example of a configuration in which one (of the many) crossings has been drawn.  Computing this crossing probability analytically is a hopeless task, though simulation can approximate it extremely well.  However, one can ask the question again in the continuum limit where the edge length tends to $0$ while the size and shape of the rectangular subdomain is kept fixed.  In this case, one has the result \cite{KesPer82} that the limit of the crossing probabilities is $0$ if $p$ is less than a critical value, which turns out to be $p_c = \frac{1}{2}$ for a square lattice, and is $1$ if $p$ is greater than $p_c$.  The only interesting value is then the limit of the crossing probabilities when $p$ is precisely this critical value.%
\footnote{Curiously, it seems that the existence of this limit when $p=p_c$ was not known until Cardy's crossing formula (see \eqref{eq:Cardy}) was rigorously proven \cite{SchSca00,SmiCri01}.}

\begin{figure}
\begin{center}
\includegraphics[width=0.3\textwidth]{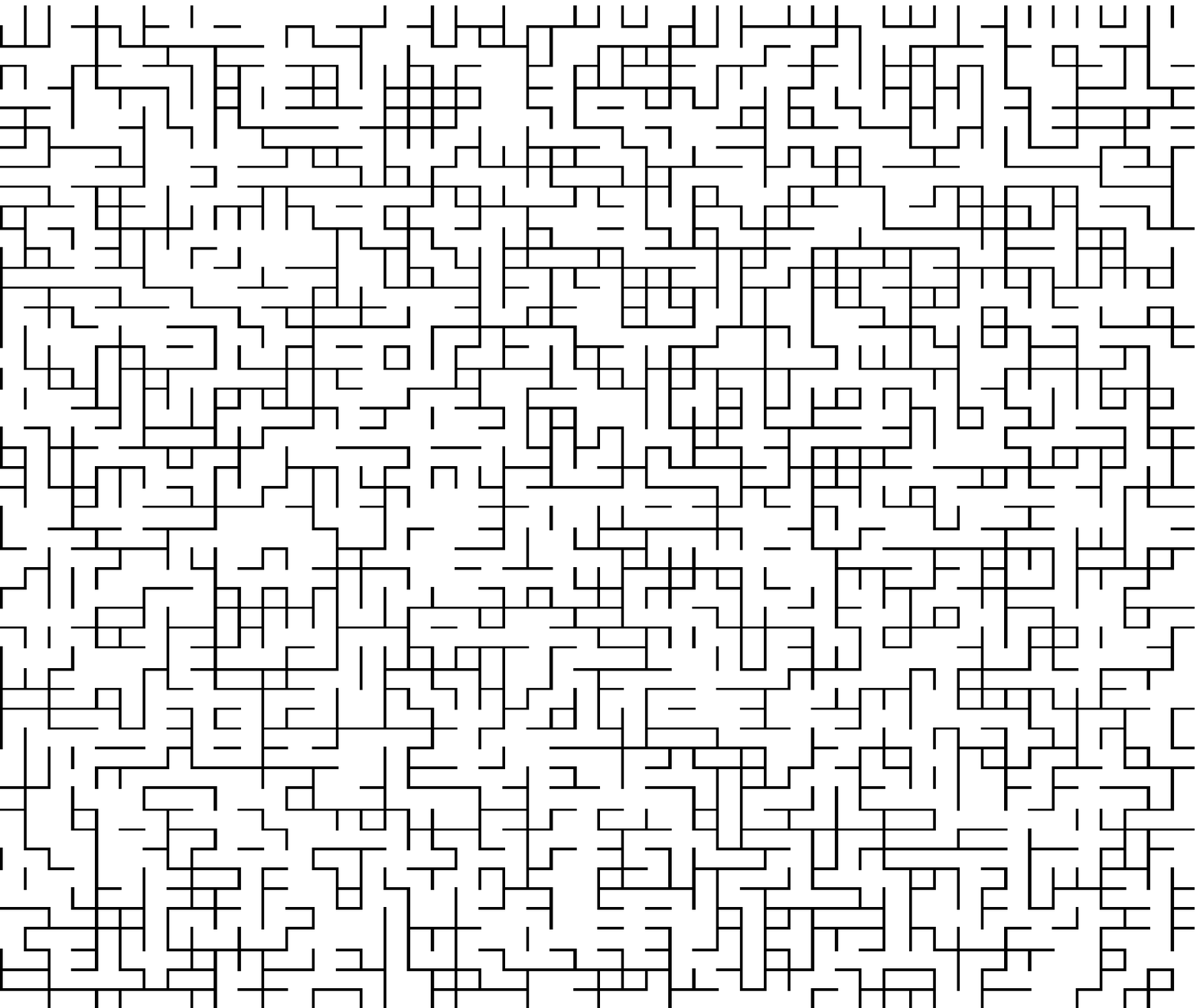} \hspace{0.1\textwidth} 
\includegraphics[width=0.3\textwidth]{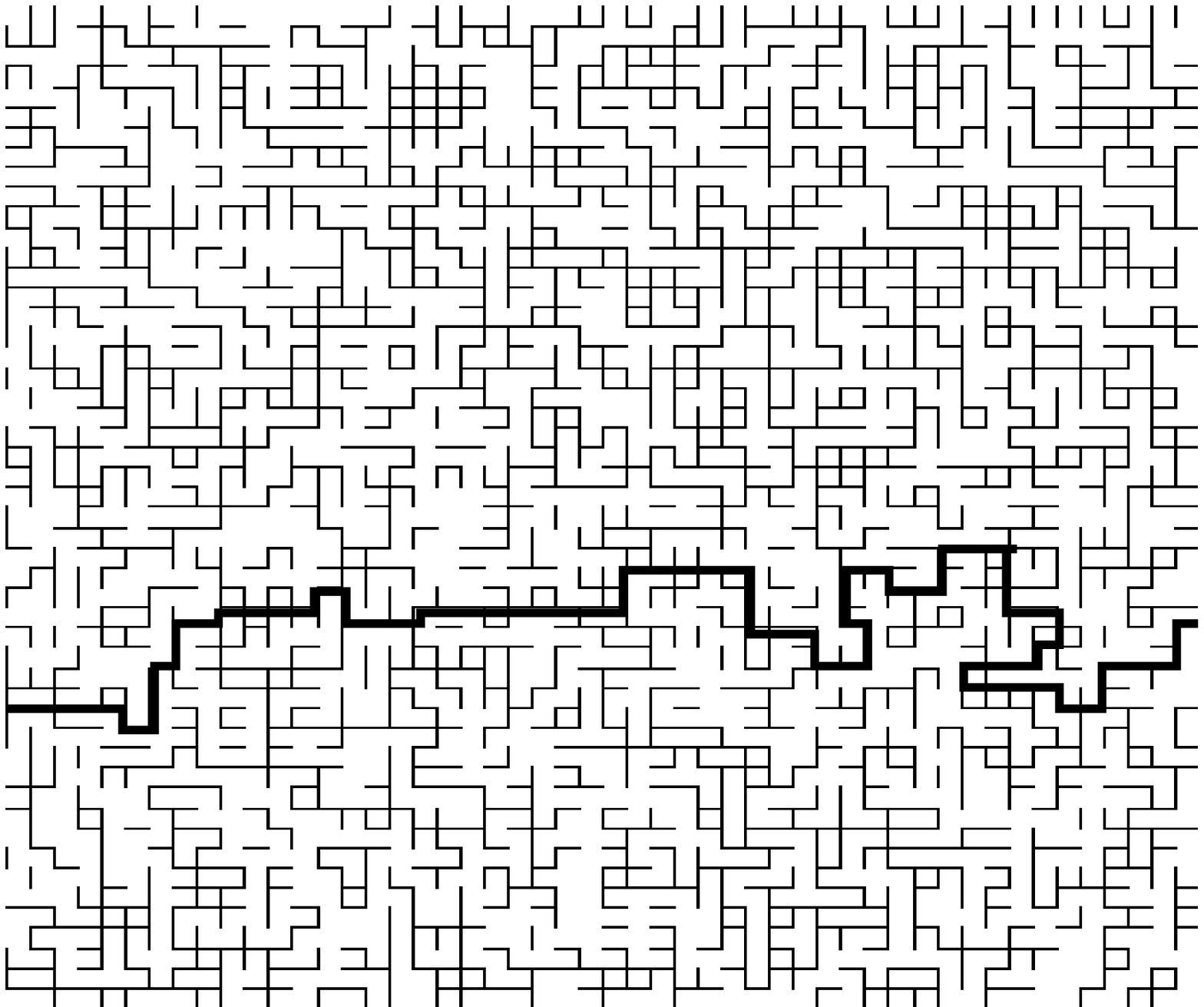}
\caption{\label{fig:perc}A typical percolation configuration (left) for a rectangular subdomain of a square lattice showing only the open edges (closed edges are omitted).  This lattice has several crossings from left to right, one of which is indicated in bold (right).} 
\end{center}
\end{figure}

This probability of a crossing being present when $p = p_c$ was famously derived by Cardy \cite{CarCri92} within the formalism of boundary \cft{} and his result is generally recognised as one of the most striking confirmations of the conjecture that the continuum limit of a statistical model is conformally invariant at its critical points.  Cardy combined the well known description of percolation as the $Q \ra 1$ limit of the $Q$-state Potts model with an inspired identification of certain boundary-changing operators in these Potts models to write the crossing probability as the four-point correlation function of a Virasoro primary field $\phi_{1,2}$, where the subscript indicates the field's Kac labels.  To apply the machinery of \cft{}, one now maps the rectangular subdomain conformally onto (a compactification of) the upper half-plane so that the fields' insertion points (the corners of the rectangle) are mapped to points $z_i$, $i=1,\ldots,4$, lying on the real axis (or to $\infty$).

The central charge $c$ of the continuum limit of the $Q$-state Potts model is well known \cite{DotCon84},
assuming of course that the limit is conformally invariant.  For percolation ($Q \ra 1$), one obtains $c=0$ and it therefore follows that $\phi_{1,2}$ has conformal dimension $0$.  Moreover, $\phi_{1,2}$ will have a singular descendant field at grade $2$ and so, according to standard \cft{} dogma, the four-point correlator representing the crossing probability will satisfy a second-order linear differential equation.  The obvious behaviour of the crossing probability as the aspect ratio of the rectangle tends to $0$ and $\infty$ then picks out a unique solution:
\begin{equation} \label{eq:Cardy}
\text{Pr} = \frac{\func{\Gamma}{\frac{2}{3}}}{\func{\Gamma}{\frac{1}{3}} \func{\Gamma}{\frac{4}{3}}} \eta^{1/3} \func{{}_2 F_1}{\frac{1}{3} , \frac{2}{3} ; \frac{4}{3} ; \eta}, \qquad \text{where} \quad \eta = \frac{\brac{z_1 - z_2} \brac{z_3 - z_4}}{\brac{z_1 - z_3} \brac{z_2 - z_4}}.
\end{equation}
The agreement between this computation and numerical data from simulations \cite{LanUni92} is impressive.

The precise formula for the crossing probability is not important for what follows.  Rather, what we wish to emphasise is that the derivation is performed with the aid of a limit $Q \ra 1$ which hides a remarkable amount of subtlety.  Indeed, one might guess that the percolation \cft{} is a minimal model, based on the usual identification of the $Q$-state Potts models for $Q=2$ and $3$ with $\minmod{3}{4}$ and $\minmod{5}{6}$, respectively.  However, the minimal model with $c=0$ is $\minmod{2}{3}$ which is trivial in the sense that its field content is limited to constant multiples of the identity.  Obviously, four-point functions in $\minmod{2}{3}$ will be constant, so this model cannot accommodate Cardy's derivation.  On the other hand, it would be distressing if Cardy's derivation turned out to be inconsistent with the principles of \cft{}.  We will therefore assume that a description of critical percolation can be accommodated within \cft{}.  As we shall see, this will require the consideration of reducible, yet indecomposable, representations.

\subsection{The Necessity of Indecomposability} \label{sec:PercIndec}

Before embarking on our explorations, let us pause to recall some useful facts concerning Virasoro modules.  This will also serve to introduce our notation.  Verma modules will be denoted by $\VirVer{h}$, where $h$ is the conformal dimensional of the \hws{}, and their irreducible quotients by $\VirIrr{h}$.  For $c=0$, we recall that the Verma module is itself irreducible unless $h = h_{r,s}$ for some $r,s \in \ZZ_+$, where
\begin{equation}
h_{r,s} = \frac{\brac{3r-2s}^2 - 1}{24}.
\end{equation}
In the latter case, $\VirVer{h} = \VirVer{h_{r,s}}$ will have a submodule generated by a singular vector at grade $rs$ (its conformal dimension will be $h_{r,s} + rs$).  If $r$ is even or $s$ is a multiple of $3$, then the maximal proper submodule of $\VirVer{h_{r,s}}$ is generated by the \sv{} of lowest (positive) grade.%
\footnote{We will often use the term ``\sv{}'' to indicate a \hws{} which is a proper descendant.  Similarly, the term ``\hws{}'' will often be used to indicate the one of lowest conformal dimension in a given module.}  %
Otherwise, it is generated by the two \svs{} of lowest and next-to-lowest grades.  It is convenient to collate the $h_{r,s}$ with $r,s \in \ZZ_+$ into an \emph{extended Kac table}, a part of which we reproduce in \tabref{tab:Kac}.  Finally, we introduce notation for certain Verma module quotients that will frequently arise in what follows:  $\VirQuot{r,s} = \VirVer{h_{r,s}} / \VirVer{h_{r,s} + rs}$.

\begin{table}
\begin{center}
\setlength{\extrarowheight}{4pt}
\begin{tabular}{|C|C|C|C|C|C|C|C|C|C|C}
\hline
0 & 0 & \frac{1}{3} & 1 & 2 & \frac{10}{3} & 5 & 7 & \frac{28}{3} & 12 & \cdots \\[1mm]
\hline
\frac{5}{8} & \frac{1}{8} & \frac{-1}{24} & \frac{1}{8} & \frac{5}{8} & \frac{35}{24} & \frac{21}{8} & \frac{33}{8} & \frac{143}{24} & \frac{65}{8} & \cdots \\[1mm]
\hline
2 & 1 & \frac{1}{3} & 0 & 0 & \frac{1}{3} & 1 & 2 & \frac{10}{3} & 5 & \cdots \\[1mm]
\hline
\frac{33}{8} & \frac{21}{8} & \frac{35}{24} & \frac{5}{8} & \frac{1}{8} & \frac{-1}{24} & \frac{1}{8} & \frac{5}{8} & \frac{35}{24} & \frac{21}{8} & \cdots \\[1mm]
\hline
\vdots & \vdots & \vdots & \vdots & \vdots & \vdots & \vdots & \vdots & \vdots & \vdots & \ddots
\end{tabular}
\vspace{3mm}
\caption{A part of the extended Kac table for $c=0$, displaying the conformal dimensions $h_{r,s}$ for which the Verma modules $\VirVer{h_{r,s}}$ are reducible.  The rows of the table are labelled by $r = 1, 2, 3, \ldots$ and the columns by $s = 1, 2, 3, \ldots$.} \label{tab:Kac}
\end{center}
\end{table}

We begin by postulating that the \cft{} describing the continuum limit of critical percolation contains a vacuum $\ket{0}$.  Equivalently, by the state-field correspondence, the identity field $I$ is present in the theory.  As $h_{1,1} = h_{1,2} = 0$, the vacuum Verma module $\VirVer{0}$ has \svs{} at grades $1$ and $2$ and these turn out to be independent in the sense that the latter is not descended from the former.%
\footnote{There are also singular vectors at grades $5,7,12,15,\ldots$ which are each descended from both the grade $1$ and grade $2$ \svs{}.}  %
In fact, the maximal proper submodule of $\VirVer{0}$ is generated by these two \svs{}.  Since $L_{-1} \ket{0}$ corresponds to the field $\pd I = 0$, we have to set $L_{-1} \ket{0} = 0$ (by quotienting $\VirVer{0}$ by the submodule it generates).  However, the grade $2$ \sv{} then corresponds to the energy-momentum tensor $\func{T}{z}$.  If this is set to $0$, then each of its modes, the Virasoro generators $L_n$, must all act as the zero operator on the states of the theory, and this leads us to the (trivial) minimal model $\minmod{2}{3}$ (or a direct sum of copies of this model).

To get a non-trivial $c=0$ theory, we must abandon the idea that \svs{} are always set to $0$.  Instead of assuming that the vacuum $\ket{0}$ generates the irreducible Virasoro module $\VirIrr{0}$, we are led to propose that the vacuum module is the reducible, but indecomposable, quotient $\VirQuot{1,1} = \VirVer{0} / \VirVer{1}$.  This is, in fact, the only remaining option because the only \sv{} to survive in $\VirQuot{1,1}$ has grade $2$ (it corresponds to $T$) and setting it to $0$ leads back to the irreducible vacuum module $\VirIrr{0}$.  In the language of \appref{app:Exact}, our proposed vacuum module $\VirQuot{1,1}$ is an extension of $\VirIrr{0}$ by the submodule generated by $\ket{T} = L_{-2} \ket{0}$, which is itself irreducible and isomorphic to $\VirIrr{2}$.  This is summarised by the exact sequence (see \appref{app:Exact})
\begin{equation}
\dses{\VirIrr{2}}{}{\VirQuot{1,1}}{}{\VirIrr{0}}.
\end{equation}

This argument shows that there is a unique choice for the vacuum module which leads to a non-trivial theory.  Moreover, this choice is reducible, but indecomposable.  To accommodate Cardy's derivation, there should also exist in the theory a primary field $\phi_{1,2}$ with a vanishing grade $2$ descendant.  This last requirement stems from the fact that the crossing probability is derived as a solution to a second order differential equation and this equation is derived from the vanishing of a grade $2$ descendant.  Because $h_{1,2} = 0$, the corresponding Verma module is again $\VirVer{0}$ with \svs{} at grades $1$ and $2$.  This time, we cannot set the grade $1$ \sv{} to $0$ because it would lead to a first order differential equation for Cardy's crossing probability (one can check that the solutions to this equation are all constant).  We therefore conclude that the reducible, but indecomposable, module $\VirQuot{1,2} = \VirVer{0} / \VirVer{2}$ is present.  Again, this is the only possibility compatible with Cardy's derivation; the exact sequence is
\begin{equation} \label{ses:Q12}
\dses{\VirIrr{1}}{}{\VirQuot{1,2}}{}{\VirIrr{0}}.
\end{equation}

This concludes the basic setup for a \cft{} which is consistent with Cardy's derivation of the crossing formula \eqref{eq:Cardy}.  One can therefore declare with confidence that the percolation (boundary) \cft{}, whatever it may be, must include the indecomposable vacuum module $\VirQuot{1,1}$ and the indecomposable module $\VirQuot{1,2}$ in appropriate boundary sectors.  It remains to explore the consequences of this conclusion.  As usual, one can try to generate new field content through fusion.  It is natural to expect that the identity field $I$ will act as the fusion identity ($I \fuse I = I$ and $I \fuse \phi_{1,2} = \phi_{1,2}$) and this is indeed the case.  One also expects that the vanishing of the grade $2$ singular descendant of $\phi_{1,2}$ will imply that
\begin{equation}
\phi_{1,2} \fuse \phi_{1,2} = I + \phi_{1,3},
\end{equation}
where $\phi_{1,3}$ is a Virasoro primary field of conformal dimension $h_{1,3} = \frac{1}{3}$.  This also turns out to be true.  However, the natural sequel to this computation,
\begin{equation}
\phi_{1,2} \fuse \phi_{1,3} = \phi_{1,2} + \phi_{1,4},
\end{equation}
where $\phi_{1,4}$ is primary of dimension $h_{1,4} = 1$, is \emph{false} as we shall see.

\subsection{The Nahm-Gaberdiel-Kausch Fusion Algorithm} \label{sec:NGK}

In standard \cft{}, where the modules are completely reducible, it is permissible to regard fusion as an operation on primary fields, remembering that the fusion rules in fact also apply to the entire family of fields descended from the respective primaries.  However, we have already surmised that there are reducible, but indecomposable, modules in the percolation spectrum.  Therefore, one needs to be much more precise about fusion and regard it not as an operation on primaries, but rather as an operation on the modules themselves.  We also need to be more careful about how fusion rules are computed.  The usual method of examining the effect of setting \svs{} to zero on three-point functions might not be practical if we do not know what type of fields to insert in the three-point functions (as we shall see, primary fields do not suffice in general).

The standard method of computing fusion rules when reducible, but indecomposable, modules are involved is known as the \emph{Nahm-Gaberdiel-Kausch} algorithm.  This was originally introduced by Nahm \cite{NahQua94} in a limited setting and was extended (and applied to indecomposable Virasoro modules at $c=-2$) by Gaberdiel and Kausch \cite{GabInd96}.  The key insight behind this algorithm is the realisation that one can concretely realise the fusion product of two modules $M$ and $N$ as a quotient of the vector space tensor product $M \otimes_{\CC} N$.  To demonstrate this, one needs to know how the action of the symmetry algebra (here, the Virasoro algebra) on $M \fuse N$ is derived from the actions on $M$ and on $N$.  This takes the form of coproduct formulae \cite{GabFus94}:
\begin{subequations} \label{eq:NGK}
\begin{align}
\func{\Delta}{L_n} &= \sum_{m=-1}^n \binom{n+1}{m+1} L_m \otimes \id + \id \otimes L_n & &\text{($n \geqslant -1$),} \label{eq:NGK1} \\
\func{\Delta}{L_{-n}} &= \sum_{m=-1}^{\infty} \brac{-1}^{m+1} \binom{n+m-1}{m+1} L_m \otimes \id + \id \otimes L_{-n} & &\text{($n \geqslant 2$),} \label{eq:NGK2} \\
L_{-n} \otimes \id &= \sum_{m=n}^{\infty} \binom{m-2}{m-n} \func{\Delta}{L_{-m}} + \brac{-1}^n \sum_{m=-1}^{\infty} \binom{n+m-1}{m+1} \id \otimes L_m & &\text{($n \geqslant 2$).} \label{eq:NGK3}
\end{align}
\end{subequations}
Actually, one derives two distinct coproducts which should coincide --- \eqref{eq:NGK3} is then deduced by imposing this equality.  Of course, there are generalisations of these formulae for other symmetry algebras \cite{GabFus94b}.

Practically, one does not compute explicitly with the entire fusion module $M \fuse N$.  Rather, one restricts attention to a subspace by setting all states of sufficiently high grade to $0$.  More precisely, if $g$ is the cutoff grade, then any state which can be written as a linear combination of states of the form $L_{-n_1} \cdots L_{-n_k} \ket{v}$, with $n_1 + \cdots n_k > g$, is set to $0$.  We will denote the result of this grade $g$ truncation of a Virasoro module $N$ by $N^{\brac{g}}$.  This truncation not only replaces the infinite-dimensional fusion product by a finite-dimensional subspace, thereby facilitating explicit computation, but it also renders the first sum in \eqref{eq:NGK3} finite (the other sums in \eqref{eq:NGK} are already effectively finite if we assume that the conformal dimensions of the states of $M$ and $N$ are bounded below).  The point is that this truncation is compatible with fusion computations because \eqref{eq:NGK} may be used to prove that $\brac{M \times N}^{\brac{g}}$ can be realised as a quotient of $M' \otimes_{\CC} N^{\brac{g}}$ \cite{GabInd96}.  Here, $M'$ denotes the \emph{special subspace}, a truncation of $M$ in which any state of the form $L_{-n_1} \cdots L_{-n_k} \ket{v}$, with $\max \set{n_1 , \ldots , n_k} > 1$, is set to $0$.  Finally, the quotient of $M' \otimes_{\CC} N^{\brac{g}}$ which realises the truncated fusion product may be identified by determining those elements of the tensor space, the so-called \emph{spurious states}, that we are forced to set to $0$ as a consequence of setting \svs{} to $0$ when forming $M$ and $N$.

It is always best to illustrate an algorithm with examples.  Let us consider the fusion of the percolation ($c=0$) module $\VirQuot{1,2}$ of \eqref{ses:Q12} with itself, setting the cutoff grade to $0$.  Then, $\VirQuot{1,2}'$ is spanned by $\ket{v}$ (the \hws{} of $\VirQuot{1,2}$) and $L_{-1} \ket{v}$, because $L_{-1}^2 \ket{v} = \frac{2}{3} L_{-2} \ket{v}$ is set to $0$, and $\VirQuot{1,2}^{\brac{0}}$ is spanned by $\ket{v}$.  There are no spurious states to find, so $\brac{\VirQuot{1,2} \fuse \VirQuot{1,2}}^{\brac{0}}$ is two-dimensional.  Applying \eqref{eq:NGK1} with $n=0$, we obtain
\begin{subequations} \label{eq:Q12xQ12g0}
\begin{align}
\func{\Delta}{L_0} \brac{\ket{v} \otimes \ket{v}} &= L_{-1} \ket{v} \otimes \ket{v} + L_0 \ket{v} \otimes \ket{v} + \ket{v} \otimes L_0 \ket{v} = L_{-1} \ket{v} \otimes \ket{v}, \\
\func{\Delta}{L_0} \brac{L_{-1} \ket{v} \otimes \ket{v}} &= L_{-1}^2 \ket{v} \otimes \ket{v} + L_0 L_{-1} \ket{v} \otimes \ket{v} + L_{-1} \ket{v} \otimes L_0 \ket{v} \notag \\
&= L_{-1} \ket{v} \otimes \ket{v} + \tfrac{2}{3} L_{-2} \ket{v} \otimes \ket{v} = L_{-1} \ket{v} \otimes \ket{v} + \tfrac{2}{3} \ket{v} \otimes L_{-1} \ket{v} \notag \\
&= \tfrac{1}{3} L_{-1} \ket{v} \otimes \ket{v}.
\end{align}
\end{subequations}
In the course of this calculation, we have combined $\func{\Delta}{L_{-1}} = \func{\Delta}{L_{-2}} = 0$ with \eqref{eq:NGK1} and \eqref{eq:NGK3} to obtain
\begin{equation}
L_{-2} \ket{v} \otimes \ket{v} = \ket{v} \otimes L_{-1} \ket{v} = -L_{-1} \ket{v} \otimes \ket{v}.
\end{equation}
It follows from \eqref{eq:Q12xQ12g0} that $\func{\Delta}{L_0}$ is diagonalisable with eigenvalues $h_{1,1} = 0$ and $h_{1,3} = \frac{1}{3}$.  From this, we deduce that the fusion product $\VirQuot{1,2} \fuse \VirQuot{1,2}$ decomposes as the direct sum of two \hwms{} whose \hwss{} have conformal dimensions $0$ and $\frac{1}{3}$, respectively.

To identify the \hwms{} appearing in this decomposition unambiguously, we need to compute to higher cutoff grades.  At grade $1$, $\VirQuot{1,2}' \otimes_{\CC} \VirQuot{1,2}^{\brac{1}}$ is four-dimensional, spanned by $\ket{v} \otimes \ket{v}$, $L_{-1} \ket{v} \otimes \ket{v}$, $\ket{v} \otimes L_{-1} \ket{v}$ and $L_{-1} \ket{v} \otimes L_{-1} \ket{v}$, and one uncovers a spurious state as follows:
\begin{align}
0 &= \func{\Delta}{L_{-1}^2} \brac{\ket{v} \otimes \ket{v}} = L_{-1}^2 \ket{v} \otimes \ket{v} + 2 L_{-1} \ket{v} \otimes L_{-1} \ket{v} + \ket{v} \otimes L_{-1}^2 \ket{v} \notag \\
&= \tfrac{2}{3} L_{-2} \ket{v} \otimes \ket{v} + 2 L_{-1} \ket{v} \otimes L_{-1} \ket{v} + \tfrac{2}{3} \ket{v} \otimes L_{-2} \ket{v} \notag \\
&= \tfrac{2}{3} \ket{v} \otimes L_{-1} \ket{v} + 2 L_{-1} \ket{v} \otimes L_{-1} \ket{v} - \tfrac{2}{3} L_{-1} \ket{v} \otimes \ket{v}.
\end{align}
This time, we have used $\func{\Delta}{L_{-1}^2} = \func{\Delta}{L_{-2}} = 0$, \eqref{eq:NGK1} and \eqref{eq:NGK3} to obtain the relations
\begin{equation}
L_{-2} \ket{v} \otimes \ket{v} = \ket{v} \otimes L_{-1} \ket{v} \qquad \text{and} \qquad \ket{v} \otimes L_{-2} v = -L_{-1} \ket{v} \otimes \ket{v}.
\end{equation}
There are no other spurious states, so the truncated fusion product is three-dimensional.  Computing $\func{\Delta}{L_0}$ as before, we find that it is diagonalisable with eigenvalues $0$, $\frac{1}{3}$ and $\frac{4}{3}$.  This refines the grade $0$ conclusion in that we now know that the \hwm{} of conformal dimension $0$ has its \sv{} at grade $1$ set to $0$, a fact which may be confirmed by checking that $\func{\Delta}{L_{-1}}$ annihilates the eigenstate with eigenvalue $0$.  This \hwm{} is therefore either $\VirIrr{0}$ or $\VirQuot{1,1}$.

To decide which, we compute to grade $2$, finding no spurious states in the six-dimensional truncated product $\VirQuot{1,2}' \otimes_{\CC} \VirQuot{1,2}^{\brac{2}}$.  Calculating as before gives $\func{\Delta}{L_0}$ as diagonalisable with eigenvalues $0$, $2$, $\frac{1}{3}$, $\frac{4}{3}$, $\frac{7}{3}$ and $\frac{7}{3}$.  The grade $2$ state may be checked to be obtained by acting with $\func{\Delta}{L_{-2}}$ on the eigenvalue $0$ state, thereby identifying one of the direct summands of the fusion product as $\VirQuot{1,1}$.  Identifying the other summand requires computing to grade $3$.  This time, there is a single spurious state and $\func{\Delta}{L_0}$ is diagonalisable with eigenvalues $0$, $2$, $3$, $\frac{1}{3}$, $\frac{4}{3}$, $\frac{7}{3}$, $\frac{7}{3}$, $\frac{10}{3}$ and $\frac{10}{3}$.  We see that the grade $3$ singular descendant of the eigenvalue $\frac{1}{3}$ state has been set to $0$, so the remaining summand is the irreducible \hwm{} $\VirQuot{1,3} = \VirIrr{1/3}$.

To summarise, we have used the Nahm-Gaberdiel-Kausch algorithm to compute the fusion rule
\begin{equation}
\VirQuot{1,2} \fuse \VirQuot{1,2} = \VirQuot{1,1} \oplus \VirIrr{1/3}.
\end{equation}
The computations beyond grade $1$ quickly become tedious and are best done using an computer (we implemented the algorithm in \textsc{Maple}).  Nevertheless, this example shows that fusion products can be identified from a finite amount of computation (although this would not be true if the result involved modules with infinitely many composition factors, Verma modules for instance).  On the other hand, the Virasoro mode $L_0$ acts diagonalisably on this fusion product, so the result is not particularly interesting so far as \lcft{} is concerned.

A more interesting computation is the fusion of $\VirQuot{1,2}$ with the newly discovered percolation module $\VirIrr{1,3}$.  At grade $0$, $\func{\Delta}{L_0}$ is diagonalisable with eigenvalues $0$ and $1$.  Because these eigenvalues differ by an integer, we cannot conclude that the result decomposes as a direct sum of two \hwms{}.  Our wariness in this matter is justified by the grade $1$ computation in which a new feature is uncovered:  $\func{\Delta}{L_0}$ is found to have eigenvalues $0$, $1$, $1$ and $2$, but is \emph{not diagonalisable} --- the eigenspace of eigenvalue $1$ corresponds to a Jordan block of rank $2$.  This is the sign of logarithmic structure that we have been looking for.

To clarify this structure, note that the eigenvalue $0$ state $\ket{\xi}$ is necessarily a \hws{}.  We can check that $\func{\Delta}{L_{-1}} \ket{\xi}$ is non-zero and is (necessarily) the $\func{\Delta}{L_0}$-eigenstate of the Jordan block.  Its Jordan partner $\ket{\theta}$ is then uniquely determined by $\tbrac{\func{\Delta}{L_0} - \id} \ket{\theta} = \func{\Delta}{L_{-1}} \ket{\xi}$, up to adding multiples of $\func{\Delta}{L_{-1}} \ket{\xi}$.  Finally, the eigenvalue $2$ state is realised by $\func{\Delta}{L_{-1}} \ket{\theta}$.  All this amounts to defining (and normalising) the states appearing at grade $1$.  What remains to be determined is the action of $L_1$:%
\footnote{There is a subtlety to this computation worth mentioning.  The action of $\func{\Delta}{L_n}$, $n>0$, at grade $g$ should be understood to map into the grade $g-n$ fusion space.  However, the latter is always a subspace (quotient) of the former.  We may therefore compute $\func{\Delta}{L_1} \ket{\theta}$ in the grade $1$ fusion product and project onto the grade $0$ subspace by setting to zero all terms with $\func{\Delta}{L_0}$-eigenvalue $1$.}%
\begin{equation} \label{eq:beta14}
\func{\Delta}{L_1} \ket{\theta} = -\tfrac{1}{2} \ket{\xi}.
\end{equation}
Because $\func{\Delta}{L_{-1}} \ket{\xi}$ is a \sv{}, this equation holds for \emph{any} choice of Jordan partner state $\ket{\theta}$.

This grade $1$ fusion calculation shows that the product $\VirQuot{1,2} \fuse \VirIrr{1,3}$ is an indecomposable non-\hwm{} which we shall denote by $\VirStag{1,4}$.  The \hws{} $\ket{\xi}$ of dimension $0$ generates a \hwsm{} of $\VirStag{1,4}$ whose \sv{} of dimension $1$, $L_{-1} \ket{\xi}$, is non-vanishing.  Using the fusion algorithm at grade $2$, we find that the singular dimension $2$ descendant of $\ket{\xi}$ vanishes, thereby identifying this \hwsm{} as $\VirQuot{1,2}$.  In the quotient module $\VirStag{1,4} / \VirQuot{1,2}$, the equivalence class $\ket{\theta} + \VirQuot{1,2}$ is a \hws{} of dimension $1$.  Checking its singular descendants of dimensions $5$ and $7$ therefore requires fusing to grade $6$ and examining the $\VirQuot{1,2}$-quotient.%
\footnote{This requires finding two spurious states in a $46$-dimensional vector space.}  %
The results --- the first singular descendant is found to vanish whereas the second does not --- indicate that the corresponding \hwm{} is isomorphic to $\VirQuot{1,4} = \VirVer{1} / \VirVer{5}$.  This then establishes the exactness of the sequence
\begin{equation}
\dses{\VirQuot{1,2}}{}{\VirStag{1,4}}{}{\VirQuot{1,4}}.
\end{equation}
The Loewy diagram for the indecomposable $\VirStag{1,4}$ is given in \figref{fig:PercLoewy} (left).  The bottom composition factor $\VirIrr{1}$ (the socle) is generated by $\func{\Delta}{L_{-1}} \ket{\xi}$.  By taking appropriate quotients, the $\VirIrr{0}$ and the top $\VirIrr{1}$ may be similarly associated with (equivalence classes of) $\ket{\xi}$ and $\ket{\theta}$, respectively.  The $\VirIrr{7}$ corresponds to the non-vanishing singular descendant of $\ket{\theta} + \VirQuot{1,2}$.

\begin{figure}
\begin{center}
\begin{tikzpicture}[thick,>=latex,
	nom/.style={circle,draw=black!20,fill=black!20,inner sep=1pt}
	]
\node (top0) at (0,1.5) [] {$\VirIrr{1}$};
\node (left0) at (-1.5,0) [] {$\VirIrr{0}$};
\node (right0) at (1.5,0) [] {$\VirIrr{7}$};
\node (bot0) at (0,-1.5) [] {$\VirIrr{1}$};
\node (top1) at (6,1.5) [] {$\VirIrr{2}$};
\node (left1) at (4.5,0) [] {$\VirIrr{0}$};
\node (right1) at (7.5,0) [] {$\VirIrr{5}$};
\node (bot1) at (6,-1.5) [] {$\VirIrr{2}$};
\node at (0,0) [nom] {$\VirStag{1,4}$};
\node at (6,0) [nom] {$\VirStag{1,5}$};
\draw [->] (top0) -- (left0);
\draw [->] (top0) -- (right0);
\draw [->] (left0) -- (bot0);
\draw [->] (right0) -- (bot0);
\draw [->] (top1) -- (left1);
\draw [->] (top1) -- (right1);
\draw [->] (left1) -- (bot1);
\draw [->] (right1) -- (bot1);
\end{tikzpicture}
\caption{\label{fig:PercLoewy}Loewy diagrams illustrating the socle series (see \appref{app:Socle}) for the indecomposable Virasoro modules $\VirStag{1,4}$ and $\VirStag{1,5}$ constructed using the Nahm-Gaberdiel-Kausch fusion algorithm.} 
\end{center}
\end{figure}

This demonstrates that the percolation \cft{} necessarily contains indecomposable modules (in some boundary sectors) on which the Virasoro zero-mode acts non-diagonalisably.  As we saw in \secref{sec:LogCorr}, this leads to logarithmic singularities in correlation functions.  Before discussing this in more detail, let us pause to explore further what fusion can tell us about the spectrum of modules.  The Nahm-Gaberdiel-Kausch algorithm may be applied to the fusion of $\VirIrr{1/3}$ with itself and computing to grade $5$ establishes that the result is the direct sum of $\VirIrr{1/3}$ and a new indecomposable $\VirStag{1,5}$ whose structure is described by the exact sequence
\begin{equation}
\dses{\VirQuot{1,1}}{}{\VirStag{1,5}}{}{\VirQuot{1,5}}.
\end{equation}
Its Loewy diagram is illustrated in \figref{fig:PercLoewy} (right).  The \hwsm{} is the (indecomposable) vacuum module containing the vacuum $\ket{0}$ and $\ket{T} = L_{-2} \ket{0}$.  The latter state (corresponding to the energy-momentum tensor) has a Jordan partner, unique up to adding multiples of $\ket{T}$, which we will denote by $\ket{t}$.  If we normalise this partner by $\tbrac{\func{\Delta}{L_0} - 2 \id} \ket{t} = \ket{T}$, then explicit computation gives
\begin{equation} \label{eq:beta15}
\func{\Delta}{L_2} \ket{t} = -\tfrac{5}{8} \ket{0}.
\end{equation}
Again, this equation is independent of the choice of $\ket{t}$.

It is possible to identify the result of many more fusion rules including \cite{EbeVir06,RidPer07}
\begin{equation}
\begin{aligned}
\VirQuot{1,2} \fuse \VirStag{1,4} &= 2 \: \VirIrr{1/3} \oplus \VirStag{1,5}, \\
\VirQuot{1,2} \fuse \VirStag{1,5} &= \VirStag{1,4} \oplus \VirIrr{10/3}, \\
\VirIrr{1/3} \fuse \VirStag{1,4} &= 2 \: \VirStag{1,4} \oplus \VirIrr{10/3}, \\
\VirIrr{1/3} \fuse \VirStag{1,5} &= 2 \: \VirIrr{1/3} \oplus \VirStag{1,7},
\end{aligned}
\qquad
\begin{aligned}
\VirStag{1,4} \fuse \VirStag{1,4} &= 4 \: \VirIrr{1/3} \oplus 2 \: \VirStag{1,5} \oplus \VirStag{1,7}, \\
\VirStag{1,4} \fuse \VirStag{1,5} &= 2 \: \VirStag{1,4} \oplus 2 \: \VirIrr{10/3} \oplus \VirStag{1,8}, \\
\VirStag{1,5} \fuse \VirStag{1,5} &= \VirIrr{1/3} \oplus 2 \: \VirStag{1,5} \oplus \VirStag{1,7} \oplus \VirIrr{28/3}.
\end{aligned}
\end{equation}
Here, the modules $\VirStag{1,7}$ and $\VirStag{1,8}$ are new indecomposables with exact sequences
\begin{equation} \label{ses:S17S18}
\dses{\VirQuot{1,5}}{}{\VirStag{1,7}}{}{\VirQuot{1,7}}, \qquad \dses{\VirQuot{1,4}}{}{\VirStag{1,8}}{}{\VirQuot{1,8}}.
\end{equation}
We remark that fusing $\VirStag{1,4}$ or $\VirStag{1,5}$ with another module requires knowing the explicit form of the (generalised) \svs{} which have been set to $0$.  Fusion computations with the new modules generated here have met with only partial success, chiefly because the computational intensity of the algorithm increases very quickly as the grade required to completely identify the fusion product grows.  Nevertheless, all such computations are consistent with the following conjecture for the fusion rules, presented algorithmically for simplicity:%
\footnote{One can convert this into a general formula, see \cite{RasFus07} for example.  However, the result does not seem particularly illuminating to us.}%
\begin{enumerate}
\item The spectrum includes irreducibles $\VirQuot{1,3k} = \VirIrr{\brac{3k-1} \brac{3k-2} / 6}$ and indecomposables $\VirStag{1,3k-1}$ and $\VirStag{1,3k-2}$, for $k \in \ZZ_+$ (we let $\VirStag{1,1} = \VirQuot{1,1}$ and $\VirStag{1,2} = \VirQuot{1,2}$).  To fuse any of these modules, first break any indecomposables into their constituent \hwms{} ($\VirQuot{1,-2} = \VirQuot{1,-1} \equiv \set{0}$):
\begin{equation} \label{eq:NotQuiteGroth}
\VirStag{1,3k-1} \lra \VirQuot{1,3k-1} \oplus \VirQuot{1,3k-5}, \quad
\VirStag{1,3k-2} \lra \VirQuot{1,3k-2} \oplus \VirQuot{1,3k-4}.
\end{equation}
\item Compute the ``fusion'' using distributivity and
\begin{equation}
\VirQuot{1,s} \fakefuse \VirQuot{1,s'} = \VirQuot{1,\abs{s-s'}+1} \oplus \VirQuot{1,\abs{s-s'}+3} \oplus \cdots \oplus \VirQuot{1,s+s'-3} \oplus \VirQuot{1,s+s'-1}.
\end{equation}
(We have enclosed the fusion operation in quotes to emphasise that this is not a true fusion rule).
\item In the result, reverse \eqref{eq:NotQuiteGroth} by replacing the combinations $\VirQuot{1,3k-1} \oplus \VirQuot{1,3k-5}$ and $\VirQuot{1,3k-2} \oplus \VirQuot{1,3k-4}$ by $\VirStag{1,3k-1}$ and $\VirStag{1,3k-2}$, respectively.  There is always a unique way of doing this.
\end{enumerate}
For example, if we wished to fuse $\VirStag{1,5}$ with $\VirIrr{10/3} = \VirQuot{1,6}$, we would instead compute that
\begin{equation}
\brac{\VirQuot{1,1} \oplus \VirQuot{1,5}} \fakefuse \VirQuot{1,6} = \VirQuot{1,6} \oplus \brac{\VirQuot{1,2} \oplus \VirQuot{1,4} \oplus \VirQuot{1,6} \oplus \VirQuot{1,8} \oplus \VirQuot{1,10}}
\end{equation}
from which we read off that
\begin{equation}
\VirStag{1,5} \fuse \VirIrr{10/3} = \VirStag{1,4} \oplus 2 \: \VirIrr{10/3} \oplus \VirStag{1,10}.
\end{equation}

\subsection{Logarithmic Correlators Again} \label{sec:PercLogCorr}

Consider now the structure of the indecomposable module $\VirStag{1,5}$.  It has a submodule generated by the vacuum $\ket{0}$, while $\VirStag{1,5}$ is itself generated by the state $\ket{t}$ satisfying
\begin{equation}
L_0 \ket{t} = 2 \ket{t} + \ket{T}, \qquad L_1 \ket{t} = 0, \qquad L_2 \ket{t} = -\tfrac{5}{8} \ket{0}, \qquad L_n \ket{t} = 0 \quad \text{for \( n>2 \).}
\end{equation}
We recall that $\ket{T} = L_{-2} \ket{0}$.  The \ope{} of the corresponding fields $\func{T}{z}$ and $\func{t}{w}$ is therefore slightly different to that considered in \secref{sec:LogCorr}:
\begin{equation}
\func{T}{z} \func{t}{w} \sim -\frac{5}{8} \frac{1}{\brac{z-w}^4} + \frac{2 \func{t}{w} + \func{T}{w}}{\brac{z-w}^2} + \frac{\func{\pd t}{w}}{z-w}.
\end{equation}
Normalising so that $\braket{0}{0} = 1$, we note that $\corrfn{\func{T}{z} \func{T}{w}} = 0$ because $\ket{T}$ is singular.  The global invariance of the vacuum then leads to the usual three \pdes{} for $\corrfn{\func{T}{z} \func{t}{w}}$ whose solution is
\begin{equation} \label{2pt:Tt}
\corrfn{\func{T}{z} \func{t}{w}} = \frac{B}{\brac{z-w}^4}, \qquad B = \bracket{0}{L_2}{t} = -\tfrac{5}{8} \braket{0}{0} = -\tfrac{5}{8}.
\end{equation}
As $\func{T}{z}$ and $\func{t}{w}$ can be shown to be mutually bosonic \cite[App.~B]{RidLog07}, we also obtain
\begin{equation} \label{2pt:tt}
\corrfn{\func{t}{z} \func{t}{w}} = \frac{A + \tfrac{5}{4} \func{\log}{z-w}}{\brac{z-w}^4},
\end{equation}
confirming the existence of logarithmic singularities in percolation correlators.  We emphasise that, unlike $B$ in \eqref{2pt:Tt}, the value of the constant $A$ depends upon the precise choice we make for $\ket{t}$.

As a second example, we consider the other module with non-diagonalisable $L_0$-action that we have studied:  $\VirStag{1,4}$.  This module is generated by a state $\ket{\theta}$ satisfying
\begin{equation}
L_0 \ket{\theta} = \ket{\theta} + L_{-1} \ket{\xi}, \qquad L_1 \ket{\theta} = -\tfrac{1}{2} \ket{\xi}, \qquad L_n \ket{\theta} = 0 \quad \text{for \(n>1\).}
\end{equation}
Here, $\ket{\xi}$ is a dimension $0$ \hws{} generating a submodule isomorphic to $\VirQuot{1,2}$.  The field $\func{\theta}{w}$ corresponding to $\ket{\theta}$ therefore has \ope{}
\begin{equation}
\func{T}{z} \func{\theta}{w} \sim -\frac{1}{2} \frac{\func{\xi}{w}}{\brac{z-w}^3} + \frac{\func{\theta}{w} + \func{\pd \xi}{w}}{\brac{z-w}^2} + \frac{\func{\pd \theta}{w}}{z-w}.
\end{equation}
Again, we take $\braket{\xi}{\xi} = 1$ and arrive at
\begin{equation}
\corrfn{\func{\pd x}{z} \func{\theta}{w}} = \frac{B}{\brac{z-w}^2}.
\end{equation}

The determination of $B$ is, however, subtle \cite{RidLog07}.  Na\"{\i}vely, we might expect that $\ket{\pd \xi} = L_{-1} \ket{\xi}$ implies that $B = \braket{\pd \xi}{\theta} = \bracket{\xi}{L_1}{\theta} = -\frac{1}{2} \braket{\xi}{\xi} = -\frac{1}{2}$, but this turns out to be incorrect.  To see why, recall that the standard definition of the outgoing state corresponding to a \emph{primary} field $\func{\phi}{z} = \sum_n \phi_n z^{-n-h}$ is
\begin{equation}
\bra{\phi} = \lim_{z \to \infty} z^{2h} \bra{0} \func{\phi}{z} \qquad \iff \qquad \phi_n^{\dag} = \phi_{-n}.
\end{equation}
We certainly want this definition to apply to $\func{\xi}{z}$, a dimension $0$ primary field.  But then,
\begin{equation}
B = \bracket{0}{\brac{\pd \xi}_1}{\theta} = -\bracket{0}{\xi_1}{\theta} = -\bracket{0}{\xi_0 L_1}{\theta} = -\bracket{\xi}{L_1}{\theta} = \tfrac{1}{2} \braket{\xi}{\xi} = \tfrac{1}{2}.
\end{equation}
This is the correct conclusion (see \secref{sec:StagLogCorr} for a more general discussion).  In any case, once $B$ is correctly determined, the computation of $\corrfn{\func{\theta}{z} \func{\theta}{w}}$ proceeds as before and one obtains
\begin{equation}
\corrfn{\func{\theta}{z} \func{\theta}{w}} = \frac{A - \func{\log}{z-w}}{\brac{z-w}^2}.
\end{equation}
Once again, $A$ depends upon the precise choice of $\ket{\theta}$, whereas $B$ does not.

\subsection{Further Developments} \label{sec:PercFuture}

We have seen that the boundary \cft{} describing critical percolation is logarithmic and that the spectrum includes the modules $\VirQuot{1,1}$, $\VirQuot{1,2}$, $\VirIrr{\brac{3k-1} \brac{3k-2} / 6}$, $\VirStag{3k+1}$ and $\VirStag{3k+2}$, for $k \in \ZZ_+$.  A natural question to ask now is whether there is more to the spectrum.  One way to look for additional modules is to consider other measurable quantities in percolation.  The most famous generalisation of Cardy's crossing probability is that which asks for the probability that a random configuration of edges (with $p = p_c$) contains a connected cluster of open edges connecting all four sides of the rectangular subdomain.  In \cite{WatCro96}, Watts notes that the four-point functions that solve the second order differential equations that lead to Cardy's formula \eqref{eq:Cardy} do not satisfy the properties one expects for this more general crossing probability.  However, a field of dimension $0$ has, at $c=0$, a singular descendant of grade $5$ and the corresponding fifth order differential equation not only has a unique solution satisfying Watts' properties, but it also beautifully interpolates the numerical data known \cite{LanUni92} for this crossing probability.  Watts' proposed solution has since been rigorously proven by Dub\'{e}dat \cite{DubExc06}.

Given what we have learned in \secref{sec:PercIndec}, the natural interpretation to propose \cite{RidPer08} is that the field appearing in Watts' four-point function corresponds to the \hws{} of the module $\VirVer{0} / \VirVer{5}$.%
\footnote{The discussion makes it clear that the singular vector at grade $5$ must be set to zero, but it is not \emph{a priori} clear why its grade $7$ partner should not be set to zero.  It is straight-forward, but tedious, to check that the seventh order differential equation that would result from setting this partner to zero does not admit Watts' crossing formula as a solution.}  %
It is rather interesting to note that this quotient module does not have the form $\VirQuot{r,s}$ for any positive integers $r$ and $s$.  Instead, one may identify it using \emph{fractional} Kac labels:  $\VirVer{0} / \VirVer{5} = \VirQuot{2,5/2} = \VirQuot{5/3,3}$.  Perhaps surprisingly, denoting this module by $\VirQuot{2,5/2}$ is convenient for discussing the modules one subsequently generates by fusing with $\VirQuot{1,2}$.  For example, one finds \cite{RidPer08} that
\begin{equation}
\VirQuot{1,2} \fuse \VirQuot{2,5/2} = \VirQuot{2,3/2} \oplus \VirQuot{2,7/2}; \qquad \VirQuot{2,3/2} = \VirVer{1/3} / \VirVer{10/3} = \VirIrr{1/3}, \quad \VirQuot{2,7/2} = \VirVer{0} / \VirVer{7}.
\end{equation}
Unfortunately, fusing $\VirQuot{2,5/2}$ with itself leads to indecomposable modules which have significantly more complicated structures and are rather poorly characterised (see \cite{RidPer08} for further details).  We remark that more general percolation crossing probabilities are considered in \cite{SimPer07} from a different perspective.

From a more abstract point of view, we have seen that Cardy's crossing probability leads to indecomposable modules which may be associated with the first row of the (extended) Kac table (\tabref{tab:Kac}), so one is led to ask whether there is a complementary observable quantity that can be associated to the first column.  In percolation, this is not so clear.  However, the statistical model known as \emph{dilute polymers} (or the \emph{self-avoiding walk}) also has a continuum limit that is (believed to be) described by a $c=0$ \cft{}.  An old proposal of Gurarie and Ludwig \cite{GurCon02} associates this latter \cft{} with modules from the first column of the extended Kac table.%
\footnote{Actually, the proposal of \cite{GurCon02} was that percolation should be associated to the first column and dilute polymers to the first row, though this statement is not repeated in the sequel \cite{GurCon04}.  This was corrected in \cite{RidPer07} for the boundary theory relevant here.}  %
We will not detail this polymer theory or its observables here, instead mentioning only that a field corresponding to the module $\VirQuot{2,1} = \VirIrr{5/8}$ is relevant and that fusing this module with itself leads to an indecomposable module which we denote by $\VirStag{3,1}$:
\begin{equation}
\VirIrr{5/8} \fuse \VirIrr{5/8} = \VirStag{3,1}, \qquad \dses{\VirQuot{1,1}}{}{\VirStag{3,1}}{}{\VirQuot{3,1}}.
\end{equation}
The Loewy diagram of $\VirStag{3,1}$ is identical to that of $\VirStag{1,5}$ (illustrated in \figref{fig:PercLoewy}) except that the composition factor $\VirIrr{5}$ is replaced by $\VirIrr{7}$.  If we regard the submodule $\VirQuot{1,1}$ as being generated by the vacuum, then $\ket{T}$ has a Jordan partner $\ket{t'} \in \VirStag{3,1}$ which can be distinguished from $\ket{t} \in \VirStag{1,5}$ by
\begin{equation} \label{eq:AnomalyNumbers}
L_2 \ket{t} = -\tfrac{5}{8} \ket{0}, \qquad L_2 \ket{t'} = \tfrac{5}{6} \ket{0}.
\end{equation}
We remark that these coefficients $b_{1,5} = -\frac{5}{8}$ and $b_{3,1} = \frac{5}{6}$, called \emph{anomaly numbers} in \cite{GurCTh99}, have recently been measured directly in the respective lattice theories (through numerical simulation) \cite{DubCon10}.  This confirms experimentally that percolation corresponds to first row modules and dilute polymers to first column modules, at least in their formulation as boundary \cfts{}.

One thing worth mentioning here is the observation (see \cite[App.~A]{GurCon04}) that the otherwise reasonable-looking two-point function $\corrfn{\func{t}{z} \func{t'}{w}}$ is inconsistent with conformal invariance.  More precisely, the three inhomogeneous \pdes{} for this correlator, which are derived from the global conformal invariance of the vacuum, admit no simultaneous solution.  This appears \cite{GurCon04,RidPer07} to rule out the possibility that both $\VirStag{1,5}$ and $\VirStag{3,1}$ can belong to the spectrum.  However, a more careful conclusion \cite{RidPer08} is that the presence of one of these indecomposables in a boundary sector labelled by boundary conditions $B_1$ and $B_2$ precludes the presence of the other in any boundary sector with label $B_1$ or $B_2$.  This does not prove that $\VirStag{1,5}$ and $\VirStag{3,1}$ can coexist in a boundary \cft{}, but it does provide a loophole whereby inconsistent two-point functions may be avoided.  Such a loophole appears to be at work in the results of \cite{RasFus07} in which boundary conditions corresponding to all the extended Kac labels $\brac{r,s}$ are constructed for a loop model variant of critical percolation.%
\footnote{Interestingly, the so-called \emph{Kac modules} $\mathcal{K}_{r,s}$ which appear here generalise the $\VirQuot{r,1}$ and $\VirQuot{1,s}$ as quotients of Feigin-Fuchs modules, rather than quotients of Verma modules, see \cite{RasCla11,RasCla13}.}  %
An extremely important open problem, in our opinion, is to determine if the conformal invariance of the vacuum leads to further, more stringent, constraints on the boundary (and bulk) spectra of \lcfts{}.

\section{Symplectic Fermions and the Triplet Model} \label{sec:Trip}

The triplet theories $\TripAlg{q,p}$, with $p,q \in \ZZ_+$, $p>q$ and $\gcd \set{p,q} = 1$, form a family of logarithmic extensions of the minimal Virasoro models. When $q=1$, the minimal model is empty, but the logarithmic theory is non-trivial (these are the original triplet models of \cite{Kausch:1991}). We will concentrate on the simplest of these models, that with $q=1$ and $p=2$,%
\footnote{The model with $p=q=1$ is just $\AKMA{sl}{2}$ at level $1$.} %
which has a free field realisation known as symplectic fermions. We start with this free theory before turning to the triplet algebra $\TripAlg{1,2}$ and then to its subalgebra, the singlet algebra $\SingAlg{1,2}$. The theories associated to these algebras are extremely closely related as we illustrate in Figure \ref{fig:rels}.  We then detail the modular transformations of the singlet characters and compute Grothendieck fusion rules for $\SingAlg{1,2}$ using a continuum Verlinde formula, before lifting the results to the triplet model and symplectic fermions.

\begin{figure}
\begin{center}
\begin{tikzpicture}[thick,>=latex,
	nom/.style={circle,draw=black!20,fill=black!20,inner sep=1pt}
	]
\node (left1) at (1,0) [] {Symplectic Fermions};
\node (right1) at (9,0) [] {Singlet $\SingAlg{1,2}$};
\node (bot1) at (5,-4.5) [] {Triplet $\TripAlg{1,2}$};
\draw [->][bend left=8] (left1) to node [above] {U(1)-orbifold} (right1);
\draw [->][bend left=8] (right1) to node [below] {Free simple current} (left1);
\draw [->][bend left=10] (left1) to node [right,xshift=-10pt] {\rotatebox{310}{$\ZZ_2$-orbifold}} (bot1);
\draw [->][bend left=10] (bot1) to node [left,xshift=30pt] {\rotatebox{310}{Order two simple current}} (left1);
\draw [->][bend left=10] (bot1) to node [left,xshift=17pt] {\rotatebox{50}{U(1)-orbifold}} (right1);
\draw [->][bend left=8] (right1) to node [right,xshift=-20pt] {\rotatebox{50}{Free simple current}} (bot1);
\end{tikzpicture}
\caption{\label{fig:rels}The symplectic fermion algebra, the triplet algebra $\TripAlg{1,2}$ and the singlet algebra $\SingAlg{1,2}$ are all related by simple current extensions and orbifolds.}
\end{center}
\end{figure}

\subsection{Symplectic Fermions} \label{sec:SF}

Symplectic fermions were first introduced by Kausch \cite{KauCur95} in order to study the fermionic $bc$ ghost system of central charge $c=-2$. They also describe the \WZW{} model on the abelian supergroup $\SLSG{PSL}{1}{1}$ and should be regarded as the simplest fermionic analogue of the free boson. The action involves two non-chiral fermionic fields $\tfunc{\nonch{\theta}^\pm}{z,\ahol{z}}$:
\begin{equation} \label{eq:SFAction}
S \sqbrac{\func{\nonch{\theta}^\pm}{z,\ahol{z}}} = \frac{1}{4\pi} \int \sqbrac{\func{\pd \nonch{\theta}^+}{z,\ahol{z}} \func{\apd \nonch{\theta}^-}{z,\ahol{z}} - \func{\pd \nonch{\theta}^-}{z,\ahol{z}} \func{\apd \nonch{\theta}^+}{z,\ahol{z}}} \: \dd z \dd \ahol{z}.
\end{equation}
This action is invariant under shifts by holomorphic and anti-holomorphic fields,
and this implies, as with the free boson, the \ope{}
\begin{equation}
\func{\nonch{\theta}^+}{z, \ahol{z}} \func{\nonch{\theta}^-}{w, \ahol{w}} = A + \log \abs{z-w}^2 + \cdots,
\end{equation}
where $A$ is a constant of integration.

The equations of motion state that $\tfunc{J^\pm}{z}=\tfunc{\pd \nonch{\theta}^\pm}{z,\ahol{z}}$ and $\tfunc{\ahol{J}^\pm}{\ahol{z}}=\tfunc{\apd \nonch{\theta}^\pm}{z,\ahol{z}}$ are holomorphic and antiholomorphic, respectively, and we will take the former to generate the chiral algebra. These are the symplectic fermion currents and their \opes{} are
\begin{equation}
\func{J^+}{z}\func{J^-}{w}\sim\frac{1}{(z-w)^2}, \qquad \func{J^\pm}{z}\func{J^\pm}{w} \sim 0.
\end{equation}
Their modes then satisfy the anticommutation relations of the affine Lie superalgebra $\AKMSA{psl}{1}{1}$
(as with the case of the free boson, any non-zero level may be rescaled to $1$):
\begin{equation}
\acomm{J^+_m}{J^-_n} = m\delta_{m+n=0}, \qquad \acomm{J^\pm_m}{J^\pm_n} = 0.
\end{equation}
The Virasoro field is $\tfunc{T}{z}=\normord{\tfunc{J^-}{z}\tfunc{J^+}{z}}$ and its central charge is $c=-2$.


Let us turn to representations.  As usual, we start with \hwms{} and these can be quickly analysed by expanding the double integral
\[
\oint_0 \oint_w \func{J^+}{z} \func{J^-}{w} z^{m+1} w^n \brac{z-w}^{-1} \: \frac{\dd z}{2 \pi \ii} \frac{\dd w}{2 \pi \ii}
\]
in the usual fashion, so as to obtain the generalised commutation relation
\begin{equation} \label{eq:SFGCR}
\sum_{j=0}^{\infty} \sqbrac{J_{m-j}^+ J_{n+j}^- - J_{n-1-j}^- J_{m+1+j}^+} = \frac{1}{2} m \brac{m+1} \delta_{m+n=0} - L_{m+n}.
\end{equation}
The integer moding of the Virasoro algebra requires that $m+n \in \ZZ$, so that $m$ and $n$ must be either both integers or both half-integers.

We apply \eqref{eq:SFGCR}, with $m=n=0$, to a state $\ket{\phi}$ which is annihilated by positive modes, obtaining
\begin{equation}
L_0 \ket{\phi} = -J^+_0 J^-_0 \ket{\phi} \qquad \Rightarrow \qquad L_0^2 \ket{\phi} = 0.
\end{equation}It follows that $\ket{\phi}$ belongs to a Jordan block for $L_0$ with eigenvalue $0$ and rank at most $2$.  Repeating this, with $m = -\tfrac{1}{2}$ and $n = \tfrac{1}{2}$, gives $L_0 \ket{\phi} = -\tfrac{1}{8} \ket{\phi}$, hence the only half-integer moded \hws{} has conformal dimension $-\tfrac{1}{8}$.  It follows that we have only one module%
\footnote{Technically, there are two (graded) modules according as to the parity of its generating state.  We shall usually ignore this distinction.} %
in the half-integer moded sector, necessarily irreducible, that we shall denote by $\SFIrr{1/2}$, the label corresponding to the moding.  In the integer-moded sector, we have an irreducible $\SFIrr{0}$ (the vacuum module) as well as an indecomposable $\SFStag{0}$ generated by a dimension $0$ generalised eigenvector $\ket{\Omega}$ of $L_0$.%
\footnote{The existence of this indecomposable module follows from applying the induced module construction to the \uea{} of $\SLSA{psl}{1}{1}$, considered as a four-dimensional $\SLSA{psl}{1}{1}$-module.  Other indecomposables with integer moding may similarly be constructed \cite{RunBra13}.}  %
Its Loewy diagram is given in \figref{fig:SFTripLoewy}, where we remark that the four composition factors may be associated to the states $\ket{\Omega}$, $J^+_0 \ket{\Omega}$, $J^-_0 \ket{\Omega}$ and $J^-_0 J^+_0 \ket{\Omega} = \ket{0}$.  Finally, we note that if $J^-_0$ is regarded as a creation operator and $J^+_0$ as an annihilation operator, then $J^+_0 \ket{\Omega}$ generates the vacuum Verma module $\SFVer{0}$ (the Verma module for half-integer moding is already irreducible:  $\SFVer{1/2} = \SFIrr{1/2}$) and the indecomposable $\SFStag{0}$ is characterised by the exact sequence
\begin{equation} \label{ses:SFStag}
\dses{\SFVer{0}}{}{\SFStag{0}}{}{\SFVer{0}}.
\end{equation}

\begin{figure}
\begin{center}
\begin{tikzpicture}[thick,>=latex,
	nom/.style={circle,draw=black!20,fill=black!20,inner sep=1pt}
	]
\node (top1) at (5,1.5) [] {$\SFIrr{0}$};
\node (left1) at (3.5,0) [] {$\SFIrr{0}$};
\node (right1) at (6.5,0) [] {$\SFIrr{0}$};
\node (bot1) at (5,-1.5) [] {$\SFIrr{0}$};
\node at (5,0) [nom] {$\SFStag{0}$};
\draw [->] (top1) -- (left1);
\draw [->] (top1) -- (right1);
\draw [->] (left1) -- (bot1);
\draw [->] (right1) -- (bot1);
\end{tikzpicture}
\hspace{0.1\textwidth}
\begin{tikzpicture}[thick,>=latex,
	nom/.style={circle,draw=black!20,fill=black!20,inner sep=1pt}
	]
\node (top0) at (0,1.5) [] {$\TripIrr{0}$};
\node (left0) at (-1.5,0) [] {$\TripIrr{1}$};
\node (right0) at (1.5,0) [] {$\TripIrr{1}$};
\node (bot0) at (0,-1.5) [] {$\TripIrr{0}$};
\node (top1) at (4.5,1.5) [] {$\TripIrr{1}$};
\node (left1) at (3,0) [] {$\TripIrr{0}$};
\node (right1) at (6,0) [] {$\TripIrr{0}$};
\node (bot1) at (4.5,-1.5) [] {$\TripIrr{1}$};
\node at (0,0) [nom] {$\TripStag{0}$};
\node at (4.5,0) [nom] {$\TripStag{1}$};
\draw [->] (top0) -- (left0);
\draw [->] (top0) -- (right0);
\draw [->] (left0) -- (bot0);
\draw [->] (right0) -- (bot0);
\draw [->] (top1) -- (left1);
\draw [->] (top1) -- (right1);
\draw [->] (left1) -- (bot1);
\draw [->] (right1) -- (bot1);
\end{tikzpicture}
\caption{\label{fig:SFTripLoewy} The Loewy diagrams for the socle series of the indecomposable symplectic fermion module $\SFStag{0}$ (left) and the indecomposable $\TripAlg{1,2}$-modules $\TripStag{0}$ and $\TripStag{1}$ (right).  In each case, the non-diagonalisable action of $L_0$ maps the top factor (the head) onto the bottom factor (the socle).}
\end{center}
\end{figure}

The fusion ring generated by the irreducibles is particularly easy to work out using (a variant \cite{GabFus97} of) the Nahm-Gaberdiel-Kausch algorithm.  The vacuum module $\SFIrr{0}$ is the fusion identity and one finds that
\begin{equation} \label{FR:SF}
\SFIrr{1/2} \fuse \SFIrr{1/2} = \SFStag{0}, \qquad 
\SFIrr{1/2} \fuse \SFStag{0} = 4 \: \SFIrr{1/2}, \qquad 
\SFStag{0} \fuse \SFStag{0} = 4 \: \SFStag{0}.
\end{equation}
The characters $\fch{\SFIrr{\lambda}}{q} = \traceover{\SFIrr{\lambda}} q^{L_0 - c/24}$ of the irreducibles are likewise easily obtained:
\begin{subequations} \label{ch:SF}
\begin{equation}
\ch{\SFIrr{0}} = q^{1/12} \prod_{n=1}^{\infty} \brac{1+q^n}^2 = \frac{\Jth{2}{1;q}}{2 \func{\eta}{q}}, \qquad 
\ch{\SFIrr{1/2}} = q^{-1/24} \prod_{n=1}^{\infty} \brac{1+q^{n-1/2}}^2 = \frac{\Jth{3}{1;q}}{\func{\eta}{q}}.
\end{equation}
Note that the factor of $2$ for $\SFIrr{0}$ would disappear if we instead considered its Verma cover $\SFVer{0}$.  As symplectic fermions are described by an affine Lie superalgebra, it is natural to also consider the supercharacters in which fermionic states are counted with negative multiplicity (we assume a bosonic ground state):
\begin{equation}
\sch{\SFIrr{0}} = q^{1/12} \prod_{n=1}^{\infty} \brac{1-q^n}^2 = \func{\eta}{q}^2, \qquad 
\sch{\SFIrr{1/2}} = q^{-1/24} \prod_{n=1}^{\infty} \brac{1-q^{n-1/2}}^2 = \frac{\Jth{4}{1;q}}{\func{\eta}{q}}.
\end{equation}
\end{subequations}
Excluding $\sch{\SFIrr{0}}$, whose S-transformation involves factors of $\tau$ (we write $q = \ee^{2 \pi \ii \tau}$ as usual), these characters and supercharacters have good modular properties.  However, the S-matrix one obtains has no row or column with all entries non-zero, hence the Verlinde formula is inapplicable.


Finally, we consider general modings for the currents $J^+$ and $J^-$.  Representations on which the algebra acts with modings different to that of the vacuum module, for example $\SFIrr{1/2}$, are said to be \emph{twisted}.%
\footnote{Strictly speaking, these are only modules for an orbifold of the chiral algebra, but there is usually little harm in neglecting this.}  %
For symplectic fermions, the generic twisted module $\SFVer{\lambda} = \SFIrr{\lambda}$ is the irreducible generated by a highest weight state $\ket{\mu_{\lambda}}$ upon which the symplectic fermion currents act with mode decomposition
\begin{equation}
\func{J^\pm}{z} = \sum_{n \in \ZZ \mp \lambda} J^\pm_n z^{-n-1}.
\end{equation}
Taking $0 < \lambda < 1$ and applying \eqref{eq:SFGCR}, the conformal dimension of $\ket{\mu_{\lambda}}$ is found to be $\Delta_{\lambda} = -\tfrac{1}{2} \lambda \brac{1 - \lambda}$.  The corresponding primary field $\func{\mu_{\lambda}}{z}$ is called a \emph{twist field} \cite{Kausch:2000fu}.  The character of $\SFIrr{\lambda}$ is given by
\begin{equation} \label{ch:SFTwist}
\ch{\SFIrr{\lambda}} = q^{-\lambda \brac{1 - \lambda} / 2 + 1/12} \prod_{n=0}^\infty \brac{1+q^{n - \lambda}} \brac{1+q^{n-1 + \lambda}} = \frac{1}{\func{\eta}{q}} \sum_{m \in \ZZ} q^{\brac{m + \lambda - 1/2}^2/2}
\end{equation}
and the supercharacter is obtained by inserting a factor of $\brac{-1}^m$ into the sum.  We will return to twisted modules when we consider the singlet algebra $\SingAlg{1,2}$ in \secref{sec:Singlet}.

\subsection{The Triplet Algebra $\TripAlg{1,2}$} \label{sec:Triplet}

The triplet algebra is defined to be the bosonic subalgebra of the symplectic fermion algebra.  It is 
generated by the three fields
\begin{equation}\label{eq:stronggenerators}
\func{W^\pm}{z}=\normord{ \func{J^\pm}{z} \func{\partial J^\pm}{z}},
\qquad \func{W^0}{z}=\normord{\func{J^+}{z}\func{\partial J^-}{z}}-\normord{\func{\partial J^+}{z}\func{J^-}{z}}
\end{equation}
and the energy-momentum tensor $\func{T}{z}$.  
%
The triplet fields $\func{W^\pm}{z}$ and $\func{W^0}{z}$ are Virasoro primaries (with respect to $\tfunc{T}{z}$) and the conformal dimension of each is $3$. Their \opes{} are rather unpleasant and may be found, for example, in \cite{KauCur95}.  A complete set of (untwisted) irreducible $\TripAlg{1,2}$-modules \cite{EhoHow93} may be obtained by decomposing the symplectic fermion irreducibles $\SFIrr{0}$ and $\SFIrr{1/2}$ into their bosonic and fermionic subspaces:
\begin{equation}
\SFIrr{0} = \TripIrr{0} \oplus \TripIrr{1}, \qquad \SFIrr{1/2} = \TripIrr{-1/8} \oplus \TripIrr{3/8}.
\end{equation}
Here, we have labelled the triplet modules by the conformal dimension of their ground states.  The space of ground states is one-dimensional for $\TripIrr{0}$ and $\TripIrr{-1/8}$, but two-dimensional for $\TripIrr{1}$ and $\TripIrr{3/8}$.  The indecomposable $\SFStag{0}$ also becomes a direct sum of two indecomposables when viewed as a $\TripAlg{1,2}$-module:
\begin{equation}
\SFStag{0} = \TripStag{0} \oplus \TripStag{1}.
\end{equation}
The Loewy diagrams of $\TripStag{0}$ and $\TripStag{1}$ are given in \figref{fig:SFTripLoewy}.

The fusion ring generated by the irreducibles was first determined in \cite{GabRat96}.  The vacuum module $\TripIrr{0}$ is again the fusion identity and the other rules are
\begin{equation} \label{FR:Trip}
\begin{gathered}
\begin{aligned}
\TripIrr{1} \fuse \TripIrr{1} &= \TripIrr{0}, \\
\TripIrr{-1/8} \fuse \TripIrr{-1/8} &= \TripStag{0},
\end{aligned}
\qquad
\begin{aligned}
\TripIrr{1} \fuse \TripIrr{-1/8} &= \TripIrr{3/8}, \\
\TripIrr{-1/8} \fuse \TripIrr{3/8} &= \TripStag{1},
\end{aligned}
\qquad
\begin{aligned}
\TripIrr{1} \fuse \TripIrr{3/8} &= \TripIrr{-1/8}, \\
\TripIrr{3/8} \fuse \TripIrr{3/8} &= \TripStag{0},
\end{aligned}
\\
\TripIrr{1} \fuse \TripStag{0} = \TripStag{1}, \qquad 
\TripIrr{1} \fuse \TripStag{1} = \TripStag{0}, \\
\TripIrr{-1/8} \fuse \TripStag{0} = \TripIrr{3/8} \fuse \TripStag{0} = \TripIrr{-1/8} \fuse \TripStag{1} = \TripIrr{3/8} \fuse \TripStag{1} = 2 \: \TripIrr{-1/8} \oplus 2 \: \TripIrr{3/8}, \\
\TripStag{0} \fuse \TripStag{0} = \TripStag{0} \fuse \TripStag{1} = \TripStag{1} \fuse \TripStag{1} = 2 \: \TripStag{0} \oplus 2 \: \TripStag{1}.
\end{gathered}
\end{equation}
We note that $\TripIrr{1}$ is a simple current of order $2$.  The corresponding extended algebra is, of course, the symplectic fermion algebra (the ground states of $\TripIrr{1}$ correspond to the currents $\func{J^{\pm}}{z}$) and it is easy to check that the symplectic fermion fusion rules \eqref{FR:SF} are consistent with \eqref{FR:Trip} and this observation.  The irreducible triplet characters were first obtained in \cite{Flohr:1995ea,KauCur95}, but follow easily from averaging the characters and supercharacters of the symplectic fermion irreducibles, given in \eqref{ch:SF}:
\begin{equation}
\begin{aligned}
\ch{\TripIrr{0}} &= \frac{\Jth{2}{1;q}}{4 \func{\eta}{q}} + \frac{\func{\eta}{q}^2}{2}, \\
\ch{\TripIrr{1}} &= \frac{\Jth{2}{1;q}}{4 \func{\eta}{q}} - \frac{\func{\eta}{q}^2}{2},
\end{aligned}
\qquad
\begin{aligned}
\ch{\TripIrr{-1/8}} &= \frac{\Jth{3}{1;q} + \Jth{4}{1;q}}{2 \func{\eta}{q}}, \\
\ch{\TripIrr{3/8}} &= \frac{\Jth{3}{1;q} - \Jth{4}{1;q}}{2 \func{\eta}{q}}.
\end{aligned}
\end{equation}
The modular properties of $\TripIrr{-1/8}$ and $\TripIrr{3/8}$ are seen to be good, but those of $\TripIrr{0}$ and $\TripIrr{1}$ are not as satisfactory because the $\func{\eta}{q}^2$ gives rise to coefficients involving $\log q = 2 \pi \ii \tau$.  For example,
\begin{equation}
\Sch{\TripIrr{0}} = \frac{\Jth{4}{1;q}}{4 \func{\eta}{q}} - \frac{\ii \tau \func{\eta}{q}^2}{2} = \frac{1}{4} \brac{\ch{\TripIrr{-1/8}} - \ch{\TripIrr{3/8}}} - \frac{\ii \tau}{2} \brac{\ch{\TripIrr{0}} - \ch{\TripIrr{1}}}.
\end{equation}
Attempts have been made to interpret this, see \cite{Flohr:1995ea,FucNon04} for example.

\subsection{The Singlet Algebra $\SingAlg{1,2}$} \label{sec:Singlet}

To define the singlet algebra at $c=-2$, it is convenient to extend the $\ZZ_2$-grading of the symplectic fermion algebra, given by parity, to a $\ZZ$-grading.  This may be regarded as the ghost number in the $bc$ ghost system realisation or as the eigenvalue of a derivation $N$ extending $\AKMSA{psl}{1}{1}$.  In any case, acting with $J^{\pm}_n$ increases this grade by $\pm 1$.  We can now define the singlet algebra $\SingAlg{1,2}$ as the subalgebra of symplectic fermions whose $\ZZ$-grade matches that of the vacuum.  This is therefore a subalgebra of the triplet algebra and it is generated by $\tfunc{T}{z}$ and $\tfunc{W^0}{z}$.  In contrast to the symplectic fermion currents $\func{J^{\pm}}{z}$ and the triplet fields $\func{W^{\pm}}{z}$, the singlet generators act with integer moding on every twisted symplectic fermion module.  Decomposing the symplectic fermions' twisted Verma modules $\SFVer{\lambda}$ into $\ZZ$-graded subspaces shows that the singlet algebra possesses an uncountable set of non-isomorphic (untwisted) modules $\SingTyp{\mu}$, $\mu \in \RR$:
\begin{equation} \label{eq:SFToSing}
\SFVer{\lambda} = \bigoplus_{m \in \ZZ} \SingTyp{\lambda + m} \qquad \text{(\( 0 \leqslant \lambda < 1 \)).}
\end{equation}
The ground states of $\SingTyp{\mu}$ have conformal dimension $\Delta_{\mu} = \tfrac{1}{2} \mu \brac{\mu - 1}$.

For $\lambda \neq 0$, it turns out that the $\SingTyp{\lambda + m}$ so obtained are irreducible.  By analogy with the case of superalgebras \cite{KacCha77}, the $\SingTyp{\mu}$ with $\mu \notin \ZZ$ will therefore be referred to as \emph{typical}.  When $\lambda = 0$, irreducible $\SingAlg{1,2}$-modules are obtained by decomposing the irreducible vacuum module $\SFIrr{0} \neq \SFVer{0}$ instead:
\begin{equation}
\SFIrr{0} = \bigoplus_{r \in \ZZ} \SingAtyp{r}.
\end{equation}
These irreducibles will be referred to as \emph{atypical}.  Note that the singlet vacuum module is $\SingAtyp{0}$ and that the minimal conformal dimension for states of $\SingAtyp{r}$ is $\tfrac{1}{2} \abs{r} \brac{\abs{r} + 1}$.  We also remark that the $\SingTyp{\mu}$ with $\mu \notin \ZZ$ and the $\SingAtyp{r}$ with $r \in \ZZ$ exhaust the irreducible $\SingAlg{1,2}$-modules \cite{EhoHow93,WanCla98}.

The characters of the typical irreducibles $\SingTyp{\mu}$, $\mu \notin \ZZ$, are easily extracted from the twisted symplectic fermion characters \eqref{ch:SFTwist} once the $\ZZ$-charge is taken into account:
\begin{equation} \label{ch:SingTypNaive}
\ch{\SingTyp{\mu}} = \frac{q^{\brac{\mu - 1/2}^2 / 2}}{\func{\eta}{q}}.
\end{equation}
This formula also applies to the indecomposables $\SingTyp{r}$, $r \in \ZZ$.  Note that up to shifting $\mu$ by $-\tfrac{1}{2}$, these characters coincide with the free boson characters \eqref{ch:BosonNaive} discussed in \secref{sec:FreeBoson}.  It is also straight-forward to obtain the characters of the atypical irreducibles $\SingAtyp{r}$, $r \in \ZZ$.  However, we will not need their explicit form in what follows and we only mention that these forms involve an interesting number theoretical object called a false theta function.  

Instead, we study the structure of the indecomposable modules $\SingTyp{r}$, for $r \in \ZZ$.  These are defined as the subspaces of $\SFVer{0}$ of constant $\ZZ$-grade.  It follows from \eqref{ses:SFStag} that $\SingTyp{r}$ is an indecomposable sum of two atypicals and a little thought leads us to the (non-split) exact sequence
\begin{equation} \label{ses:singlet}
\dses{\SingAtyp{r+1}}{}{\SingTyp{r+1}}{}{\SingAtyp{r}}.
\end{equation}
Splicing the short exact sequence for $\SingAtyp{r}$ with that for $\SingAtyp{r+1}$ and iterating (see \appref{app:Splicing}), we arrive at resolutions of the atypical irreducible modules:
\begin{equation} \label{eq:SingRes}
\cdots \lra \SingTyp{r+5} \lra \SingTyp{r+4} \lra \SingTyp{r+3} \lra \SingTyp{r+2} \lra \SingTyp{r+1} \lra \SingAtyp{r} \lra 0.
\end{equation}
These imply that the characters of the irreducible atypical modules may be expressed as infinite alternating sums over the typical characters:
\begin{equation} \label{eq:rescharsing}
\ch{\SingAtyp{r}} = \sum_{j=0}^{\infty} \Bigl( \ch{\SingTyp{r+2j+1}} - \ch{\SingTyp{r+2j+2}} \Bigr) = \sum_{j=0}^{\infty} \brac{-1}^j \ch{\SingTyp{r+j+1}}.
\end{equation}
In particular, we conclude that the characters of the $\SingTyp{\mu}$ --- the irreducible typicals as well as the indecomposable atypicals --- form a (topological) basis for the vector space spanned by the characters.  We will use this to apply the Verlinde formula to irreducible $\SingAlg{1,2}$-modules and thereby deduce the fusion rules of the singlet theory.

\subsection{Modular Transformations and the Verlinde Formula} \label{sec:SingMod}

Before deriving the modular transformations, we remark that the typical singlet characters \eqref{ch:SingTypNaive} suffer from the same deficiency as the standard free boson characters \eqref{ch:BosonNaive} in that they do not completely distinguish the representations:  $\ch{\SingTyp{\mu}} = \ch{\SingTyp{1 - \mu}}$.  As in \secref{sec:FreeBoson}, the fix is to include the $\ZZ$-grading and the $\AKMSA{psl}{1}{1}$ level $k$:
\begin{equation} \label{ch:SingTyp}
\ch{\SingTyp{\mu}} = \traceover{\SingTyp{\mu}} y^k z^{\mu - 1/2} q^{L_0 - c/24} = \frac{y^k z^{\mu - 1/2} q^{\brac{\mu - 1/2}^2 / 2k}}{\func{\eta}{q}}.
\end{equation}
Here, we have finally fixed our choice for the $\ZZ$-grading used to define singlet modules:  $\SingTyp{\mu}$ is assigned the grade $\mu - \tfrac{1}{2}$ in $\SFVer{\mu'}$ (where $\mu' = \mu \bmod{1}$).  We do this because the typical singlet character \eqref{ch:SingTyp} then takes the same form as the free boson character \eqref{ch:Boson}, up to the shifts by $-\tfrac{1}{2}$.

Writing $y = \ee^{2 \pi \ii t}$, $z = \ee^{2 \pi \ii u}$ and $q = \ee^{2 \pi \ii \tau}$, as in \secref{sec:FreeBoson}, the modular S-transformation for the typical (and indecomposable atypical) characters is then immediate from \eqref{eq:BosonS}:
\begin{equation} \label{eq:SingTypS}
\Sch{\SingTyp{\lambda}} = \int_{-\infty}^{\infty} \modS_{\lambda \mu} \ch{\SingTyp{\mu}} \: \dd \mu, \qquad S_{\lambda \mu} = \ee^{-2 \pi \ii \brac{\lambda - 1/2} \brac{\mu - 1/2}}
\end{equation}
(we have set $k$ back to $1$ for convenience here).  This S-matrix is again symmetric and unitary.  Moreover, with the T-matrix $\modT_{\lambda \mu} = \ee^{\ii \pi \brac{\lambda \brac{\lambda - 1} + 1/6}} \func{\delta}{\lambda = \mu}$, it defines a representation of $\SLG{SL}{2 ; \ZZ}$ in which the conjugation permutation is $\lambda \to 1 - \lambda$.

By analogy with \secref{sec:FreeBoson}, we expect that a continuum version of the Verlinde formula will be valid.  However, we must now take into account the fact that the vacuum $\SingAlg{1,2}$-module is the \emph{atypical} irreducible $\SingAtyp{0}$.  As there is no ``atypicality'' with the free boson's representations, our story now deviates from that of \secref{sec:FreeBoson}.  Using the character formula \eqref{eq:rescharsing} for atypical irreducibles, we easily obtain their S-transformations expressed in terms of the topological basis $\set{\ch{\SingTyp{\mu}} \st \mu \in \RR}$:
\begin{align}
\Sch{\SingAtyp{r}} &= \sum_{j=0}^\infty \brac{-1}^j \Sch{\SingTyp{r+j+1}} 
= \sum_{j=0}^\infty \brac{-1}^j \int_{-\infty}^{\infty} \ee^{-2 \pi \ii \brac{r+j+1/2} \brac{\mu - 1/2}} \ch{\SingTyp{\mu}} \: \dd \mu \notag \\
&= \int_{-\infty}^{\infty} \frac{\ee^{-2 \pi \ii \brac{r+1/2} \brac{\mu - 1/2}}}{1 + e^{-2 \pi \ii \brac{\mu - 1/2}}} \ch{\SingTyp{\mu}} \: \dd \mu 
= \int_{-\infty}^{\infty} \frac{\ee^{-2 \pi \ii r \brac{\mu - 1/2}}}{2 \cos \sqbrac{\pi \brac{\mu - 1/2}}} \ch{\SingTyp{\mu}} \: \dd \mu.
\end{align}
The corresponding S-matrix entry is therefore
\begin{equation} \label{eq:SingAtypS}
\modS_{\atyp{r} \mu} = \frac{\ee^{-2 \pi \ii r \brac{\mu - 1/2}}}{2 \cos \sqbrac{\pi \brac{\mu - 1/2}}},
\end{equation}
where we have indicated when a label corresponds to an atypical irreducible by underlining it.

We can now apply the continuum Verlinde formula to compute fusion coefficients.  Actually, because characters cannot distinguish an indecomposable from the direct sum of its composition factors, what the Verlinde formula gives is the structure constants of the \emph{Grothendieck} fusion ring (\appref{app:Grothendieck}).  The easiest computation is that for the fusion of an atypical and a typical:
\begin{equation}
\fuscoeff{\atyp{r} \mu}{\nu} = \int_{-\infty}^{\infty} \frac{\modS_{\atyp{r} \rho} \modS_{\mu \rho} \modS_{\nu \rho}^*}{\modS_{\atyp{0} \rho}} \: \dd \rho = \int_{-\infty}^{\infty} \ee^{-2 \pi \ii \brac{r + \mu - \nu} \brac{\rho - 1/2}} \: \dd \rho = \func{\delta}{\nu = \mu + r}.
\end{equation}
Fusing two typicals is only slightly more involved because the denominator of $\modS_{\atyp{0} \rho}$ no longer cancels:
\begin{align}
\fuscoeff{\lambda \mu}{\nu} &= \int_{-\infty}^{\infty} \ee^{-2 \pi \ii \brac{\lambda + \mu - \nu - 1/2} \brac{\rho - 1/2}} \brac{\ee^{\ii \pi \brac{\rho - 1/2}} + \ee^{-\ii \pi \brac{\rho - 1/2}}} \: \dd \rho \notag \\
&= \func{\delta}{\nu = \lambda + \mu} + \func{\delta}{\nu = \lambda + \mu - 1}.
\end{align}
Fusing two atypicals is a little more subtle however.  Computing na\"{\i}vely, one quickly arrives at a divergent integral.  The problem here may be traced back to the derivation of \eqref{eq:SingAtypS} in which we summed a geometric series at its radius of convergence.  The fix is obvious:  Expand the geometric series once again (in the right region) as continue to integrate.  From this perspective, the dubious summation may be simply regarded as a placeholder that simplifies some computations.  With this proviso, we quickly obtain
\begin{align}
\fuscoeff{\atyp{r} \atyp{s}}{\nu} &= \int_{-\infty}^\infty \frac{\ee^{-2 \pi \ii \brac{r+s - \nu + 1/2} \brac{\rho - 1/2}}}{2 \cos \sqbrac{\pi \brac{\rho - 1/2}}} \: \dd \rho = \sum_{j=0}^{\infty} \int_{-\infty}^\infty \brac{-1}^j \ee^{-2 \pi \ii \brac{r+s+j+1 - \nu} \sigma} \: \dd \sigma \notag \\
&= \sum_{j=0}^{\infty} \brac{-1}^j \func{\delta}{\nu = r+s+j+1}.
\end{align}
This seems to say that the fusion of two atypicals leads to negative multiplicities (for $j$ odd), but in fact, this infinite alternating sum corresponds to an atypical with positive multiplicity, as we will now see.

The Grothendieck ring of characters is obtained by integrating these coefficients as in \eqref{FR:Boson}:
\begin{equation}
\begin{gathered} 
\ch{\SingAtyp{r}} \grfuse \ch{\SingTyp{\mu}} = \ch{\SingTyp{\mu+r}}, \qquad 
\ch{\SingTyp{\lambda}} \grfuse \ch{\SingTyp{\mu}} = \ch{\SingTyp{\lambda + \mu}} + \ch{\SingTyp{\lambda + \mu - 1}}, \\
\ch{\SingAtyp{r}} \grfuse \ch{\SingAtyp{s}} = \sum_{j=0}^{\infty} \brac{-1}^j \ch{\SingTyp{r+s+j+1}} = \ch{\SingAtyp{r+s}}.
\end{gathered}
\end{equation}
When one is sure that the characters cannot describe indecomposable modules, these Grothendieck fusion rules may be lifted to genuine fusion rules.  In particular, we deduce that
\begin{equation} \label{FR:Sing}
\SingAtyp{r} \fuse \SingAtyp{s} = \SingAtyp{r+s}, \quad 
\SingAtyp{r} \fuse \SingTyp{\mu} = \SingTyp{\mu + r}, \quad 
\SingTyp{\lambda} \fuse \SingTyp{\mu} = \SingTyp{\lambda + \mu} \oplus \SingTyp{\lambda + \mu - 1} \qquad \text{(\( \lambda, \mu, \lambda + \mu \notin \ZZ \)),}
\end{equation}
the last constraint arising because the conformal dimensions of the states of $\SingTyp{\lambda}$ and $\SingTyp{\lambda - 1}$ differ by $\lambda \bmod{1}$.  This means that the fusion of two irreducibles is known in every case except $\SingTyp{\lambda} \fuse \SingTyp{\mu}$ when $\lambda + \mu \in \ZZ$.

Observe now that \eqref{FR:Sing} identifies the $\SingAtyp{r}$ as simple currents of infinite order (with no fixed points).  It is easy to show that the maximal simple current extension, meaning the algebra generated by $\SingAtyp{1}$ and $\SingAtyp{-1}$, is precisely the symplectic fermion algebra $\AKMSA{psl}{1}{1}$ (indeed, the generators of $\SingAtyp{1}$ and $\SingAtyp{-1}$ have conformal dimension $1$).  From this, we may conclude that the extension by $\SingAtyp{2}$ and $\SingAtyp{-2}$ is the triplet algebra $\TripAlg{1,2}$.  It is therefore the maximal \emph{bosonic} simple current extension.

Let us now perform a consistency check on the continuum Verlinde formula by deriving (some of) the $\TripAlg{1,2}$ fusion rules from \eqref{FR:Sing}.  Our identification of the triplet algebra as a simple current extension of the singlet algebra leads to the restriction rules
\begin{equation}
\TripIrr{0} = \bigoplus_{m \in \ZZ} \SingAtyp{2m}, \quad 
\TripIrr{1} = \bigoplus_{m \in \ZZ} \SingAtyp{2m+1}, \quad 
\TripIrr{-1/8} = \bigoplus_{m \in \ZZ} \SingTyp{2m+1/2}, \quad 
\TripIrr{3/8} = \bigoplus_{m \in \ZZ} \SingTyp{2m-1/2}.
\end{equation}
As in \secref{sec:FreeBoson}, we can use these rules to compute fusion, remembering to take a single $\SingAlg{1,2}$-representative for one of the $\TripAlg{1,2}$-modules being fused.  To illustrate:
\begin{equation}
\TripIrr{0} \fuse \TripIrr{0} = \SingAtyp{0} \fuse \Bigl( \bigoplus_{m \in \ZZ} \SingAtyp{2m} \Bigr) = \bigoplus_{m \in \ZZ} \brac{\SingAtyp{0} \fuse \SingAtyp{2m}} = \bigoplus_{m \in \ZZ} \SingAtyp{2m} = \TripIrr{0}.
\end{equation}
This procedure therefore correctly normalises the extended algebra fusion.  Applying this, we can reproduce all the triplet fusion rules \eqref{FR:Trip} except for those of $\TripIrr{-1/8}$ and $\TripIrr{3/8}$ with one another.  For these latter rules, we can only compute the Grothendieck fusion, for example
\begin{align}
\ch{\TripIrr{-1/8}} \grfuse \ch{\TripIrr{3/8}} &= \ch{\SingTyp{2 \ell + 1/2}} \grfuse \Bigl( \sum_{m \in \ZZ} \ch{\SingTyp{2m-1/2}} \Bigr) = \sum_{m \in \ZZ} \Bigl( \ch{\SingTyp{2 \brac{\ell + m}}} + \ch{\SingTyp{2 \brac{\ell + m} - 1}} \Bigr) \notag \\
&= \sum_{m \in \ZZ} \Bigl( \ch{\SingAtyp{2 \brac{\ell + m}}} + 2 \: \ch{\SingAtyp{2 \brac{\ell + m} - 1}} + \ch{\SingAtyp{2 \brac{\ell + m} - 2}} \Bigr) \notag \\
&= 2 \: \ch{\TripIrr{0}} + 2 \: \ch{\TripIrr{1}},
\end{align}
where we have used \eqref{ses:singlet}.  We note that this is consistent with $\TripIrr{-1/8} \fuse \TripIrr{3/8} = \TripStag{1}$ because (see \figref{fig:SFTripLoewy})
\begin{equation} \label{eq:TripStagId}
\ch{\TripStag{0}} = \ch{\TripStag{1}} = 2 \: \ch{\TripIrr{0}} + 2 \: \ch{\TripIrr{1}}.
\end{equation}

Of course, the Grothendieck fusion of the $\SingAlg{1,2}$-typicals $\SingTyp{\lambda}$ and $\SingTyp{\mu}$, $\lambda + \mu \in \ZZ$, gives us the composition factors of the fusion product:
\begin{equation}
\ch{\SingTyp{\lambda} \fuse \SingTyp{\mu}} = \ch{\SingTyp{\lambda}} \grfuse \ch{\SingTyp{\mu}} = \ch{\SingAtyp{\lambda + \mu}} + 2 \: \ch{\SingAtyp{\lambda + \mu - 1}} + \ch{\SingAtyp{\lambda + \mu - 2}}.
\end{equation}
Because the corresponding fusion products for the triplet and symplectic fermion algebras are indecomposable, we propose that this is true for the singlet algebra as well.  We therefore conjecture that
\begin{equation}
\SingTyp{\lambda} \fuse \SingTyp{r+1 - \lambda} = \SingStag{r} \qquad \text{(\( r \in \ZZ \)),}
\end{equation}
where $\SingStag{r}$ is an indecomposable $\SingAlg{1,2}$-module whose Loewy diagram is given in \figref{fig:SingLoewy}.  To prove this, one would have to either construct the fusion product explicitly (which seems very demanding), or deduce the existence of such indecomposables abstractly and show that they describe the restriction of the $\TripAlg{1,2}$-indecomposables $\TripStag{0}$ and $\TripStag{1}$ to $\SingAlg{1,2}$.  Either approach is beyond the scope of this review.

\begin{figure}
\begin{center}
\begin{tikzpicture}[thick,>=latex,
	nom/.style={circle,draw=black!20,fill=black!20,inner sep=1pt}
	]
\node (top1) at (5,1.5) [] {$\SingAtyp{r}$};
\node (left1) at (3.5,0) [] {$\SingAtyp{r+1}$};
\node (right1) at (6.5,0) [] {$\SingAtyp{r-1}$};
\node (bot1) at (5,-1.5) [] {$\SingAtyp{r}$};
\node at (5,0) [nom] {$\SingStag{r}$};
\draw [->] (top1) -- (left1);
\draw [->] (top1) -- (right1);
\draw [->] (left1) -- (bot1);
\draw [->] (right1) -- (bot1);
\end{tikzpicture}
\caption{\label{fig:SingLoewy} Conjectured Loewy diagram for the proposed indecomposable $\SingAlg{1,2}$-module $\SingStag{r}$.}
\end{center}
\end{figure}

\subsection{Bulk Modular Invariants} \label{sec:SingModInv}

Because of the symmetries of the S-matrix, the singlet theory has two obvious bulk modular invariants, corresponding to the diagonal and charge-conjugate partition functions:
\begin{equation}
\func{\partfunc{diag.}}{q,\ahol{q}} = \int_{-\infty}^{\infty} \ahol{\ch{\SingTyp{\lambda}}} \ch{\SingTyp{\lambda}} \: \dd \lambda, \qquad \func{\partfunc{c.c.}}{q,\ahol{q}} = \int_{-\infty}^{\infty} \ahol{\ch{\SingTyp{1-\lambda}}} \ch{\SingTyp{\lambda}} \: \dd \lambda.
\end{equation}
As usual, simple current extensions allow one to construct more.  Specifically, we have seen that each atypical $\SingAlg{1,2}$-module $\SingAtyp{n}$ is a simple current, so an extended algebra $\ExtAlg{n}$ may be constructed by promoting all the fields in the fusion orbit of the vacuum module to symmetry generators:
\begin{equation}
\ExtAlg{n} = \bigoplus_{m \in \ZZ} \SingAtyp{mn}.
\end{equation}
We have already remarked that $\ExtAlg{1}$ is the symplectic fermion algebra and $\ExtAlg{2}$ is the triplet algebra.  The other extended algebras likewise give rise to rational \lcfts{} and may be described as orbifolds of the symplectic fermion theory \cite{KauCur95}.  The fusion orbits through other $\SingAlg{1,2}$-modules likewise give rise to (twisted) $\ExtAlg{n}$-modules, for example
\begin{equation}
\ExtSingAtyp{n}{r} = \bigoplus_{m \in \ZZ} \SingAtyp{r + mn}, \qquad \ExtSingTyp{n}{\lambda} = \bigoplus_{m \in \ZZ} \SingTyp{\lambda + mn}.
\end{equation}
The untwisted extended algebra modules are precisely the $\ExtSingAtyp{n}{r}$ and the $\ExtSingTyp{n}{\lambda}$ with $\lambda \in \frac{1}{n} \ZZ$.  By restricting to these, we arrive at new modular invariants.

The modular S-transformation of an untwisted typical extended algebra module is easy to calculate:
\begin{align}
\Sch{\ExtSingTyp{n}{j/n}} &= \sum_{m \in \ZZ} \Sch{\SingTyp{j/n+mn}} = 
\sum_{m \in \ZZ} \int_{-\infty}^{\infty} \ee^{-2 \pi \ii \brac{j/n+mn-1/2} \brac{\mu - 1/2}} \ch{\SingTyp{\mu}} \: \dd \mu \notag \\
&=\int_{-\infty}^{\infty} \sum_{m \in \ZZ} \ee^{-2 \pi \ii mn \brac{\mu - 1/2}} \ee^{-2 \pi \ii \brac{j/n-1/2} \brac{\mu - 1/2}} \ch{\SingTyp{\mu}} \: \dd \mu \notag \\
&= \frac{1}{n} \int_{-\infty}^{\infty} \sum_{\ell \in \ZZ} \func{\delta}{\mu - 1/2 = \ell / n} \ee^{-2 \pi \ii \brac{j/n-1/2} \brac{\mu - 1/2}} \ch{\SingTyp{\mu}} \: \dd \mu \notag \\
&= \frac{1}{n} \sum_{\ell \in \ZZ} \ee^{-2 \pi \ii \brac{j/n-1/2} \ell / n} \ch{\SingTyp{\ell / n + 1/2}} \notag \\
&= \frac{1}{n} \sum_{m \in \ZZ} \sum_{k=0}^{n^2-1} \ee^{-2 \pi \ii \brac{j/n-1/2} \brac{k/n+mn}} \ch{\SingTyp{k/n+mn+1/2}}.
\end{align}
Here, we pause to note that if $n$ is even, then the exponential factor in the above sum is independent of $m$.  We may therefore perform the $m$-summation and then shift $k$ to $k - \tfrac{1}{2} n$, obtaining
\begin{equation} \label{eq:ExtSingS}
\Sch{\ExtSingTyp{n}{j/n}} = \sum_{k=0}^{n^2-1} \ExtmodS{n}_{jk} \ch{\ExtSingTyp{n}{k/n}}, \quad \ExtmodS{n}_{jk} = \frac{1}{n} \ee^{-2 \pi \ii \brac{j/n-1/2} \brac{k/n-1/2}} \qquad \text{(\( n \) even).}
\end{equation}
The \emph{typical} extended characters, for $n$ even, therefore carry a finite-dimensional representation of the modular group.  If $n$ is odd however, then the exponential factor includes a factor $\brac{-1}^m$ and the $\SingTyp{\lambda}$ do not combine to give an untwisted extended algebra module.  One is instead forced to consider supercharacters and twisted modules, the final result being that the modular invariant one constructs from $\ExtAlg{n}$, with $n$ odd, is equivalent to that constructed from $\ExtAlg{2n}$.  This is consistent with expectations because the extended algebra generators are fermionic for $n$ odd and bosonic for $n$ even.

As the S-matrix \eqref{eq:ExtSingS} is unitary, the diagonal and charge-conjugate partition functions are modular invariant:
\begin{equation}
\partfunc{diag.}^{\brac{n}} = \sum_{j=0}^{n^2-1} \ahol{\ch{\ExtSingTyp{n}{j/n}}} \ch{\ExtSingTyp{n}{j/n}}, \qquad \partfunc{c.c.}^{\brac{n}} = \sum_{j=0}^{n^2-1} \ahol{\ch{\ExtSingTyp{n}{1-j/n}}} \ch{\ExtSingTyp{n}{j/n}}.
\end{equation}
Expressing these in terms of singlet characters finally gives new modular invariants for $\SingAlg{1,2}$.  We will not write them out in generality, noting only that for the triplet algebra $\TripAlg{1,2} = \ExtAlg{2}$, the diagonal and charge conjugate modular invariant coincide.  Since $\ch{\ExtSingTyp{2}{0}} = \ch{\ExtSingTyp{2}{1}} = \ch{\TripIrr{0}} + \ch{\TripIrr{1}}$, they are 
\begin{equation} \label{eq:TripModInv}
\partfunc{diag.}^{\brac{2}} = \abs{\ch{\TripIrr{-1/8}}}^2 + \abs{\ch{\TripIrr{3/8}}}^2 + 2 \: \abs{\ch{\TripIrr{0}} + \ch{\TripIrr{1}}}^2.
\end{equation}
Note that the obvious candidate for the non-chiral vacuum module $\TripIrr{0} \otimes \TripIrr{0}$ contributes with multiplicity $2$.  We will see that this is explained by the bulk vacuum $\ket{0} \otimes \ket{0}$ having a non-chiral logarithmic partner $\ket{\nonch{\Omega}}$.

Finally, we remark that the modular transformations of the atypical extended characters are problematic.  Repeating the calculation that led to the typical extended S-matrix \eqref{eq:ExtSingS}, assuming again that $n$ is even, leads to the sum
\begin{equation}
\Sch{\ExtSingAtyp{n}{r}} = \frac{1}{n} \sum_{k=0}^{n^2-1} \frac{\ee^{-2 \pi \ii rk/n}}{2 \cos \sqbrac{\pi k/n}} \ch{\ExtSingTyp{n}{k/n + 1/2}}.
\end{equation}
The pole at $k = \tfrac{1}{2} n$ cannot be swept aside in this summation as it was when we were integrating.  Of course, we know for the case $n=2$ ($\TripAlg{1,2}$) that the S-transformation of the atypical characters involves factors of $\log q = 2 \pi \ii \tau$, so we should not expect that the above approach will work.  This can be traced back to the fact that the atypical extended characters can, unlike their singlet counterparts, no longer be written as an (infinite) linear combination of the typical extended characters.  The extended version of the resolution \eqref{eq:SingRes} is periodic in the typical labels and, consequently, the character formula \eqref{eq:rescharsing} is divergent.

\subsection{Bulk State Spaces} \label{sec:TripBulk}

Let us consider the symplectic fermions once again.  Recall that the action \eqref{eq:SFAction} is defined in terms of (non-chiral) fermions $\func{\nonch{\theta}^{\pm}}{z , \ahol{z}}$.  From these, we can construct the following field:
\begin{equation}
\func{\nonch{\Omega}}{z,\ahol{z}} = \normord{\func{\nonch{\theta}^+}{z,\ahol{z}} \func{\nonch{\theta}^-}{z,\ahol{z}}}.
\end{equation}
The symplectic fermion currents, both holomorphic and antiholomorphic, act on $\nonch{\Omega}$ as follows:
\begin{equation} \label{eq:SFLogOPEs}
\begin{aligned}
\func{J^\pm}{z} \func{\nonch{\Omega}}{w,\ahol{w}} &\sim \mp \frac{\func{\nonch{\theta}^\pm}{w,\ahol{w}}}{z-w}, \\
\func{\ahol{J}^\pm}{\ahol{z}} \func{\nonch{\Omega}}{w,\ahol{w}} &\sim \mp \frac{\func{\nonch{\theta}^\pm}{w,\ahol{w}}}{\ahol{z}-\ahol{w}},
\end{aligned}
\qquad
\begin{aligned}
\func{J^\pm}{z} \func{\nonch{\theta}^\mp}{w,\ahol{w}} &\sim \frac{1}{z-w}, \\
\func{\ahol{J}^\pm}{\ahol{z}} \func{\nonch{\theta}^\mp}{w,\ahol{w}} &\sim \frac{1}{\ahol{z}-\ahol{w}}.
\end{aligned}
\end{equation}
From this, we deduce the structure of the bulk indecomposable symplectic fermion module $\SFStagbulk{}$, see \figref{fig:SFbulk}.  We remark that its character is \emph{twice} the atypical contribution to the $\TripAlg{1,2}$ modular invariant \eqref{eq:TripModInv}.

\begin{figure}
\begin{center}
\begin{tikzpicture}[thick,>=latex,
	nom/.style={circle,draw=black!20,fill=black!20,inner sep=1pt}
	]
\node (top1) at (5,1.5) [] {$\SFIrr{0}\otimes \SFIrr{0}$};
\node (left1) at (3.5,0) [] {$\SFIrr{0}\otimes \SFIrr{0}$};
\node (right1) at (6.5,0) [] {$\SFIrr{0}\otimes \SFIrr{0}$};
\node (bot1) at (5,-1.5) [] {$\SFIrr{0}\otimes \SFIrr{0}$};
\node at (5,0) [nom] {$\SFStagbulk{}$};
\draw [->] (top1) -- (left1);
\draw [->] (top1) -- (right1);
\draw [->] (left1) -- (bot1);
\draw [->] (right1) -- (bot1);
\end{tikzpicture}
\hspace{0.1\textwidth}
\begin{tikzpicture}[thick,>=latex,
	nom/.style={circle,draw=black!20,fill=black!20,inner sep=1pt}
	]
\node (top0) at (0,1.5) [] {$\func{\nonch{\Omega}}{z,\bar z}$};
\node (left0) at (-1.5,0) [] {$\func{\nonch{\theta}^+}{z,\bar z}$};
\node (right0) at (1.5,0) [] {$\func{\nonch{\theta}^-}{z,\bar z}$};
\node (bot0) at (0,-1.5) [] {$\nonch{1}$};
\node at (0,0) [nom] {$\SFStagbulk{}$};
\draw [->] (top0) -- (left0);
\draw [->] (top0) -- (right0);
\draw [->] (left0) -- (bot0);
\draw [->] (right0) -- (bot0);
\end{tikzpicture}
\caption{\label{fig:SFbulk}On the left, the Loewy diagram of the bulk indecomposable module $\SFStagbulk{}$ (over the direct sum of two copies of the symplectic fermion algebra).  On the right, the same diagram but with the bulk composition factors replaced by the fields naturally associated to them.}
\end{center}
\end{figure}

The symplectic fermion bulk module is graded by the (total) fermion number, hence it decomposes into the direct sum of two bulk modules over $\TripAlg{1,2}$ (or rather over two copies of it).  Only one of these modules $\TripStagbulk{}$ contains the triplet vacuum module $\TripIrr{0} \otimes \TripIrr{0}$ as a submodule.  We illustrate its structure in \figref{fig:TripBulkMod}.  As its character coincides with the atypical contribution to the partition function \eqref{eq:TripModInv}, we conclude that the bulk space of states corresponding to this modular invariant is
\begin{equation} \label{eq:TripBulkSpace}
\bulkstatespace = \brac{\TripIrr{-1/8} \otimes \TripIrr{-1/8}} \oplus \brac{\TripIrr{3/8} \otimes \TripIrr{3/8}} \oplus \TripStagbulk{}.
\end{equation}
This conclusion essentially defines the bulk triplet theory as the bosonic subtheory of the bulk symplectic fermions theory in which the only non-local fields admitted are those on which the fermions act with half-integer moding.  This confirms the triplet model as the $\ZZ_2$-orbifold of symplectic fermions.

\begin{figure}
\begin{center}
\begin{tikzpicture}[thick,>=latex]
\node (t11) at (10,4) [] {$ \TripIrr{1} \otimes \TripIrr{1}$};
\node (t33) at (4,4) [] {$ \TripIrr{0} \otimes \TripIrr{0}$};
\node (b11) at (10,0) [] {$ \TripIrr{1} \otimes \TripIrr{1}$};
\node (b33) at (4,0) [] {$ \TripIrr{0} \otimes \TripIrr{0}$};
\node (01') at (12,2) [] {$ \TripIrr{1} \otimes \TripIrr{0}$};
\node (31') at (8,2) [] {$ \TripIrr{1} \otimes \TripIrr{0}$};
\node (13') at (6,2) [] {$ \TripIrr{0} \otimes \TripIrr{1}$};
\node (63') at (2,2) [] {$ \TripIrr{0} \otimes \TripIrr{1}$};
\draw [->] (t11) -- (01');
\draw [->] (01') -- (b11);
\draw [->] (t11) -- (31');
\draw [->] (31') -- (b11);
\draw [->] (t33) -- (13');
\draw [->] (13') -- (b33);
\draw [->] (t33) -- (63');
\draw [->] (63') -- (b33);
\draw [->,dotted] (t11) -- (63');
\draw [->,dotted] (13') -- (b11);
\draw [->,dotted] (t33) -- (31');
\draw [->,dotted] (31') -- (b33);
\draw [->,dotted] (t11) -- (13');
\draw [->,dotted] (t33) -- (01');
\draw [->,dotted] (01') -- (b33);
\draw [->,dotted] (63') -- (b11);
\end{tikzpicture}
\caption{\label{fig:TripBulkMod}The Loewy diagram for the indecomposable bulk module $\TripStagbulk{}$ of $\TripAlg{1,2} \oplus \TripAlg{1,2}$.  The solid/dotted arrows indicate the action of the holomorphic/antiholomorphic triplet algebra.  It is clear that restricting to each chiral subalgebra results in $\TripStagbulk{}$ decomposing as $\brac{\TripIrr{0} \otimes \TripStag{0}} \oplus \brac{\TripIrr{1} \otimes \TripStag{1}}$ and $\brac{\TripStag{0} \otimes \TripIrr{0}} \oplus \brac{\TripStag{1} \otimes \TripIrr{1}}$, respectively.}
\end{center}
\end{figure}

It is worth thinking for a few moments how one could have arrived at the triplet model's bulk state space structure \eqref{eq:TripBulkSpace} if the non-chiral information concerning symplectic fermions was not so readily available.  First, the modular invariant in \eqref{eq:TripModInv} is very suggestive, especially when we may rewrite the atypical contribution, using the character identity \eqref{eq:TripStagId}, as
\begin{equation}
2 \: \abs{\ch{\TripIrr{0}} + \ch{\TripIrr{1}}}^2 = \ahol{\ch{\TripIrr{0}}} \ch{\TripStag{0}} + \ahol{\ch{\TripIrr{1}}} \ch{\TripStag{1}} = \ahol{\ch{\TripStag{0}}} \ch{\TripIrr{0}} + \ahol{\ch{\TripStag{1}}} \ch{\TripIrr{1}}.
\end{equation}
This ties in nicely with the idea discussed in the introduction that natural representations often decompose in a manner whereby each irreducible is paired with its projective cover.  Indeed, it is known \cite{NagTri11} that the irreducibles $\TripIrr{-1/8}$ and $\TripIrr{3/8}$ are projective and that the projective covers of the irreducibles $\TripIrr{0}$ and $\TripIrr{1}$ are $\TripStag{0}$ and $\TripStag{1}$, respectively (in an appropriate category of vertex algebra modules).

A simple guess, which works in this case, is therefore to draw the Loewy diagram of the direct sum $\brac{\TripIrr{0} \otimes \TripStag{0}} \oplus \brac{\TripIrr{1} \otimes \TripStag{1}}$.  This gives a holomorphic module structure to the bulk module $\TripStagbulk{}$.  Then, complete the diagram by adding (dotted) arrows representing the antiholomorphic module structure so that they trace out the Loewy diagram of $\brac{\TripStag{0} \otimes \TripIrr{0}} \oplus \brac{\TripStag{1} \otimes \TripIrr{1}}$.  One quickly finds that there is a unique way to do this.  Moreover, one can check that the resulting diagram is manifestly \emph{local}, meaning that the corresponding bulk correlation functions will all be single-valued.  For this, we recall \cite{GabLoc99} that field locality has an algebraic reformulation which requires that $L_0 - \ahol{L}_0$ be diagonalisable with integer eigenvalues.  The second constraint is clearly met and the first follows from the non-diagonalisable action for symplectic fermions (see \eqref{eq:SFGCR} and \eqref{eq:SFLogOPEs}):
\begin{equation}
L_0 \ket{\Omega} = J_0^- J_0^+ \ket{\Omega} = - J_0^- \ket{\Omega} = -\ket{0}, \qquad 
\ahol{L}_0 \ket{\Omega} = \ahol{J}_0^- \ahol{J}_0^+ \ket{\Omega} = - \ahol{J}_0^- \ket{\Omega} = -\ket{0}.
\end{equation}

\subsection{Correlation Functions} \label{sec:corr}

Correlation functions and \opes{} for symplectic fermions, including twist fields, were computed by 
Kausch \cite{Kausch:2000fu}.  The \ope{} of the logarithmic partner $\tfunc{\nonch{\Omega}}{z,\ahol{z}}$ of the vacuum with itself is logarithmic,
\begin{equation}
\func{\nonch{\Omega}}{z,\ahol{z}} \func{\nonch{\Omega}}{w,\ahol{w}} = -\brac{A + \log \abs{z-w}^2}^2 - 2 \brac{A + \log \abs{z-w}^2} \func{\nonch{\Omega}}{w,\ahol{w}} + \cdots,
\end{equation}
and the same is true for the twist field $\tfunc{\nonch{\mu}_{1/2}}{z,\ahol{z}}$ with itself:
\begin{equation}\label{eq:OPEmumu}
\func{\nonch{\mu}_{1/2}}{z,\ahol{z}} \func{\nonch{\mu}_{1/2}}{w, \ahol{w}} = -\abs{z-w}^{1/2} \brac{\func{\nonch{\Omega}}{w,\ahol{w}} + \log \abs{z-w}^2 + \text{const}} + \cdots
\end{equation}
In the triplet theory, the twist field generates the bulk triplet module $\TripIrr{-1/8}\otimes \TripIrr{-1/8}$, while the logarithmic partner of the identity is associated to the top composition factor $\TripIrr{0}\otimes \TripIrr{0}$ of the bulk indecomposable.  These \opes{} are, of course, consistent with the fusion rules and the bulk state space explained in last section.  We will now outline an efficient means of computing the correlation functions that imply \eqref{eq:OPEmumu} (using a different approach to Kausch).

Viewing symplectic fermions as the \WZW{} model of the Lie supergroup $\SLSG{PSL}{1}{1}$, it is natural to describe the theory by passing to a first order formulation as in, for example, \cite{Quella:2007hr}.  For symplectic fermions, this idea has been applied in \cite{Creutzig:2008an}. The picture is sketched as follows:
\begin{equation}
\pd \nonch{\theta}^+ \apd \nonch{\theta}^- \xrightarrow{\text{\(1^{\text{st}}\) order}} \nonch{b}^+ \pd \nonch{\theta}^+ + \nonch{b}^- \apd \nonch{\theta}^- + \nonch{b}^+ \nonch{b}^- \xrightarrow{\text{bosonisation}} -\pd \nonch{\varphi} \apd \nonch{\varphi} + \ee^{-\nonch{\varphi}} + \text{linear dilaton}.
\end{equation}
This first order formulation, or its equivalent bosonisation, is well suited to computing correlation functions perturbatively.  The symplectic fermion fields are recovered by decomposing $\ee^{\pm \func{\varphi}{z ,\ahol{z}}} = \ee^{\pm \func{\varphi_L}{z}} \ee^{\pm \func{\varphi_R}{\ahol{z}}}$ with
\begin{equation}
e^{\func{\varphi_L}{z}}e^{-\func{\varphi_L}{w}}\sim \frac{1}{(z-w)}
\end{equation}
and so on.  The fields $\tfunc{J^+}{z} = \pd \ee^{\varphi_L(z)}$ and $\tfunc{J^-}{z} = \ee^{-\varphi_L(z)}$ commute with the zero-mode of $\ee^{-\varphi_L(z)}$ and they have the same \ope{} as symplectic fermions.  The (holomorphic) Virasoro field is
\begin{equation}
\func{T}{z} = \frac{1}{2} \normord{\pd \func{\nonch\varphi}{z, \ahol{z}} \pd \func{\nonch\varphi}{z, \ahol{z}}} + \frac{1}{2} \pd^2 \func{\nonch\varphi}{z, \ahol{z}}.
\end{equation}
The twist fields $\mu_{\lambda}$ are identified with the $\ee^{\lambda \varphi_L(z)}$ because the corresponding states have dimension $-\lambda \brac{1 - \lambda} / 2$ and the $J^\pm$ act on them with moding in $\ZZ \pm \lambda$.  The corresponding bulk field will be denoted by $\tfunc{\nonch V_{\lambda}}{z,\ahol{z}} = \ee^{\lambda \nonch\varphi(z, \ahol{z})}$. 

In this formalism, correlation functions are defined by
\begin{equation} \label{eq:DefCorr}
\corrfn{\func{\nonch V_{\alpha_1}}{z_1,\ahol{z}_1} \cdots \func{\nonch V_{\alpha_n}}{z_n,\ahol{z}_n}} = \corrfn{\func{\nonch V_{\alpha_1}}{z_1,\ahol{z}_1} \cdots \func{\nonch V_{\alpha_n}}{z_n,\ahol{z}_n} \ee^{-Q}}_0,
\end{equation}
where $\ee^{-Q}$ should be interpreted as the power series $1 - Q + \tfrac{1}{2} Q^2 - \cdots$,
\begin{equation} \label{eq:DefScreen}
Q = \int_\CC \ee^{-\nonch\varphi (z,\ahol{z})} \: \frac{\dd z \dd \ahol{z}}{2 \pi},
\end{equation}
and the correlators in the free theory, denoted with a subscript $0$, are standard free boson correlators subject to the charge conservation condition
\begin{equation}
\corrfn{\func{\nonch V_{\alpha_1}}{z_1,\ahol{z}_1} \cdots \func{\nonch V_{\alpha_n}}{z_n,\ahol{z}_n}}_0 = -\delta_{\alpha_1 + \cdots + \alpha_n = 1} \prod_{i<j} \abs{z_i-z_j}^{2 \alpha_i \alpha_j}. 
\end{equation}
Because $Q$ carries charge $-1$, the interacting correlators can only be non-zero when the labels $\alpha_i$ sum to a positive integer and then only one term, that with the appropriate power of $Q$, contributes.  In this manner, we obtain the non-zero one-point and two-point functions
\begin{equation}
\corrfn{\func{\nonch V_{1}}{z_1,\ahol{z}_1}} =-1,\qquad
\corrfn{\func{\nonch V_{\alpha}}{z_1,\ahol{z}_1} \func{\nonch V_{1 - \alpha}}{z_2,\ahol{z}_2}} = -\abs{z_1-z_2}^{2\alpha \brac{1-\alpha}}.
\end{equation}

The logarithmic partner $\tfunc{\nonch\Omega}{z,\ahol{z}}$ of the identity $\tfunc{V_0}{z, \ahol{z}}$ cannot be obtained directly within this free field realisation.  However, we can use a regularisation trick to compute correlators with logarithmic singularities from which the existence of $\tfunc{\nonch\Omega}{z,\ahol{z}}$ can be surmised.  To do this, we consider the field $\func{\nonch V_1}{z, \ahol{z}}$ and the three-point functions of the form $\func{C_{\alpha \beta \gamma}}{z,\ahol{z}} = \corrfn{\func{\nonch V_{\alpha}}{z,\ahol{z}} \func{\nonch V_{\beta}}{1,1} \func{\nonch V_{\gamma}}{0,0}}$, with $\alpha + \beta + \gamma = 2$.  The latter may be computed using the famous Fateev-Dotsenko integral \cite{DotCon84}:
\begin{align} \label{eq:SFcor}
\func{C_{\alpha \beta \gamma}}{z,\ahol{z}} &= -\abs{z}^{2 \alpha \gamma} \abs{z-1}^{2 \alpha \beta} \int_\CC \abs{z-w}^{-2 \alpha} \abs{1-w}^{-2 \beta} \abs{w}^{-2 \gamma} \: \frac{\dd z \dd \ahol{z}}{2 \pi } \notag \\
&= -\abs{z}^{2 \brac{\alpha \gamma + \beta - 1}} \abs{1-z}^{2 \brac{\alpha \beta + \gamma - 1}} \frac{\func{\Gamma}{1-\alpha} \func{\Gamma}{1-\beta} \func{\Gamma}{1-\gamma}}{\func{\Gamma}{\alpha} \func{\Gamma}{\beta} \func{\Gamma}{\gamma}}.
\end{align}
These correlators diverge when $\func{\nonch V_1}{z, \ahol{z}}$ is one of their fields, in which case, we regularise by assuming that the physical field is the $\eps \to 0$ limit of an $\eps$-dependent linear combination of $V_{1+\eps}$ and $V_{\eps}$.  For example,
\begin{align}
\func{C_{\alpha, 1-\alpha, 1}}{z,\ahol{z}}_{\text{reg.}} 
&= -\lim_{\eps \to 0} \Bigl( \func{C_{\alpha, 1-\alpha-\eps, 1+\eps}}{z,\ahol{z}} - \func{\Gamma}{-\eps} \func{C_{\alpha, 1-\alpha-\eps, \eps}}{z,\ahol{z}} \Bigr) \notag \\
&= -\lim_{\eps \to 0} \biggl( \abs{z}^{2 \alpha \eps} \abs{1-z}^{2 \alpha \brac{1-\alpha-\eps}}
\Bigl( \frac{\func{\Gamma}{1-\alpha} \func{\Gamma}{\alpha + \eps} \func{\Gamma}{-\eps}}{\func{\Gamma}{\alpha} \func{\Gamma}{1-\alpha-\eps} \func{\Gamma}{1+\eps}} \frac{\abs{1-z}^{2 \eps}}{\abs{z}^{2 \eps}} - \func{\Gamma}{-\eps} \Bigr) \biggr) \notag \\
&= -\lim_{\eps \to 0} \biggl( \abs{1-z}^{2 \alpha \brac{1-\alpha}} \eps \func{\Gamma}{-\eps} \Bigl( \func{\psi}{\alpha} + \func{\psi}{1-\alpha} - \func{\psi}{1} + \brac{\alpha-1} \log \abs{z}^2 \Bigr. \biggr. \notag \\
&\mspace{240mu} \biggl. \Bigl. + \brac{1-\alpha} \log \abs{1-z}^2 - \alpha \log \abs{z}^2 + \alpha \log \abs{1-z}^2 \Bigr) \biggr) \notag \\
&= -\abs{1-z}^{2 \alpha \brac{1-\alpha}} \Bigl( \func{\psi}{\alpha} + \func{\psi}{1-\alpha} - \func{\psi}{1} - \log \abs{z}^2 + \log \abs{1-z}^2 \Bigr).
\end{align}
Here, $\func{\psi}{x} = \func{\Gamma'}{x} / \func{\Gamma}{x}$ is the logarithmic derivative of $\Gamma(x)$.  
Setting $\alpha$ to $\tfrac{1}{2}$, we recover the correlator which gives the coefficient of the identity in the \ope{} \eqref{eq:OPEmumu}.  We remark that this regularisation prescription cancels the pole in the divergent correlator by subtracting a correlator with the identity.  This is the same procedure that Gurarie and Ludwig employed to sidestep the well known ``$c \to 0$ catastrophe'' in \cite{GurCon04}.

\subsection{Further Developments} \label{sec:TripFuture}

The singlet algebras $\SingAlg{1,p}$ and triplet algebras $\TripAlg{1,p}$, with $p = 2, 3, \ldots$ and central charge $c = 1 - 6 \brac{p-1}^2 / p$, were first discovered by Kausch \cite{Kausch:1991}.  They are generated by $1$ and $3$ fields of dimension $2p-1$, respectively (along with the energy-momentum tensor).  The singlet, being a subalgebra of the triplet, has received relatively little attention, but the $\TripAlg{1,p}$-models have been considered by a variety of groups \cite{FucNon04,FeiMod06,CarNon06,GabFro08,PeaInt08,BusLus09,PeaGro10}.  In particular, we now have a complete picture of the $\TripAlg{1,p}$ spectrum, characters and fusion rules \cite{AdaTri08,NagTri11,TsuTen12}:  There are two projective irreducibles as well as $2 \brac{p-1}$ non-projective irreducibles whose projective covers have Loewy diagrams similar to those in \figref{fig:SFTripLoewy}.  These projective covers carry a non-diagonalisable action of $L_0$, so the $\TripAlg{1,p}$-models are \lcfts{}.

The representations of the singlet algebras $\SingAlg{1,p}$ were considered in \cite{AdaCla03,AdaLog07}.  Here, the results are a little less comprehensive:  There is a continuum of irreducibles whose characters are known, but fusion rules and indecomposable structures do not seem to have been settled.  Some modules with non-diagonalisable $L_0$-actions have been constructed, so the $\SingAlg{1,p}$-models are also logarithmic.  The modular properties of the singlet models will appear elsewhere \cite{CreW1p13}.

The $\brac{1,p}$ triplet algebras were generalised to $\TripAlg{q,p}$, with $p,q \in \ZZ_+$, $p>q \geqslant 2$, $\gcd \set{p,q} = 1$ and $c = 1 - 6 \brac{p-q}^2 / pq$, in \cite{FeiLog06}.  This is the central charge of the minimal model $\minmod{q}{p}$ and $\TripAlg{q,p}$ has, unlike $\TripAlg{1,p}$, a reducible but indecomposable vacuum module.  Indeed, the unique irreducible quotient of the triplet vacuum module is the minimal model vacuum module.  Perhaps because of this indecomposable structure, these triplet models have been intensively studied \cite{FeiKaz06,SemNot07,RasWEx08,GabFus09,WooFus10,AdaWal09,GabMod11,AdaExp12,TsuExt13}.  Again, there are two projective irreducibles, but now the (conjectured) projective covers of the other irreducibles can have Loewy diagrams which are significantly more complicated than the diamonds we have seen here, see \cite[App.~A.1]{GabMod11} for example.  One interesting feature here is that fusion rules involving certain irreducibles, notably the irreducible quotient of the vacuum, do not behave as expected \cite{GabFus09} (fusing with these does not define an ``exact functor'').  This means that the fusion operation does not define a ring structure on the Grothendieck \emph{group} generated by the irreducible characters.  Instead, it seems that one can quotient this group by the ideal of minimal model irreducibles and only then impose fusion.  It would be interesting to try to reconstruct this Grothendieck ring (or group) using the continuum Verlinde approach advocated here.

\newpage

\section{The Fractional Level \WZW{} Model $\AKMA{sl}{2}_{-1/2}$} \label{sec:SL2}

Our next example is based on an affine Kac-Moody symmetry.  As is well known, affine algebras give rise to rational \cfts{} when the level $k$ is a non-negative integer.  To get a logarithmic theory, we turn to non-integer levels.  In particular, we consider the Kac-Moody algebra $\AKMA{sl}{2}$ at fractional level $k=-\tfrac{1}{2}$.  This is one of the simplest of the admissible levels introduced by Kac and Wakimoto \cite{KacMod88} and, like the triplet model of \secref{sec:Trip}, the theory may be described as a subalgebra of a ghost theory.  We will start from these ghosts before turning to the relevant representations of $\AKMA{sl}{2}_{-1/2}$ and their characters.  Modular transformations are derived, leading once more to Grothendieck fusion rules which are compared to the fusion rules computed in \cite{RidFus10}.

\subsection{The $\beta \gamma$ Ghost System and its $\ZZ_2$-Orbifold $\AKMA{sl}{2}_{-1/2}$}

The $\beta \gamma$ ghost system is generated by four bosonic fields $\beta$, $\gamma$, $\ahol{\beta}$ and $\ahol{\gamma}$ whose action takes the form
\begin{equation}
S \Bigl[ \beta , \gamma ; \ahol{\beta} , \ahol{\gamma} \Bigr] = g \int \brac{\beta \apd \gamma + \ahol{\beta} \pd \ahol{\gamma}} \: \dd z \dd \ahol{z}.
\end{equation}
The equations of motion require $\beta$ and $\gamma$ to be holomorphic, while $\ahol{\beta}$ and $\ahol{\gamma}$ become antiholomorphic.  The usual symmetries under shifting by (anti)holomorphic fields lead to the \opes{}
\begin{equation}
\func{\beta}{z} \func{\beta}{w} \sim 0, \qquad \func{\beta}{z} \func{\gamma}{w} \sim \frac{1}{z-w}, \qquad \func{\gamma}{z} \func{\gamma}{w} \sim 0
\end{equation}
and their antiholomorphic analogues (which we shall mostly ignore).  The (holomorphic) energy-momentum tensor is given by
\begin{equation} \label{eq:BetaGammaT}
\func{T}{z} = \frac{1}{2} \sqbrac{\normord{\func{\beta}{z} \func{\pd \gamma}{z}} - \normord{\func{\pd \beta}{z} \func{\gamma}{z}}}
\end{equation}
and the central charge is $c=-1$.

The affine algebra $\AKMA{sl}{2}$ is recovered as the $\ZZ_2$-orbifold of the ghost theory.  We define
\begin{equation} \label{eq:BetaGammaSL2}
\func{e}{z} = \frac{1}{2} \normord{\func{\beta}{z} \func{\beta}{z}}, \qquad 
\func{h}{z} = -\normord{\func{\beta}{z} \func{\gamma}{z}}, \qquad 
\func{f}{z} = \frac{1}{2} \normord{\func{\gamma}{z} \func{\gamma}{z}}
\end{equation}
and note that these composite fields obey the \opes{}
\begin{equation}
\begin{aligned}
\func{e}{z} \func{e}{w} &\sim 0, \vphantom{\frac{-1}{\brac{z-w}^2}} \\
\func{f}{z} \func{f}{w} &\sim 0, \vphantom{\frac{1/2}{\brac{z-w}^2}}
\end{aligned}
\qquad
\begin{aligned}
\func{h}{z} \func{e}{w} &\sim \frac{2 \func{e}{w}}{z-w}, \vphantom{\frac{-1}{\brac{z-w}^2}} \\
\func{h}{z} \func{f}{w} &\sim \frac{-2 \func{f}{w}}{z-w}, \vphantom{\frac{1/2}{\brac{z-w}^2}}
\end{aligned}
\qquad
\begin{aligned}
\func{h}{z} \func{h}{w} &\sim \frac{-1}{\brac{z-w}^2}, \\
\func{e}{z} \func{f}{w} &\sim \frac{1/2}{\brac{z-w}^2} - \frac{\func{h}{w}}{z-w}.
\end{aligned}
\end{equation}
This indeed corresponds to $\AKMA{sl}{2}$ at level $k=-\tfrac{1}{2}$, but with respect to a basis $\set{e,h,f}$ of $\SLA{sl}{2}$ which differs from the standard basis in that $\comm{e}{f} = -h$ and $\killing{e}{f} = -1$ (where $\killing{\cdot}{\cdot}$ denotes the Killing form in the fundamental representation).%
\footnote{We also note that the energy-momentum tensor \eqref{eq:BetaGammaT} may be recovered from the standard Sugawara construction and \eqref{eq:BetaGammaSL2}.}  %
Whereas the standard basis defines a triangular decomposition of $\SLA{sl}{2}$ with the adjoint induced by the real form $\SLA{su}{2}$, this basis corresponds to $\SLA{sl}{2 ; \RR}$ \cite{RidSL208}.

The automorphisms of $\AKMA{sl}{2}$ which preserve the Cartan subalgebra spanned by $h_0$, $k$ and the Virasoro zero-mode $L_0$ are generated by the conjugation automorphism $\conjaut$ and the \emph{spectral flow} automorphism $\sfaut$.  The former is the lift of the non-trivial Weyl reflection of $\SLA{sl}{2}$, whereas the latter is a square root of the translation subgroup of the affine Weyl group (it corresponds to translating by a dual root rather than a coroot).  Both leave the level $k=-\tfrac{1}{2}$ invariant and the action on the generators of $\AKMA{sl}{2}$ is as follows:
\begin{equation} \label{eq:SL2Auts}
\begin{aligned}
\func{\conjaut}{e_n} &= f_n, \\ 
\func{\sfaut^{\ell}}{e_n} &= e_{n - \ell},
\end{aligned}
\qquad
\begin{aligned}
\func{\conjaut}{h_n} &= -h_n, \\
\func{\sfaut^{\ell}}{h_n} &= h_n + \tfrac{1}{2} \ell \delta_{n,0},
\end{aligned}
\qquad
\begin{aligned}
\func{\conjaut}{f_n} &= e_n, \\
\func{\sfaut^{\ell}}{f_n} &= f_{n+\ell},
\end{aligned}
\qquad
\begin{aligned}
\func{\conjaut}{L_0} &= L_0, \\
\func{\sfaut^{\ell}}{L_0} &= L_0 - \tfrac{1}{2} \ell h_0 - \tfrac{1}{8} \ell^2
\end{aligned}
\end{equation}
($\sfaut$ is referred to as spectral flow because it does not preserve conformal dimensions).  We remark that $\sfaut$ is induced from the spectral flow of the $\beta \gamma$ ghost algebra:  $\func{\sfaut^2}{\beta_n} = \beta_{n-1}$, $\func{\sfaut^2}{\gamma_n} = \gamma_{n+1}$.  The map $\sfaut$ would take the ghost algebra to a twisted sector, consistent with $\AKMA{sl}{2}_{-1/2}$ being an orbifold.

Twisting the action of $\AKMA{sl}{2}$ on a module $\mathcal{M}$ by $\conjaut$ or $\sfaut^{\ell}$, we obtain new modules denoted by $\conjmod{\mathcal{M}}$ and $\sfmod{\ell}{\mathcal{M}}$, respectively.  The first is precisely the module conjugate to $\mathcal{M}$ and we shall refer to the second as a ``spectral flow'' of $\mathcal{M}$.  Explicitly, the twisted algebra action defining $\conjmod{\mathcal{M}}$ and $\sfmod{\ell}{\mathcal{M}}$ is given by defining new states $\func{\conjaut}{\ket{v}} \in \conjmod{\mathcal{M}}$ and $\func{\sfaut^{\ell}}{\ket{v}} \in \sfmod{\ell}{\mathcal{M}}$, for each $\ket{v} \in \mathcal{M}$, and letting $\AKMA{sl}{2}$ act via
\begin{equation} \label{eq:SL2AutActions}
J \cdot \func{\conjaut}{\ket{v}} = \func{\conjaut}{\func{\conjaut^{-1}}{J} \ket{v}}, \qquad J \cdot \func{\sfaut^{\ell}}{\ket{v}} = \func{\sfaut^{\ell}}{\func{\sfaut^{-\ell}}{J} \ket{v}} \qquad \text{(\( J \in \AKMA{sl}{2} \)).}
\end{equation}

\subsection{Representation Theory of $\AKMA{sl}{2}_{-1/2}$} \label{sec:SL2Rep}

The vacuum module of the theory is taken to be the irreducible \hwm{} of $\AKMA{sl}{2}_{-1/2}$ with highest weight ($h_0$-eigenvalue) $0$.  The other \hwss{} consistent with this may be found using the \svs{} of the corresponding Verma module \cite{FeiAnn92}.  One such is $f_0 \ket{0}$ which never constrains the spectrum.  The only other \sv{} which is not a descendant of $f_0 \ket{0}$ turns out to have weight $4$ and conformal dimension $4$.  It is given explicitly by
\begin{equation}
\ket{\chi} = \brac{156 e_{-3} e_{-1} - 71 e_{-2}^2 + 44 e_{-2} h_{-1} e_{-1} - 52 h_{-2} e_{-1}^2 + 16 f_{-1} e_{-1}^3 - 4 h_{-1}^2 e_{-1}^2} \ket{0}.
\end{equation}
Because we insist upon the irreducibility of the vacuum module, the corresponding field and its modes must therefore decouple from the theory (vanish identically on physical states).  For the calculation to follow, it is convenient to consider instead the descendant $f_0^2 \ket{\chi}$ for which the corresponding field is
\begin{multline} \label{eq:SL2VacNullField}
64 \normord{e e f f} + 16 \normord{e h h f} - 136 \normord{e h \partial f} + 128 \normord{e \partial h f} - 12 \normord{e \partial^2 f} - 8 \normord{h h h h} \\
+ 200 \normord{\partial e h f} - 108 \normord{\partial e \partial f} + 8 \normord{\partial h h h} - 38 \normord{\partial h \partial h} + 156 \normord{\partial^2 e f} + 24 \normord{\partial^2 h h} - \partial^3 h.
\end{multline}
Acting with the zero-mode of this field on a \hws{} $\ket{v_{\lambda}}$ of weight $\lambda$, we arrive at a constraint on such states to be physical:
\begin{equation}
h_0 \brac{h_0 - 1} \brac{2 h_0 + 1} \brac{2 h_0 + 3} \ket{v_{\lambda}} = 0.
\end{equation}
It follows that a physical highest weight must be one of $\lambda = 0$, $1$, $-\tfrac{1}{2}$ or $-\tfrac{3}{2}$.  As this analysis also applies to \svs{}, one can conclude that the physical \hwms{} must also be irreducible.

The physical \hwms{} are therefore characterised by the $\SLA{sl}{2}$-module spanned by their ground states.  This ground state module has dimension $1$ and $2$ for $\lambda = 0$ and $1$, respectively.  We denote the corresponding irreducible $\AKMA{sl}{2}_{-1/2}$-modules by $\SLIrr{0}$ and $\SLIrr{1}$.  For $\lambda = -\tfrac{1}{2}$ and $-\tfrac{3}{2}$, the ground states span an infinite-dimensional (irreducible) module of $\SLA{sl}{2}$, a discrete series representation in fact, hence we will denote the $\AKMA{sl}{2}_{-1/2}$-irreducibles by $\SLDisc{-1/2 ; +}$ and $\SLDisc{-3/2 ; +}$, with the ``$+$'' serving to indicate that the space of ground states is generated by a \hws{} (for $\SLA{sl}{2}$).  We remark that the four irreducibles $\SLIrr{0}$, $\SLIrr{1}$, $\SLDisc{-1/2 ; +}$ and $\SLDisc{-3/2 ; +}$ exhaust the \emph{admissible} modules of Kac and Wakimoto \cite{KacMod88} for $k=-\tfrac{1}{2}$.  The conformal dimensions of their ground states are $0$, $\tfrac{1}{2}$, $-\tfrac{1}{8}$ and $-\tfrac{1}{8}$, respectively.

There are several reasons to be dissatisfied with this spectrum of physical \hwms{}.  The first is that while the characters of these admissible modules have good modular properties \cite{KacMod88b}, an application of the Verlinde formula leads to negative fusion coefficients \cite{KohFus88} (see \secref{sec:SL2Mod}).  The second, which is far more elementary, is that this spectrum is not closed under conjugation.  Indeed, while $\SLIrr{0}$ and $\SLIrr{1}$ are readily seen to be self-conjugate, the application of $\conjaut$ to $\SLDisc{-1/2 ; +}$ and $\SLDisc{-3/2 ; +}$ leads to new modules:
\begin{equation}
\conjmod{\SLDisc{-1/2 ; +}} = \SLDisc{1/2 ; -}, \qquad \conjmod{\SLDisc{-3/2 ; +}} = \SLDisc{3/2 ; -}.
\end{equation}
These new modules are labelled with ``$-$'' symbols to indicate that their ground states span lowest weight discrete series representations with lowest weights $\tfrac{1}{2}$ and $\tfrac{3}{2}$, respectively.  A third reason for dissatisfaction is that, as we shall see in \secref{sec:SL2Fus}, the spectrum of admissible \hwms{} is not closed under fusion.

To check the physicality of the \hwms{} $\SLDisc{1/2 ; -}$ and $\SLDisc{3/2 ; -}$, we should let the zero-mode of the null field \eqref{eq:SL2VacNullField} act on their ground states.  In fact, we may as well analyse this constraint for a general space of ground states, allowing the possibility of principal series $\SLA{sl}{2}$-representations as well.%
\footnote{The corresponding $\AKMA{sl}{2}$-modules are sometimes known as \emph{relaxed} \hwms{} \cite{FeiEqu98,SemEmb97}.  We will see that these relaxed modules are essential to the consistency of the theory.  For now, we only remark that relaxed modules are natural for (twisted) $\beta \gamma$ representations because the ghost zero-modes $\beta_0$ and $\gamma_0$ are bosonic, hence do not square to $0$.}  %
For this, we parametrise the $\SLA{sl}{2}$-representation comprising the ground states by $\mu \in \RR$ and label its states by $\lambda \in \RR$.  The action of the zero-modes (spanning $\SLA{sl}{2}$) on the ground states is then given by
\begin{equation}
\begin{aligned}
e_0 \ket{v_{\lambda}^{\mu}} &= \tfrac{1}{2} \brac{\lambda - \mu} \ket{v_{\lambda + 2}^{\mu}}, \\
f_0 \ket{v_{\lambda}^{\mu}} &= \tfrac{1}{2} \brac{\lambda + \mu} \ket{v_{\lambda - 2}^{\mu}},
\end{aligned}
\qquad
\begin{aligned}
h_0 \ket{v_{\lambda}^{\mu}} &= \lambda \ket{v_{\lambda}^{\mu}}, \\
L_0 \ket{v_{\lambda}^{\mu}} &= \tfrac{1}{6} \mu \brac{\mu + 2} \ket{v_{\lambda}^{\mu}}.
\end{aligned}
\end{equation}
We remark that ground states with $\lambda = \mu$ are highest weight, whereas those with $\lambda = -\mu$ are lowest weight.  It follows that if $\lambda \neq \pm \mu \bmod 2$, then the $\ket{v_{\lambda}^{\mu}}$, with $\mu$ fixed and $\lambda$ fixed modulo $2$, span an irreducible principal series representation of $\SLA{sl}{2}$.  It is not hard to verify that for a given $\lambda \neq \pm \mu \bmod 2$, the parameters $\mu$ and $-2 - \mu$ define isomorphic representations.  Thus, principal series irreducibles are distinguished by the value of $\lambda \bmod{2}$ and the eigenvalue of $L_0$ (or the quadratic Casimir).

Applying the zero-mode of the null field \eqref{eq:SL2VacNullField} to the state $\ket{v_{\lambda}^{\mu}}$ now gives
\begin{align} \label{eqnRelHWSCon}
0 &= \brac{64 f_0^2 e_0^2 + 16 f_0 h_0^2 e_0 - 192 f_0 h_0 e_0 + 180 f_0 e_0 - 8 h_0^4 - 8 h_0^3 + 10 h_0^2 + 6 h_0} \ket{v_{\lambda}^{\mu}} \notag \\
&= \brac{2 \mu + 1} \brac{2 \mu + 3} \brac{\mu \brac{\mu + 2} - 3 \lambda^2} \ket{v_{\lambda}^{\mu}}.
\end{align}
If $3 \lambda^2 = \mu \brac{\mu + 2}$, we note that each choice for $\mu$ yields at most two possibilities for $\lambda$.  The space of ground states is therefore one-dimensional, in which case $\lambda = 0$ and $\mu = 0$ or $-2$, or two-dimensional, in which case $\lambda = \pm 1$ and $\mu = 1$ or $-3$.  The conformal dimensions are $0$ and $\tfrac{1}{2}$, respectively.  If $\mu = -\tfrac{1}{2}$ or $-\tfrac{3}{2}$, then $\lambda \in \RR / 2 \ZZ$ is free, while the conformal dimension of the ground states is fixed to be $\tfrac{1}{6} \mu \brac{\mu + 2} = -\tfrac{1}{8}$.

In this way, we recover the admissible $\AKMA{sl}{2}_{-1/2}$-irreducibles $\SLIrr{0}$ and $\SLIrr{1}$.  We also deduce the physicality of \emph{any} irreducible module whose ground states have conformal dimension $-\tfrac{1}{8}$.  This includes the admissibles $\SLDisc{-1/2 ; +}$ and $\SLDisc{-3/2 ; +}$ as well as their conjugates $\SLDisc{1/2 ; -}$ and $\SLDisc{3/2 ; -}$.  However, we also obtain an uncountable family of physical relaxed \hwms{} whose ground states form a principal series $\SLA{sl}{2}$-representation of conformal dimension $-\tfrac{1}{8}$.  These modules are characterised by $\lambda \in \RR / 2 \ZZ$ and will be denoted by $\SLTyp{\lambda}$.  They are irreducible when $\lambda \neq \mu = \pm \tfrac{1}{2} \bmod{2}$, so we will refer to the $\SLTyp{\lambda}$ with this range of parameters $\lambda$ as being \emph{typical}.  The remaining $\SLTyp{\lambda}$, along with the other physical $\AKMA{sl}{2}_{-1/2}$-modules $\SLIrr{0}$, $\SLIrr{1}$, $\SLDisc{-1/2 ; +}$, $\SLDisc{-3/2 ; +}$, $\SLDisc{1/2 ; -}$ and $\SLDisc{3/2 ; -}$, are defined to be \emph{atypical}.

Finally, we remark that the atypical $\SLTyp{\lambda}$ (with $\lambda = \pm \tfrac{1}{2} \bmod{2}$) are indecomposable and their structure is not completely fixed by $\lambda$ and the conformal dimension of their ground states.  This deficiency may be overcome by affixing a label ``$+$'' or ``$-$'' to communicate whether the indecomposable has a highest or lowest weight $\SLA{sl}{2}$-state among its ground states.  In this way, we arrive at four distinct atypical indecomposables whose structures are specified by the following exact sequences:
\begin{equation} \label{ses:SL2Typ}
\begin{gathered}
\dses{\SLDisc{-1/2 ; +}}{}{\SLTyp{-1/2 ; +}}{}{\SLDisc{+3/2 ; -}}, \quad 
\dses{\SLDisc{-3/2 ; +}}{}{\SLTyp{+1/2 ; +}}{}{\SLDisc{+1/2 ; -}}, \\
\dses{\SLDisc{+1/2 ; -}}{}{\SLTyp{+1/2 ; -}}{}{\SLDisc{-3/2 ; +}}, \quad 
\dses{\SLDisc{+3/2 ; -}}{}{\SLTyp{-1/2 ; -}}{}{\SLDisc{-1/2 ; +}}.
\end{gathered}
\end{equation}

\subsection{Spectral Flow and Fusion} \label{sec:SL2Fus}

At positive-integer level $k$, the spectral flow automorphism $\sfaut$ acts on the set of integrable modules as the involution $\lambda \to k - \lambda$.  This is no longer true for fractional levels as we shall see.  For $k = -\tfrac{1}{2}$, one can check that the atypical irreducibles are related by spectral flow as follows:
\begin{equation} \label{eq:SL2AtypSF}
\sfmod{1}{\SLIrr{0}} = \SLDisc{-1/2 ; +}, \qquad \sfmod{1}{\SLIrr{1}} = \SLDisc{-3/2 ; +}, \qquad \sfmod{-1}{\SLIrr{0}} = \SLDisc{1/2 ; -}, \qquad \sfmod{-1}{\SLIrr{1}} = \SLDisc{3/2 ; -}.
\end{equation}
This is most easy checked by computing the action of $\sfaut^{\pm 1}$ on the \emph{extremal} states of a module, though one can also use the character formulae discussed in \secref{sec:SL2Mod}.  Extremal states are defined to be those whose conformal dimensions are minimal among all states sharing their weight.  For example, $e_{-1}^j \ket{0}$ and $f_{-1}^j \ket{0}$ are the extremal states of $\SLIrr{0}$ and \eqref{eq:SL2Auts} and \eqref{eq:SL2AutActions} give
\begin{equation}
\begin{split}
h_0 \func{\sfaut^{\pm 1}}{e_{-1}^j \ket{0}} &= \func{\sfaut^{\pm 1}}{\brac{h_0 \mp \tfrac{1}{2}} e_{-1}^j \ket{0}} = \brac{2j \mp \tfrac{1}{2}} \func{\sfaut^{\pm 1}}{e_{-1}^j \ket{0}}, \\
L_0 \func{\sfaut^{\pm 1}}{e_{-1}^j \ket{0}} &= \func{\sfaut^{\pm 1}}{\brac{L_0 \pm \tfrac{1}{2} h_0 - \tfrac{1}{8}} e_{-1}^j \ket{0}} = \brac{j \pm j - \tfrac{1}{8}} \func{\sfaut^{\pm 1}}{e_{-1}^j \ket{0}}.
\end{split}
\end{equation}
Thus, $\sfaut^{\pm 1}$ shifts the weight uniformly by $\mp \tfrac{1}{2}$, whereas $\sfaut$ increases the conformal dimension of $e_{-1}^j \ket{0}$ by $2j - \tfrac{1}{8}$ and $\sfaut^{-1}$ increases it by $-\tfrac{1}{8}$.  A similar calculation describes the spectral flow of the states $f_{-1}^j \ket{0}$.

More interesting is the question of what happens if we iterate the spectral flow.  For example,
\begin{equation}
L_0 \func{\sfaut^2}{f_{-1}^j \ket{0}} = -\brac{j + \tfrac{1}{2}} \func{\sfaut^2}{f_{-1}^j \ket{0}}
\end{equation}
shows that the conformal dimensions of the states of $\sfmod{2}{\SLIrr{0}}$ are \emph{unbounded below}.  The same is true for $\sfmod{\ell}{\SLIrr{0}}$ and $\sfmod{\ell}{\SLIrr{1}}$, when $\ell \neq 0, \pm 1$, and $\sfmod{\ell}{\SLTyp{\lambda}}$, when $\ell \neq 0$.  Because algebra automorphisms map physical modules to physical modules, we conclude that almost all of the physical modules of $\AKMA{sl}{2}_{-1/2}$ have the property that the conformal dimensions of their states have no lower bound.  We illustrate the weights and conformal dimensions of the states of these modules in \figref{fig:SLSpec}.

{
\psfrag{L0}[][]{$\SLIrr{0}$}
\psfrag{L1}[][]{$\SLIrr{1}$}
\psfrag{E0}[][]{$\SLTyp{\lambda}$}
\psfrag{La}[][]{$\SLDisc{-1/2 ; +}$}
\psfrag{Lb}[][]{$\SLDisc{-3/2 ; +}$}
\psfrag{La*}[][]{$\SLDisc{1/2 ; -}$}
\psfrag{Lb*}[][]{$\SLDisc{3/2 ; -}$}
\psfrag{g}[][]{$\sfaut$}
\psfrag{00}[][]{$\scriptstyle \brac{0,0}$}
\psfrag{aa}[][]{$\scriptstyle \tbrac{-\tfrac{1}{2},-\tfrac{1}{8}}$}
\psfrag{bb}[][]{$\scriptstyle \tbrac{\tfrac{1}{2},-\tfrac{1}{8}}$}
\psfrag{cc}[][]{$\scriptstyle \tbrac{-1,-\tfrac{1}{2}}$}
\psfrag{dd}[][]{$\scriptstyle \tbrac{1,-\tfrac{1}{2}}$}
\psfrag{ee}[][]{$\scriptstyle \tbrac{1,\tfrac{1}{2}}$}
\psfrag{ff}[][]{$\scriptstyle \tbrac{-1,\tfrac{1}{2}}$}
\psfrag{gg}[][]{$\scriptstyle \tbrac{-\tfrac{3}{2},-\tfrac{1}{8}}$}
\psfrag{hh}[][]{$\scriptstyle \tbrac{\tfrac{1}{2},\tfrac{7}{8}}$}
\psfrag{ii}[][]{$\scriptstyle \tbrac{\tfrac{3}{2},-\tfrac{1}{8}}$}
\psfrag{jj}[][]{$\scriptstyle \tbrac{-\tfrac{1}{2},\tfrac{7}{8}}$}
\psfrag{kk}[][]{$\scriptstyle \brac{-2,-1}$}
\psfrag{ll}[][]{$\scriptstyle \brac{0,1}$}
\psfrag{mm}[][]{$\scriptstyle \brac{2,-1}$}
\psfrag{0e}[][]{$\scriptstyle \tbrac{\lambda,-\tfrac{1}{8}}$}
\psfrag{aq}[][]{$\scriptstyle \tbrac{\lambda-\tfrac{1}{2},\tfrac{\lambda}{2}-\tfrac{1}{4}}$}
\psfrag{cq}[][]{$\scriptstyle \tbrac{\lambda-1,\lambda-\tfrac{5}{8}}$}
\psfrag{bq}[][]{$\scriptstyle \tbrac{\lambda+\tfrac{1}{2},-\tfrac{\lambda}{2}-\tfrac{1}{4}}$}
\psfrag{dq}[][]{$\scriptstyle \tbrac{\lambda+1,-\lambda-\tfrac{5}{8}}$}
\begin{figure}
\begin{center}
\includegraphics[width=\textwidth]{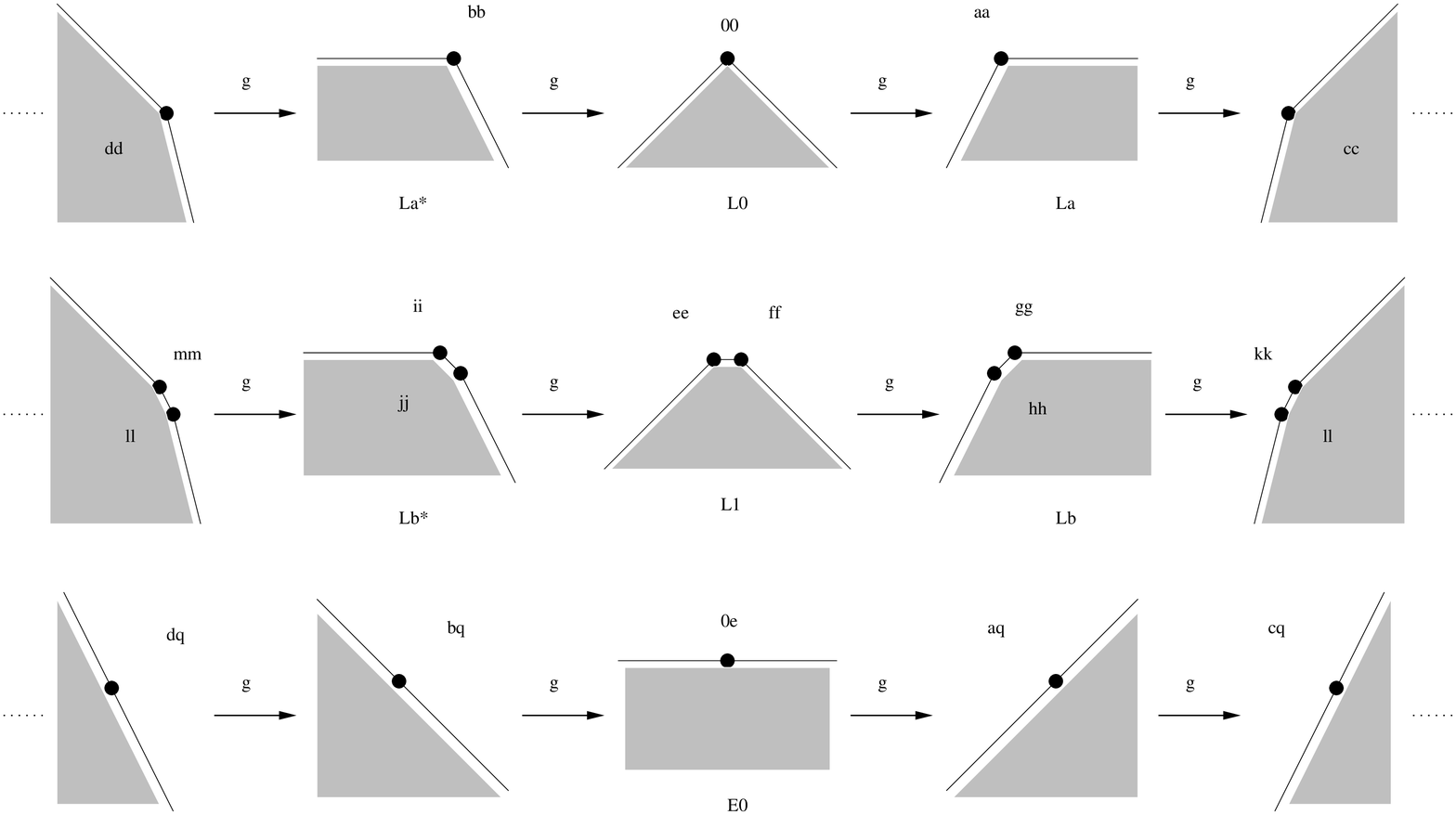}
\caption{\label{fig:SLSpec}Depictions of the physical irreducible $\AKMA{sl}{2}_{-1/2}$-modules.  Each labelled state declares its weight and conformal dimension (in that order).  Conformal dimensions increase from top to bottom and $\SLA{sl}{2}$-weights increase from right to left.}
\end{center}
\end{figure}
}

The full spectrum of irreducible modules then consists of the spectral flow images $\sfmod{\ell}{\SLIrr{0}}$, $\sfmod{\ell}{\SLIrr{1}}$ and $\sfmod{\ell}{\SLTyp{\lambda}}$, where $\ell \in \ZZ$ and $\lambda \neq \pm \tfrac{1}{2} \in \RR / 2 \ZZ$.  We extend the notion of typicality and atypicality so that it is preserved by spectral flow.  We then also have four families of atypical indecomposables $\sfmod{\ell}{\SLTyp{\pm 1/2 ; \pm}}$.  As twisting by automorphisms preserves module structure, the members of these families are described by applying $\sfaut^{\ell}$ to each module in the appropriate exact sequence of \eqref{ses:SL2Typ}.  This is a rather large collection of modules, but it is still not quite complete.

We turn now to the fusion of the irreducible $\AKMA{sl}{2}_{-1/2}$-modules identified above.  Because there are uncountably many of these, this seems a rather daunting undertaking.  However, two things work in our favour:  First, fusion and spectral flow are strongly believed to play nicely together in the sense that
\begin{equation} \label{eq:FusionAssumption}
\sfmod{\ell_1}{\mathcal{M}} \fuse \sfmod{\ell_2}{\mathcal{N}} = \sfmod{\ell_1 + \ell_2}{\brac{\mathcal{M} \fuse \mathcal{N}}}
\end{equation}
for all (physical) modules $\mathcal{M}$ and $\mathcal{N}$.  We know of no proof for this relation despite much evidence in its favour, but we will assume it in the following, so that we may restrict the fusion rules to the ``untwisted'' sector $\ell_1 = \ell_2 = 0$.  The second boon is that the \svs{} of the typical modules $\SLTyp{\lambda}$ are expressible as polynomial functions of $\lambda$.  This allows us to compute their fusion decompositions explicitly as functions of $\lambda$.

The fusion rules of the untwisted irreducibles were computed in \cite{RidFus10} using the Nahm-Gaberdiel-Kausch algorithm.  As expected, $\SLIrr{0}$ is the fusion identity and the other products are
\begin{equation} \label{FR:SL2Irr}
\begin{gathered}
\SLIrr{1} \fuse \SLIrr{1} = \SLIrr{0}, \qquad 
\SLIrr{1} \fuse \SLTyp{\lambda} = \SLTyp{\lambda + 1}, \\
\SLTyp{\lambda} \fuse \SLTyp{\mu} = 
\begin{cases}
\SLStag{\lambda + \mu} & \text{if \(\lambda + \mu = 0,1 \bmod{2}\),} \\
\sfmod{1}{\SLTyp{\lambda + \mu + 1/2}} \oplus \sfmod{-1}{\SLTyp{\lambda + \mu - 1/2}} & \text{otherwise.}
\end{cases}
\end{gathered}
\end{equation}
Here, the labels on the typicals $\SLTyp{\lambda}$ and the new atypicals $\SLStag{\lambda}$ must be taken modulo $2$.  These two additional modules $\SLStag{0}$ and $\SLStag{1}$ are indecomposables whose Loewy diagrams are given in \figref{fig:SL2Loewy}.  Their fusion rules are
\begin{equation} \label{FR:SL2Stag}
\SLIrr{1} \fuse \SLStag{\lambda} = \SLStag{\lambda + 1}, \qquad 
\begin{aligned}
\SLTyp{\lambda} \fuse \SLStag{\mu} &= \sfmod{-2}{\SLTyp{\lambda + \mu + 1}} \oplus 2 \: \SLTyp{\lambda + \mu} \oplus \sfmod{2}{\SLTyp{\lambda + \mu + 1}}, \\
\SLStag{\lambda} \fuse \SLStag{\mu} &= \sfmod{-2}{\SLStag{\lambda + \mu + 1}} \oplus 2 \: \SLStag{\lambda + \mu} \oplus \sfmod{2}{\SLStag{\lambda + \mu + 1}},
\end{aligned}
\end{equation}
showing that the fusion ring generated by the irreducibles closes upon adding $\SLStag{0}$, $\SLStag{1}$ and their spectral flows.

\begin{figure}
\begin{center}
\begin{tikzpicture}[thick,>=latex,
	nom/.style={circle,draw=black!20,fill=black!20,inner sep=1pt}
	]
\node (top0) at (0,1.5) [] {$\SLIrr{0}$};
\node (left0) at (-1.5,0) [] {$\sfmod{-2}{\SLIrr{1}}$};
\node (right0) at (1.5,0) [] {$\sfmod{2}{\SLIrr{1}}$};
\node (bot0) at (0,-1.5) [] {$\SLIrr{0}$};
\node (top1) at (5,1.5) [] {$\SLIrr{1}$};
\node (left1) at (3.5,0) [] {$\sfmod{-2}{\SLIrr{0}}$};
\node (right1) at (6.5,0) [] {$\sfmod{2}{\SLIrr{0}}$};
\node (bot1) at (5,-1.5) [] {$\SLIrr{1}$};
\node at (0,0) [nom] {$\SLStag{0}$};
\node at (5,0) [nom] {$\SLStag{1}$};
\draw [->] (top0) -- (left0);
\draw [->] (top0) -- (right0);
\draw [->] (left0) -- (bot0);
\draw [->] (right0) -- (bot0);
\draw [->] (top1) -- (left1);
\draw [->] (top1) -- (right1);
\draw [->] (left1) -- (bot1);
\draw [->] (right1) -- (bot1);
\end{tikzpicture}
\hspace{0.05\textwidth}
\begin{tikzpicture}[thick,>=latex,
	nom/.style={circle,draw=black!20,fill=black!20,inner sep=1pt}
	]
\node (top) at (0,1.5) [] {$\BGIrr{}$};
\node (left) at (-1.5,0) [] {$\sfmod{-2}{\BGIrr{}}$};
\node (right) at (1.5,0) [] {$\sfmod{2}{\BGIrr{}}$};
\node (bot) at (0,-1.5) [] {$\BGIrr{}$};
\node at (0,0) [nom] {$\BGStag{}$};
\draw [->] (top) -- (left);
\draw [->] (top) -- (right);
\draw [->] (left) -- (bot);
\draw [->] (right) -- (bot);
\end{tikzpicture}
\caption{\label{fig:SL2Loewy} Loewy diagrams for the indecomposable $\AKMA{sl}{2}_{-1/2}$-modules $\SLStag{0}$ and $\SLStag{1}$ (left) and the indecomposable $\beta \gamma$ ghost module $\BGStag{}$ (right).}
\end{center}
\end{figure}

We remark that $\SLStag{0}$ has the vacuum module $\SLIrr{0}$ as its socle.  One can check from \figref{fig:SLSpec} that the composition factors $\sfmod{\mp 2}{\SLIrr{1}}$ (which lie immediately above $\SLIrr{0}$ in the Loewy diagram of $\SLStag{0}$) have states $\ket{\theta^{\pm}}$ of weight $\pm 2$ and conformal dimension $-1$.  Indeed, the explicit construction of $\SLStag{0}$ as a fusion product allows us to choose normalisations for states so that $f_{-1} \ket{\theta^+} = e_{-1} \ket{\theta^-} = \ket{0}$.  We also note from this construction that the vacuum $\ket{0}$ is part of a rank $2$ Jordan block for $L_0$.  We denote the generalised eigenvector in this block by $\ket{\Omega}$, normalising it so that $L_0 \ket{\Omega} = \ket{0}$.  One can show that this fixes the structure of the indecomposable $\SLStag{0}$ uniquely \cite{RidFus10}.  In particular, one derives that $e_1 \ket{\Omega} = -\tfrac{1}{4} \ket{\theta^+}$ and $f_1 \ket{\Omega} = -\tfrac{1}{4} \ket{\theta^-}$.  In any case, the non-diagonalisable action of $L_0$ on $\SLStag{0}$ leads to logarithmic singularities in the two-point function of $\func{\Omega}{z}$.  $L_0$ is similarly non-diagonalisable when acting upon $\SLStag{1}$.  This confirms that $\AKMA{sl}{2}_{-1/2}$ is a \lcft{}.

Note that $\SLIrr{1}$, with its two ground states of dimension $\tfrac{1}{2}$, is observed to be a simple current of order $2$.  The corresponding simple current extension of $\AKMA{sl}{2}_{-1/2}$ is, of course, the $\beta \gamma$ ghost algebra \cite{RidSL208}.  Combining fusion orbits under $\SLIrr{1}$, we deduce that the ghost vacuum module $\BGIrr{} = \SLIrr{0} \oplus \SLIrr{1}$ has an indecomposable cover $\BGStag{} = \SLStag{0} \oplus \SLStag{1}$ upon which the ghosts act with half-integer moding (this is the untwisted sector).  This is analogous to the case of symplectic fermions analysed in \secref{sec:SF} and we give the Loewy diagram of $\BGStag{}$ in \figref{fig:SL2Loewy}.  Again, $L_0$ acts non-diagonalisably on $\BGStag{}$, hence the $\beta \gamma$ ghost theory is also logarithmic.

However, the untwisted sector for $\beta \gamma$ ghosts also contains other modules thanks to the existence of spectral flow.  In fact, it contains $\sfmod{2m}{\BGIrr{}}$, $\sfmod{2m}{\BGStag{}}$ and $\sfmod{2m+1}{\BGTyp{\lambda}}$, for $\lambda \in \RR / \ZZ$ and $m \in \ZZ$ (and $\BGTyp{\lambda} = \SLTyp{\lambda} \oplus \SLTyp{\lambda + 1}$).  Similarly, the twisted sector with integer moding contains $\sfmod{2m+1}{\BGIrr{}}$, $\sfmod{2m+1}{\BGStag{}}$ and $\sfmod{2m}{\BGTyp{\lambda}}$, again for $\lambda \in \RR / \ZZ$ and $m \in \ZZ$.  If one allows more general moding, then there is a continuum of sectors, each with a similar spectrum obtained by relaxing the constraint on $m$.

\subsection{Modular Transformations and the Verlinde Formula} \label{sec:SL2Mod}

The $\beta \gamma$ ghost theory is free, so it is easy to write the characters of its modules using a Poincar\'{e}-Birkhoff-Witt basis.  Indeed, the vacuum character is
\begin{equation}
\ch{\BGIrr{}} = \traceover{\BGIrr{}} y^k z^{h_0} z^{L_0 - c/24} = y^{-1/2} \prod_{n=1}^{\infty} \frac{1}{\brac{1 - z^{-1} q^{n-1/2}} \brac{1 - z q^{n-1/2}}} = y^{-1/2} \frac{\func{\eta}{q}}{\Jth{4}{z;q}},
\end{equation}
where we have included the $\AKMA{sl}{2}$-weight and level in preparation for deriving the corresponding affine characters.  The characters of its images under spectral flow may be obtained from
\begin{equation} \label{eqn:CharSF}
\fch{\sfmod{\ell}{\mathcal{M}}}{y;z;q} = \fch{\mathcal{M}}{y z^{\ell} q^{\ell^2 / 4} ; z q^{\ell / 2} ; q},
\end{equation}
which holds for an arbitrary module $\mathcal{M}$.  Combining this with the identifications \eqref{eq:SL2AtypSF} and the assertion that $\AKMA{sl}{2}_{-1/2}$ is the $\ZZ_2$-orbifold of the $\beta \gamma$ ghosts, we obtain the characters of the admissible $\AKMA{sl}{2}_{-1/2}$-modules:
\begin{equation}
\begin{aligned}
\ch{\SLIrr{0}} &= \frac{y^{-1/2}}{2} \sqbrac{\frac{\func{\eta}{q}}{\Jth{4}{z ; q}} + \frac{\func{\eta}{q}}{\Jth{3}{z ; q}}}, \\
\ch{\SLIrr{1}} &= \frac{y^{-1/2}}{2} \sqbrac{\frac{\func{\eta}{q}}{\Jth{4}{z ; q}} - \frac{\func{\eta}{q}}{\Jth{3}{z ; q}}},
\end{aligned}
\qquad
\begin{aligned}
\ch{\SLDisc{-1/2 ; +}} &= \frac{y^{-1/2}}{2} \sqbrac{\frac{-\ii \func{\eta}{q}}{\Jth{1}{z ; q}} + \frac{\func{\eta}{q}}{\Jth{2}{z ; q}}}, \\
\ch{\SLDisc{-3/2 ; +}} &= \frac{y^{-1/2}}{2} \sqbrac{\frac{-\ii \func{\eta}{q}}{\Jth{1}{z ; q}} - \frac{\func{\eta}{q}}{\Jth{2}{z ; q}}}.
\end{aligned}
\end{equation}
Unlike the triplet algebra characters studied in \secref{sec:Triplet}, these characters have good modular properties.

Indeed, with $y = \ee^{2 \pi \ii t}$, $z = \ee^{2 \pi \ii u}$ and $q = \ee^{2 \pi \ii \tau}$ as usual, the S-matrix with respect to the (ordered) basis $\set{\ch{\SLIrr{0}}, \ch{\SLIrr{1}}, \ch{\SLDisc{-1/2 ; +}}, \ch{\SLDisc{-3/2 ; +}}}$ is found to be symmetric and unitary:
\begin{equation}
\modS = \frac{1}{2} 
\begin{pmatrix}
1 & -1 & 1 & -1 \\
-1 & 1 & 1 & -1 \\
1 & 1 & \ii & \ii \\
-1 & -1 & \ii & \ii
\end{pmatrix}
.
\end{equation}
However, when one computes the conjugation matrix $\modC = \modS^2$ and the fusion matrices (using the standard Verlinde formula), trouble arises in the form of negative multiplicities \cite{KohFus88}:
\begin{equation}
\modC = 
\begin{pmatrix}
1 & 0 & 0 & 0 \\
0 & 1 & 0 & 0 \\
0 & 0 & 0 & -1 \\
0 & 0 & -1 & 0
\end{pmatrix}
, \quad 
\fusmat{\SLDisc{-1/2 ; +}} = 
\begin{pmatrix}
0 & 0 & 1 & 0 \\
0 & 0 & 0 & 1 \\
0 & -1 & 0 & 0 \\
-1 & 0 & 0 & 0
\end{pmatrix}
\quad \fusmat{\SLDisc{-3/2 ; +}} = 
\begin{pmatrix}
0 & 0 & 0 & 1 \\
0 & 0 & 1 & 0 \\
-1 & 0 & 0 & 0 \\
0 & -1 & 0 & 0
\end{pmatrix}
.
\end{equation}

To understand this paradox of negative multiplicities, one should first recall that the fusion ring generated by the admissibles and their conjugates contains all their spectral flows $\sfmod{\ell}{\SLIrr{0}}$ and $\sfmod{\ell}{\SLIrr{1}}$ as well.  However, the well known periodicities of theta functions lead to the following character identities:
\begin{equation} \label{eq:CharLinDep}
\ch{\sfmod{\ell - 1}{\SLIrr{0}}} + \ch{\sfmod{\ell + 1}{\SLIrr{1}}} = 
\ch{\sfmod{\ell - 1}{\SLIrr{1}}} + \ch{\sfmod{\ell + 1}{\SLIrr{0}}} = 0.
\end{equation}
The characters of the $\sfmod{\ell}{\SLIrr{\lambda}}$ are therefore linearly dependent \cite{LesSU202} and a basis for the span of these characters is precisely given by the characters of the admissibles.  The resolution to the paradox \cite{RidSL208} is then that the characters do not completely specify the module, because of \eqref{eq:CharLinDep}, hence the map from modules to characters is a projection.  For example, the conjugate of $\SLDisc{-1/2 ; +}$ is $\SLDisc{1/2 ; -}$, but we have
\begin{equation}
\ch{\SLDisc{1/2 ; -}} = \ch{\sfmod{-1}{\SLIrr{0}}} = -\ch{\sfmod{1}{\SLIrr{1}}} = -\ch{\SLDisc{-3/2 ; +}},
\end{equation}
which explains why $\modC$ asserts that the conjugate to $\SLDisc{-1/2 ; +}$ is $-\SLDisc{-3/2 ; +}$ (it is, as far as the characters are concerned).  Similarly, the fusion rule $\SLDisc{-3/2 ; +} \fuse \SLDisc{-3/2 ; +} = \sfmod{2}{\SLIrr{0}}$ translates into the Grothendieck rule
\begin{equation}
\ch{\SLDisc{-3/2 ; +}} \grfuse \ch{\SLDisc{-3/2 ; +}} = \ch{\sfmod{2}{\SLIrr{0}}} = -\ch{\SLIrr{1}},
\end{equation}
which explains why the Verlinde formula gives $\fuscoeff{\SLDisc{-3/2 ; +} \SLDisc{-3/2 ; +}}{\SLIrr{1}} = -1$.

Having recognised the source of negative multiplicities (the characters are not linearly independent), we can turn to fixing it.  As the reader may have guessed, this will involve the typical modules $\SLTyp{\lambda}$ introduced in \secref{sec:SL2Rep}.  The characters of these may be computed \cite{RidSL210} from those of the $\beta \gamma$ ghost modules $\BGTyp{\lambda}$ which are, again, easily deduced from Poincar\'{e}-Birkhoff-Witt bases:
\begin{equation}
\ch{\SLTyp{\lambda}} = \frac{y^{-1/2} z^{\lambda}}{\func{\eta}{q}^2} \sum_{n \in \ZZ} z^{2n} \qquad \Rightarrow \qquad \ch{\sfmod{\ell}{\SLTyp{\lambda}}} = \frac{y^{-1/2} z^{\lambda - \ell / 2} q^{\ell \brac{\lambda - \ell / 4} / 2}}{\func{\eta}{q}^2} \sum_{n \in \ZZ} z^{2n} q^{\ell n}.
\end{equation}
It is important to note that this character is not convergent for any $z \neq 0$, hence it does not define a meromorphic function.  Instead, we shall treat it as a distribution in the variables $t$, $u$ and $\tau$:
\begin{align} \label{ch:SL2Typ}
\ch{\sfmod{\ell}{\SLTyp{\lambda}}} 
&= \frac{\ee^{-\ii \pi t} \ee^{2 \pi \ii \brac{\lambda - \ell / 2} u} \ee^{\ii \pi \ell \brac{\lambda - \ell / 4} \tau}}{\func{\eta}{\tau}^2} \sum_{n \in \ZZ} \ee^{2 \pi \ii \brac{2u + \ell \tau} n} \notag \\
&= \frac{\ee^{-\ii \pi t} \ee^{\ii \pi \ell^2 \tau / 4}}{\func{\eta}{\tau}^2} \sum_{m \in \ZZ} \ee^{\ii \pi \brac{\lambda - \ell / 2} m} \func{\delta}{m = 2u + \ell \tau}.
\end{align}

This has a rather satisfying interpretation \cite{CreMod12}:  The character of $\sfmod{\ell}{\SLIrr{\lambda}}$ is, as a meromorphic function of $z$, only convergent in the annulus $\abs{q}^{-\ell + 1} < \abs{z} < \abs{q}^{-\ell - 1}$ (we also need $\abs{q} < 1$) because of poles at the annulus' boundaries \cite{LesSU202}.  One therefore realises that the linear dependencies \eqref{eq:CharLinDep} amount to summing characters which are defined on \emph{disjoint} annuli of convergence and obtaining $0$.  On the other hand, applying spectral flow to the exact sequences \eqref{ses:SL2Typ} gives the character relations
\begin{equation} \label{eq:CharLinIndep}
\ch{\sfmod{\ell - 1}{\SLIrr{0}}} + \ch{\sfmod{\ell + 1}{\SLIrr{1}}} = \ch{\sfmod{\ell}{\SLTyp{+1/2 ; \pm}}}, \qquad 
\ch{\sfmod{\ell - 1}{\SLIrr{1}}} + \ch{\sfmod{\ell + 1}{\SLIrr{0}}} = \ch{\sfmod{\ell}{\SLTyp{-1/2 ; \pm}}},
\end{equation}
from which we deduce that, in the distributional setting, summing these characters does not give $0$, but rather gives a distribution whose support is precisely the pole separating the two annuli of convergence.

In any case, \eqref{eq:CharLinIndep} shows that the atypical irreducible characters are not linearly dependent when treated as distributions.  To obtain distributional formulae for these characters, we splice the exact sequences \eqref{ses:SL2Typ} together to obtain resolutions for the atypical irreducibles $\sfmod{\ell}{\SLIrr{0}}$ and $\sfmod{\ell}{\SLIrr{1}}$ in terms of the atypical indecomposables $\sfmod{\ell}{\SLTyp{\pm 1/2 ; +}}$.  As with the singlet algebra $\SingAlg{1,2}$ in \secref{sec:Singlet}, these resolutions translate into alternating sums for the atypical characters:
\begin{equation} \label{ch:SL2Atyp}
\ch{\sfmod{\ell}{\SLIrr{\lambda}}} = \sum_{\ell' = 0}^{\infty} \brac{-1}^{\ell'} \ch{\sfmod{\ell + 2 \ell' + 1}{\SLTyp{\lambda + \ell' + 1/2 ; +}}} \qquad \text{(\(\lambda = 0,1\)).}
\end{equation}
Our (topological) basis for the space of characters will therefore be chosen to consist of those of the typical irreducibles $\sfmod{\ell}{\SLTyp{\lambda}}$, $\lambda \neq \pm \tfrac{1}{2} \bmod{2}$, supplemented by those of the atypical indecomposables $\sfmod{\ell}{\SLTyp{\pm 1/2 ; +}}$.

We therefore turn to S-transforming the characters \eqref{ch:SL2Typ} of the $\sfmod{\ell}{\SLTyp{\lambda}}$.  We find that
\begin{equation} \label{eqn:SL2TypS}
\Sch{\sfmod{\ell}{\SLTyp{\lambda}}} = \sum_{\ell' \in \ZZ} \int_{-1}^1 \modS_{\brac{\ell , \lambda} \brac{\ell' , \lambda'}} \ch{\sfmod{\ell'}{\SLTyp{\lambda'}}} \: \dd \lambda', \quad 
\modS_{\brac{\ell , \lambda} \brac{\ell' , \lambda'}} = \frac{1}{2} \frac{\abs{\tau}}{-\ii \tau} \ee^{\ii \pi \brac{\ell \ell' / 2 - \ell \lambda' - \ell' \lambda}},
\end{equation}
which is easily verified by expanding both sides, integrating and then summing.  Once again, the S-matrix is seen to be symmetric and unitary.  The $\tau$-dependent factor in the S-matrix entries originates from the homogeneity of delta functions, $\func{\delta}{ax=0} = \abs{a}^{-1} \func{\delta}{x=0}$, and is not particularly worrisome because it will cancel when transforming bulk characters (partition functions) and when applying the Verlinde formula.  Mathematically, it signifies that the space of characters carries a \emph{projective} representation of the modular group $\SLG{SL}{2 ; \ZZ}$ rather than a genuine one.

The modular S-transformation of the atypical irreducible characters is now computed from \eqref{ch:SL2Atyp} exactly as we did for the atypical singlet modules in \secref{sec:SingMod}.  The resulting S-matrix entries take the form
\begin{equation} \label{eqn:SL2AtypS}
\modS_{\atyp{\brac{\ell , \lambda}} \brac{\ell' , \lambda'}} = \sum_{\ell' = 0}^{\infty} \brac{-1}^{\ell'} \modS_{\brac{\ell + 2 \ell' + 1} \brac{\lambda + \ell' + 1/2}} = \frac{1}{2} \frac{\abs{\tau}}{-\ii \tau} \frac{\ee^{\ii \pi \brac{\ell \ell' / 2 - \ell \lambda' - \ell' \lambda}}}{2 \func{\cos}{\pi \lambda'}}.
\end{equation}
Here, as before, we distinguish atypical from typical labels by underlining the former (so $\atyp{\brac{\ell , \lambda}}$ stands for $\sfmod{\ell}{\SLIrr{\lambda}}$ with $\lambda = 0,1$).  Applying the continuous Verlinde formula, we can now easily obtain the Grothendieck fusion coefficients.  The calculations are straight-forward and very similar to those presented in \secref{sec:SingMod}, so we only report the resulting Grothendieck fusion rules:
\begin{equation}
\begin{gathered}
\ch{\sfmod{\ell}{\SLIrr{\lambda}}} \grfuse \ch{\sfmod{m}{\SLIrr{\mu}}} = \ch{\sfmod{\ell + m}{\SLIrr{\lambda + \mu}}}, \qquad 
\ch{\sfmod{\ell}{\SLIrr{\lambda}}} \grfuse \ch{\sfmod{m}{\SLTyp{\mu}}} = \ch{\sfmod{\ell + m}{\SLTyp{\lambda + \mu}}}, \\
\ch{\sfmod{\ell}{\SLTyp{\lambda}}} \grfuse \ch{\sfmod{m}{\SLTyp{\mu}}} = \ch{\sfmod{\ell + m+1}{\SLTyp{\lambda + \mu + 1/2}}} + \ch{\sfmod{\ell + m-1}{\SLTyp{\lambda + \mu - 1/2}}}.
\end{gathered}
\end{equation}
These agree perfectly with the Grothendieck versions of the fusion rules \eqref{FR:SL2Irr} for the irreducibles and may be checked to imply the Grothendieck versions of those \eqref{FR:SL2Stag} of the indecomposables.  We remark that these decompositions also confirm the Grothendieck version of the conjectured relation \eqref{eq:FusionAssumption}.

\subsection{Bulk Modular Invariants and State Spaces} \label{sec:SL2Bulk}

The symmetries of the S-matrix imply, as usual, that the diagonal partition function (and its charge-conjugate variant) is modular invariant:
\begin{equation} \label{eq:SL2PartFunc}
\func{\partfunc{diag.}}{q,\ahol{q}} = \sum_{\ell \in \ZZ} \int_{-1}^1 \ahol{\ch{\sfmod{\ell}{\SLTyp{\lambda}}}} \ch{\sfmod{\ell}{\SLTyp{\lambda}}} \: \dd \lambda.
\end{equation}
At the level of the quantum state space, this suggests a splitting into typical and atypical sectors:
\begin{equation}
\bulkstatespace = \Biggl[ \bigoplus_{\ell \in \ZZ} \underset{\lambda \neq \pm 1/2}{\directint_{-1}^1} \brac{\sfmod{\ell}{\SLTyp{\lambda}} \otimes \sfmod{\ell}{\SLTyp{\lambda}}} \: \dd \lambda \Biggr] \oplus \SLBulk{\text{atyp}}.
\end{equation}
Once again, the character of the atypical contribution $\SLBulk{\text{atyp}}$ may be put in the suggestive form
\begin{align}
\ch{\SLBulk{\text{atyp}}} &= \sum_{\ell \in \ZZ} \brac{\ahol{\ch{\sfmod{\ell}{\SLIrr{0}}}} \ch{\sfmod{\ell}{\SLStag{0}}} + \ahol{\ch{\sfmod{\ell}{\SLIrr{1}}}} \ch{\sfmod{\ell}{\SLStag{1}}}} \notag \\
&= \sum_{\ell \in \ZZ} \brac{\ahol{\ch{\sfmod{\ell}{\SLStag{0}}}} \ch{\sfmod{\ell}{\SLIrr{0}}} + \ahol{\ch{\sfmod{\ell}{\SLStag{1}}}} \ch{\sfmod{\ell}{\SLIrr{1}}}},
\end{align}
but, unlike the bulk triplet module of \secref{sec:TripBulk}, $\SLBulk{\text{atyp}} = \SLBulk{0} \oplus \SLBulk{1} \oplus \SLBulk{-1/2 ; +} \oplus \SLBulk{-3/2 ; +}$ is decomposable.  This follows immediately from considering the weights, modulo $2$, and conformal dimensions, modulo $1$, of the atypical states.  We have chosen to label the bulk atypicals as we did the admissible irreducibles $\SLIrr{0}$, $\SLIrr{1}$, $\SLDisc{-1/2 ; +}$ and $\SLDisc{-3/2 ; +}$ because the bulk modules are distinguished by which of the four admissibles has its character squared contributing to the bulk character.

A natural proposal for the structure of these bulk atypical modules is then to draw the Loewy diagrams of the contributing indecomposables $\sfmod{\ell}{\SLStag{\lambda}}$, tensoring each factor (on the left) with $\sfmod{\ell}{\SLIrr{\lambda}}$.  This defines the holomorphic structure of the bulk atypical and the antiholomorphic structure is added by identifying factors which combine to give $\sfmod{\ell}{\SLStag{\lambda}} \otimes \sfmod{\ell}{\SLIrr{\lambda}}$.  The resulting bulk Loewy diagram differs from that of the triplet model, pictured in \figref{fig:TripBulkMod}, in that there are now infinitely many composition factors ($\AKMA{sl}{2}_{-1/2}$ behaves more like the singlet model in this respect).  We illustrate a part of it in \figref{fig:SL2BulkLoewy}.  Note that the proposed structure suggests that spectral flow acts periodically on the atypical sector:
\begin{equation}
\cdots \overset{\sfaut}{\lra} \SLBulk{0} \overset{\sfaut}{\lra} \SLBulk{-1/2 ; +} \overset{\sfaut}{\lra} \SLBulk{1} \overset{\sfaut}{\lra} \SLBulk{-3/2 ; +} \overset{\sfaut}{\lra} \SLBulk{0} \overset{\sfaut}{\lra} \cdots.
\end{equation}
We further remark that these bulk modules are manifestly local ($L_0 - \ahol{L}_0$ may be diagonalised).

\begin{figure}
\begin{center}
\begin{tikzpicture}[thick,>=latex,scale=1.4]
\node (tr) at (12,4) [] {$\scriptstyle \sfmod{\ell - 2}{\SLIrr{\lambda - 1}} \otimes \sfmod{\ell - 2}{\SLIrr{\lambda - 1}}$};
\node (tm) at (8,4) [] {$\scriptstyle \sfmod{\ell}{\SLIrr{\lambda}} \otimes \sfmod{\ell}{\SLIrr{\lambda}}$};
\node (tl) at (4,4) [] {$\scriptstyle \sfmod{\ell + 2}{\SLIrr{\lambda + 1}} \otimes \sfmod{\ell + 2}{\SLIrr{\lambda + 1}}$};
\node (br) at (12,2) [] {$\scriptstyle \sfmod{\ell - 2}{\SLIrr{\lambda - 1}} \otimes \sfmod{\ell - 2}{\SLIrr{\lambda - 1}}$};
\node (bm) at (8,2) [] {$\scriptstyle \sfmod{\ell}{\SLIrr{\lambda}} \otimes \sfmod{\ell}{\SLIrr{\lambda}}$};
\node (bl) at (4,2) [] {$\scriptstyle \sfmod{\ell + 2}{\SLIrr{\lambda + 1}} \otimes \sfmod{\ell + 2}{\SLIrr{\lambda + 1}}$};
\node at (13,4) [] {$\cdots$};
\node (mrr) at (13,3) [] {$\cdots$};
\node at (13,2) [] {$\cdots$};
\node (mrl) at (11,3) [] {$\scriptstyle \sfmod{\ell - 2}{\SLIrr{\lambda - 1}} \otimes \sfmod{\ell}{\SLIrr{\lambda}}$};
\node (mmr) at (9,3) [] {$\scriptstyle \sfmod{\ell}{\SLIrr{\lambda}} \otimes \sfmod{\ell - 2}{\SLIrr{\lambda - 1}}$};
\node (mml) at (7,3) [] {$\scriptstyle \sfmod{\ell}{\SLIrr{\lambda}} \otimes \sfmod{\ell + 2}{\SLIrr{\lambda + 1}}$};
\node (mlr) at (5,3) [] {$\scriptstyle \sfmod{\ell + 2}{\SLIrr{\lambda + 1}} \otimes \sfmod{\ell}{\SLIrr{\lambda}}$};
\node at (3,4) [] {$\cdots$};
\node (mll) at (3,3) [] {$\cdots$};
\node at (3,2) [] {$\cdots$};
\draw [->] (tr) -- (mrr);
\draw [->] (mrr) -- (br);
\draw [->] (tr) -- (mrl);
\draw [->] (mrl) -- (br);
\draw [->] (tm) -- (mmr);
\draw [->] (mmr) -- (bm);
\draw [->] (tm) -- (mml);
\draw [->] (mml) -- (bm);
\draw [->] (tl) -- (mlr);
\draw [->] (mlr) -- (bl);
\draw [->] (tl) -- (mll);
\draw [->] (mll) -- (bl);
\draw [->,dotted] (tr) -- (mmr);
\draw [->,dotted] (mmr) -- (br);
\draw [->,dotted] (tm) -- (mrl);
\draw [->,dotted] (mrl) -- (bm);
\draw [->,dotted] (tm) -- (mlr);
\draw [->,dotted] (mlr) -- (bm);
\draw [->,dotted] (tl) -- (mml);
\draw [->,dotted] (mml) -- (bl);
\end{tikzpicture}
\caption{\label{fig:SL2BulkLoewy}A part of the (proposed) Loewy diagrams for the atypical bulk modules.  Taking $\lambda = 0 \bmod{2}$ and $\ell = 0, 1, 2, 3 \bmod{4}$ gives the diagrams for $\SLBulk{0}$, $\SLBulk{-1/2 ; +}$, $\SLBulk{1}$ and $\SLBulk{-3/2 ; +}$, respectively.}
\end{center}
\end{figure}

There are, of course, many simple currents with which we can try to construct additional modular invariants.  Indeed, the fusion rules \eqref{FR:SL2Irr}, coupled with \eqref{eq:FusionAssumption}, show that every atypical irreducible $\sfmod{\ell}{\SLIrr{\lambda}}$ is a simple current.  Requiring that the simple currents have fields of integer dimension restricts this to two families, $\sfmod{4m}{\SLIrr{0}}$ and $\sfmod{4m+2}{\SLIrr{1}}$, for $m \in \ZZ$.  However, the latter family is generated by the spectral flows of the bosonic ghost fields.  Because the ghost fields have dimension $\tfrac{1}{2}$ and their spectral flows have integer dimension, these flowed fields will be \emph{mutually fermionic}.  We do not therefore expect them to give rise to modular invariants.

The construction of the extended algebras proceeds as with the singlet algebra (\secref{sec:SingModInv}).  For each $n \in \ZZ$, there is an extended algebra $\ExtAlg{n} = \bigoplus_{m \in \ZZ} \sfmod{4mn}{\SLIrr{0}}$ and its untwisted typical modules are the $\ExtSLTyp{n ; \ell}{\lambda} = \bigoplus_{m \in \ZZ} \sfmod{4mn + \ell}{\SLTyp{\lambda}}$ with $2n \lambda \in \ZZ$ and $\ell = 0, 1, \ldots, 4m-1$.  The extended S-matrix is then
\begin{equation}
\Sch{\ExtSLTyp{n ; \ell}{j/2n}} = \sum_{j', \ell' = 0}^{4n-1} \ExtmodS{n}_{\brac{j,\ell} \brac{j',\ell'}} \ch{\ExtSLTyp{n ; \ell'}{j'/2n}}, \qquad \ExtmodS{n}_{\brac{j,\ell} \brac{j',\ell'}} = \frac{1}{4n} \frac{\abs{\tau}}{-\ii \tau} \ee^{\ii \pi \brac{\ell \ell' n - j' \ell - j \ell'} / 2n},
\end{equation}
and its evident symmetries immediately imply the modular invariance of the extended partition function
\begin{equation}
\partfunc{diag.}^{\brac{n}} = \sum_{j=0}^{4n-1} \sum_{\ell = 0}^{4n-1} \ahol{\ch{\ExtSLTyp{n ; \ell}{j/2n}}} \ch{\ExtSLTyp{n ; \ell}{j/2n}}.
\end{equation}
Unfortunately, the modular properties of the atypical extended characters again remain out of reach.

We conclude by mentioning that these extended partition functions may describe the level $-\tfrac{1}{2}$ \WZW{} models on the non-compact simple Lie group $\SLG{SL}{2 ; \RR}$.  More precisely, we recall that this group has centre $\ZZ_2$ and fundamental group $\ZZ$, so there are an infinite number of Lie groups having $\SLA{sl}{2 ; \RR}$ as their Lie algebra.  This includes the simply-connected universal cover of $\SLG{SL}{2 ; \RR}$, often referred to as $\grp{AdS}_3$, and the centreless adjoint group $\SLG{PSL}{2 ; \RR}$.  We propose that the partition function $\partfunc{diag.}$ of \eqref{eq:SL2PartFunc} describes strings on $\grp{AdS}_3$ with level $-\tfrac{1}{2}$, whereas $\partfunc{diag.}^{\brac{1}}$ describes strings on $\SLG{PSL}{2 ; \RR}$ (or $\SLG{SL}{2 ; \RR}$).  Whether this proposal is true or not, we remark that one can arrive at a consistent structure for the atypical sector of the quantum state space of these extended theories by identifying factors in \figref{fig:SL2BulkLoewy} which are identical except that their spectral flow indices differ by a multiple of $4n$.  This imposes a periodicity on the infinitely wide bulk Loewy diagrams so that the resulting diagrams have a finite number of composition factors.  For $n=1$, it is easy to check that each extended atypical diagram has eight factors and that the structure looks identical to the triplet bulk atypical pictured in \figref{fig:TripBulkMod}.

\subsection{Correlation Functions} \label{sec:corrsl2}

Correlation functions for $\AKMA{sl}{2}_{-1/2}$ may be computed using a strategy that is almost identical to that used for symplectic fermions.  We start by recalling the bosonisation of the ghost fields $\tfunc{\beta}{z}$ and $\tfunc{\gamma}{z}$.  Let $\tfunc{\varphi_L}{z}$ and $\tfunc{y_L}{z}$ be free bosons with \opes{}
\begin{equation}
\func{\varphi_L}{z} \func{\varphi_L}{w} \sim \log \brac{z-w}, \qquad 
\func{y_L}{z} \func{y_L}{w} \sim -\log \brac{z-w}.
\end{equation}
Then, the bosonisation amounts to the identifications
\begin{equation}
\func{\beta}{z} = \normord{\ee^{-\brac{\func{\varphi_L}{z} + \func{y_L}{z}}}}, \qquad 
\func{\gamma}{z} = \normord{\pd \func{\varphi_L}{z} \ee^{\func{\varphi_L}{z} + \func{y_L}{z}}}.
\end{equation}
We observe that these fields commute with the zero-mode of $\ee^{-\varphi_L(z)}$, so we conclude that we can use the same screening charge \eqref{eq:DefScreen} as for symplectic fermions.  The $\AKMA{sl}{2}_{-1/2}$ currents now take the bosonised form
\begin{equation}
\begin{gathered}
\func{e}{z} = \frac{1}{2} \normord{\func{\beta}{z} \func{\beta}{z}} = \frac{1}{2} \normord{\ee^{-2 \bigl( \func{\varphi_L}{z} + \func{y_L}{z} \bigr)}}, \qquad 
\func{h}{z} = \frac{1}{2} \normord{\func{\beta}{z} \func{\gamma}{z}} = \pd \func{y_L}{z}, \\
\func{f}{z} = \frac{1}{2} \normord{\func{\gamma}{z} \func{\gamma}{z}} = \frac{1}{2} \normord{\brac{\pd \func{\varphi}{z} \pd \func{\varphi}{z} - \pd^2 \func{\varphi}{z}} \ee^{2 \bigl( \func{\varphi_L}{z} + \func{y_L}{z} \bigr)}}.
\end{gathered}
\end{equation}
We will neglect the antiholomorphic currents which are constructed similarly.  In this free field realisation, it is natural to consider the following bulk fields:
\begin{equation}
\func{V_{\ell, \lambda; n}}{z,\ahol{z}} = \normord{\ee^{\brac{-\lambda+1/2} \func{\varphi}{z,\ahol{z}} + \brac{-\lambda + \ell/2} \func{y}{z,\ahol{z}} - n \tbrac{\func{\varphi_L}{z} + \func{y_L}{z}}}}.
\end{equation}
Here we have chosen a convenient labelling which requires $\ell \in \ZZ$ and $n \in 2 \ZZ$.  We compute the following \opes{} with the currents:
\begin{equation}
\begin{gathered}
\func{e}{z} \func{V_{\ell,\lambda;0}}{w,\ahol{w}} = \frac{\frac{1}{2}\func{V_{\ell,\lambda;2}}{w,\ahol{w}}}{\brac{z-w}^{1-\ell}} + \cdots, \qquad 
\func{h}{z} \func{V_{\ell,\lambda;0}}{w,\ahol{w}} = \frac{\brac{\lambda - \ell/2} \func{V_{\ell,\lambda;0}}{w,\ahol{w}}}{z-w} + \cdots, \\
\func{f}{z} \func{V_{\ell,\lambda;0}}{w,\ahol{w}} = \frac{\frac{1}{2} \brac{-\lambda + \tfrac{1}{2}} \brac{-\lambda + \tfrac{3}{2}} \func{V_{\ell,\lambda;-2}}{w,\ahol{w}}}{\brac{z-w}^{1+\ell}} + \cdots, \\
\func{T}{z} \func{V_{\ell,\lambda;0}}{w,\ahol{w}} = \frac{-\tfrac{1}{8} \brac{1 + \ell^2 - 4 \ell \lambda} \func{V_{\ell,\lambda;0}}{w,\ahol{w}}}{\brac{z-w}^2} + \frac{\pd \func{V_{\ell,\lambda;0}}{w,\ahol{w}}}{z-w} + \cdots.
\end{gathered}
\end{equation}
This means that the field $\tfunc{V_{\ell,\lambda;0}}{w,\ahol{w}}$ transforms as a ground state in the representation $\sfmod{\ell}{\SLTyp{\lambda}}$.  In general, $n$ parametrises the ground states of these representations, but we will restrict to $n=0$ for simplicity.

Correlation functions are defined exactly as in the symplectic fermion case, see \eqref{eq:DefCorr}, with correlators in the free theory again being free boson correlators, subject to a charge conservation condition:
\begin{multline}
\corrfn{\func{\nonch V_{\ell_1,\lambda_1;0}}{z_1,\ahol{z}_1} \cdots \func{\nonch V_{\ell_n,\lambda_n;0}}{z_n,\ahol{z}_n}}_0 \\
= -\delta_{\lambda_1 + \cdots + \lambda_n = \tfrac{1}{2} \brac{n-2}} \delta_{\ell_1 + \cdots + \ell_n = n-2} \prod_{i<j} \abs{z_i - z_j}^{2 \tbrac{ \brac{\lambda_i - 1/2} \brac{\lambda_j - 1/2} - \brac{\lambda_i - \ell_i / 2} \brac{\lambda_j - \ell_j / 2}}}.
\end{multline}
In this way, we arrive at the following non-zero one- and two-point functions:
\begin{equation}
\begin{split}
\corrfn{\func{\nonch V_{\ell,\lambda;0}}{z,\ahol{z}}} &= -\delta_{\lambda=-1/2} \delta_{\ell = -1}, \\
\corrfn{\func{\nonch V_{\ell,\lambda;0}}{z,\ahol{z}} \func{\nonch V_{\ell',\lambda';0}}{w,\ahol{w}}} &= -\delta_{\lambda + \lambda' = 0} \delta_{\ell + \ell' = 0} \abs{z-w}^{2 \brac{\brac{\lambda - \ell / 2}^2 - \lambda^2 + 1/4}},
\end{split}
\end{equation}
We remark that this one-point function is consistent with the vacuum appearing as a composition factor of the indecomposables $\sfmod{-1}{\SLTyp{-1/2 ; \pm}}$ and that the two-point function agrees with our notion of conjugation.  For the three-point functions, we again use the Fateev-Dotsenko integral formula to obtain
\begin{multline}
\corrfn{\func{\nonch V_{\ell,\lambda;0}}{z,\ahol{z}} \func{\nonch V_{m,\mu;0}}{1,1} \func{\nonch V_{n,\nu;0}}{0,0}} = \frac{-1}{\abs{z-1}^{2 \brac{h_{\ell,\lambda} + h_{m,\mu} - h_{n,\nu}}} \abs{z}^{2 \brac{h_{\ell,\lambda} - h_{m,\mu} + h_{n,\nu}}}} \\
\cdot \sqbrac{\delta_{\lambda + \mu + \nu = 1/2} \delta_{\ell + m + n = 1} + \frac{\func{\Gamma}{\tfrac{1}{2} + \lambda} \func{\Gamma}{\tfrac{1}{2} + \mu}\func{\Gamma}{\tfrac{1}{2} + \nu}}{\func{\Gamma}{\tfrac{1}{2} - \lambda} \func{\Gamma}{\tfrac{1}{2} - \mu} \func{\Gamma}{\tfrac{1}{2} - \nu}} \delta_{\lambda + \mu + \nu = -1/2} \delta_{\ell + m + n = -1}},
\end{multline}
where $h_{\ell, \lambda} = \tfrac{1}{2} \tbrac{\brac{\lambda - \tfrac{1}{2}} \brac{\lambda + \tfrac{1}{2}} - \brac{\lambda - \tfrac{1}{2} \ell}^2} = \tfrac{1}{2} \brac{\ell \lambda - \tfrac{1}{4} \brac{\ell^2 - 1}}$ is the conformal dimension of the (twisted) ground state of weight $\lambda - \ell/2$ in $\sfmod{\ell}{\SLTyp{\lambda}}$.  This time, there are singularities in the three-point function whenever $\lambda \in \ZZ + \tfrac{1}{2}$.  Regularising as in the case of symplectic fermions would give logarithmic correlators, thereby confirming the presence of the indecomposable modules $\sfmod{\ell}{\SLStag{0}}$ and $\sfmod{\ell}{\SLStag{1}}$.  These correlation functions may be checked to be consistent with the fusion rules \eqref{FR:SL2Irr}.

\subsection{Further Developments} \label{sec:SL2Future}

Unlike the triplet theories discussed in \secref{sec:TripFuture} and the superalgebra theories that we will consider in \secref{sec:GL11Future}, the \lcfts{} with admissible level affine algebra symmetries remain relatively unexplored.  Aside from $\AKMA{sl}{2}_{-1/2}$, reviewed above, the only other admissible theory to have received comparable treatment is $\AKMA{sl}{2}_{-4/3}$ \cite{GabFus01,CreMod12}.  Here, there are three admissible \hwms{}, all irreducible, which we denote (with the same conventions as used above) by $\SLIrr{0}$, $\SLDisc{-2/3 ; +}$ and $\SLDisc{-4/3 ; +}$.  Again, spectral flow acts and we obtain two infinite families because $\SLDisc{-4/3 ; +} = \sfmod{1}{\SLIrr{0}}$.  There are also typical irreducibles $\SLTyp{\lambda}$, with $\lambda \neq \pm \tfrac{2}{3} \bmod{2}$, and atypical indecomposables $\SLTyp{\pm 2/3 ; \pm}$, both of whose ground states have conformal dimension $-\tfrac{1}{3}$, as well as their spectral flows.

The Nahm-Gaberdiel-Kausch algorithm gives (untwisted) fusion rules including \cite{GabFus01}
\begin{equation} \label{FR:SL2Gab}
\SLDisc{2/3 ; -} \fuse \SLDisc{-2/3 ; +} = \SLIrr{0} \oplus \SLTyp{0}, \qquad 
\SLDisc{-2/3 ; +} \fuse \SLTyp{0} = \SLStag{-2/3 ; +}, \qquad 
\SLTyp{0} \fuse \SLTyp{0} = \SLTyp{0} \oplus \SLStag{0},
\end{equation}
with the vacuum module $\SLIrr{0}$ acting again as the fusion identity.  Here, $\SLStag{0}$ and $\SLStag{-2/3 ; +}$ are indecomposables with respective socles $\SLIrr{0}$ and $\SLDisc{-2/3 ; +}$ and the familiar diamond-shaped Loewy diagrams (see \cite{CreMod12}).  Both exhibit a non-diagonalisable action of $L_0$.  The modular properties of the characters of the typicals and atypicals are derived as for $k=-\tfrac{1}{2}$ with the resulting S-matrix entries being \cite{CreMod12}
\begin{equation}
\begin{gathered}
\modS_{\brac{\ell,\lambda} \brac{\ell',\lambda'}} = \frac{1}{2} \frac{\abs{\tau}}{-\ii \tau} \ee^{\ii \pi \brac{4 \ell \ell' / 3 - \ell \lambda' - \ell' \lambda}}, \\
\modS_{\atyp{\brac{\ell,0}} \brac{\ell',\lambda'}} = \frac{1}{2} \frac{\abs{\tau}}{-\ii \tau} \frac{\ee^{\ii \pi \ell \brac{4 \ell' / 3 - \lambda'}}}{1 + 2 \func{\cos}{\pi \lambda'}}, \qquad
\modS_{\atyp{\brac{\ell,-2/3}} \brac{\ell',\lambda'}} = \frac{\abs{\tau}}{-\ii \tau} \frac{\ee^{\ii \pi \brac{\ell + 1/2} \brac{4 \ell' / 3 - \lambda'}} \func{\cos}{\pi \lambda' / 2}}{1 + 2 \func{\cos}{\pi \lambda'}}.
\end{gathered}
\end{equation}
Applying the Verlinde formula then leads to
\begin{equation}
\SLTyp{\lambda} \fuse \SLTyp{\mu} = \sfmod{-1}{\SLTyp{\lambda + \mu - 4/3}} \oplus \SLTyp{\lambda + \mu} \oplus \sfmod{1}{\SLTyp{\lambda + \mu + 4/3}} \qquad \text{(\(\lambda + \mu \neq 0 , \pm \tfrac{2}{3} \bmod{2}\)).}
\end{equation}
At the atypical points $0$ and $\pm \tfrac{2}{3}$, the natural prediction is instead that
\begin{equation}
\SLTyp{\lambda} \fuse \SLTyp{-\lambda} = \SLTyp{0} \oplus \SLStag{0}, \qquad 
\SLTyp{\lambda} \fuse \SLTyp{-\lambda + 2/3} = \sfmod{1}{\SLTyp{0}} \oplus \SLStag{2/3 ; -}, \qquad
\SLTyp{\lambda} \fuse \SLTyp{-\lambda} = \sfmod{-1}{\SLTyp{0}} \oplus \SLStag{-2/3 ; +},
\end{equation}
where $\SLStag{2/3 ; -} = \sfmod{-1}{\SLStag{-2/3 ; +}} = \conjmod{\SLStag{-2/3 ; +}}$.  Moreover, it also suggests the fusion rule
\begin{equation}
\SLDisc{-2/3 ; +} \fuse \SLStag{0} = \sfmod{-1}{\SLTyp{0}} \oplus \SLStag{-2/3 ; +} \oplus \sfmod{2}{\SLTyp{0}},
\end{equation}
from which the remaining rules follow by using associativity.

The fact that Verlinde formulae for $\AKMA{sl}{2}_{-1/2}$ and $\AKMA{sl}{2}_{-4/3}$ have been successfully derived using the above formalism suggests that this will generalise to all admissible levels.  This will be detailed in \cite{CreMod13}.  Other affine algebras at admissible levels have not yet received much attention, though $\AKMA{sl}{3}$ was briefly addressed \cite{FurFus98}, in the days before spectral flow and indecomposability were realised to be critical, and some structure theory for admissible level $\AKMSA{sl}{2}{1}$ may be found in \cite{BowRep97,BowCha98,SemHig05}.  The link with the \WZW{} model on $\SLG{SL}{2 ; \RR}$ is interesting because it suggests that these theories may be logarithmic for general values of the level.  The famous articles \cite{MalStr01,MalStr01b,MalStr02} of Maldacena and Ooguri suggest no logarithmic structure for this model, though it could be argued that they did not look for any (see also \cite{BarMod11}).  Indeed, recent mathematical work \cite{FjeDua11} suggests that there may be more to this picture than was previously realised.

\section{\WZW{} Models with $\AKMSA{gl}{1}{1}$ Symmetry} \label{sec:GL11}

The \WZW{} model on $\SLSG{GL}{1}{1}$ is by far the best understood \cft{} associated to Lie supergroups. It was first studied by Rozansky and Saleur two decades ago \cite{Rozansky:1992rx,Rozansky:1992td} in two of the original landmark \lcft{} papers. More recently, this model was reconsidered by Saleur and Schomerus \cite{Schomerus:2005bf} who were able to compute correlation functions and propose a structure for the full bulk theory. Their computations revealed a striking similarity to the twist field correlators of the symplectic fermion theory, an observation that was explained in \cite{Creutzig:2008an}, which in turn was motivated by \cite{LeClair:2007aj}. The correlation functions of Saleur and Schomerus implicitly suggested fusion rules, which were then confirmed in \cite{Creutzig:2011cu}. This theory is also one of the few for which the boundary theory is thoroughly studied \cite{Creutzig:2007jy,Creutzig:2008ek,Creutzig:2009zz}, meaning that D-branes have been classified and boundary three-point correlation functions and bulk-boundary two-point functions are known. We follow \cite{Creutzig:2011cu} in reviewing this example.

\subsection{$\SLSA{gl}{1}{1}$ and its Representations} \label{sec:GL11Fin}

The Lie superalgebra $\SLSA{gl}{1}{1}$ is generated by two bosonic elements $N$ and $E$ and by two fermionic ones $\psi^\pm$ subject to the relations
\begin{equation} \label{eqngl11Rels}
\comm{N}{\psi^{\pm}} = \pm \psi^{\pm}, \qquad \acomm{\psi^+}{\psi^-} = E.
\end{equation}
It naturally acts on the super vector space $\CC^{1 \vert 1}$ and its elements are identified with supermatrices as follows:
\begin{equation} \label{eq:GL11DefRep}
N = \frac{1}{2} 
\begin{pmatrix}
1 & 0 \\
0 & -1
\end{pmatrix}
, \qquad E = 
\begin{pmatrix}
1 & 0 \\
0 & 1
\end{pmatrix}
, \qquad \psi^+ = 
\begin{pmatrix}
0 & 1 \\
0 & 0
\end{pmatrix}
, \qquad \psi^- = 
\begin{pmatrix}
0 & 0 \\
1 & 0
\end{pmatrix}
.
\end{equation}
The Killing form, corresponding to the supertrace form in the adjoint representation, is degenerate for $\SLSA{gl}{1}{1}$.  To obtain a non-degenerate bilinear form $\killing{\cdot}{\cdot}$, one instead takes the supertrace in the defining representation \eqref{eq:GL11DefRep}:
\begin{equation}
\killing{N}{E} = \killing{E}{N} = 1, \qquad \killing{\psi^+}{\psi^-} = -\killing{\psi^-}{\psi^+} = 1.
\end{equation}

Verma modules $\FinGLVer{n,e}$ are constructed from \hwss{} $\ket{\finite{v}_{n,e}}$ satisfying
\begin{equation}
E \ket{\finite{v}_{n,e}} = e \ket{\finite{v}_{n,e}}, \qquad N \ket{\finite{v}_{n,e}} = \brac{n+\tfrac{1}{2}} \ket{\finite{v}_{n,e}}, \qquad \psi^+ \ket{\finite{v}_{n,e}} = 0
\end{equation}
in the usual manner. However, as $\psi^-$ squares to $0$, all Verma modules are two-dimensional.%
\footnote{We remark at this point that the label $n$ of the Verma module $\FinGLVer{n,e}$ refers to the \emph{average} of the $N$-eigenvalues for this representation.  This average labelling convention for $n$ will be adhered to for all $\SLSA{gl}{1}{1}$-modules.}  %
Moreover, as $\psi^+ \psi^- \ket{\finite{v}_{n,e}} = e \ket{\finite{v}_{n,e}}$, the Verma module $\FinGLVer{n,e}$ is reducible if and only if $e=0$. Once again, irreducibility is the generic situation and hence modules with $e \neq 0$ are referred to as being typical. In the atypical case where $e=0$, there is an irreducible one-dimensional submodule spanned by $\psi^- \ket{\finite{v}_{n,0}}$ --- we shall denote it by $\FinGLAtyp{n-1/2}$ --- and the quotient $\FinGLAtyp{n+1/2}$ is also one-dimensional and irreducible: 
\begin{equation}
\dses{\FinGLAtyp{n-1/2}'}{}{\FinGLVer{n,0}}{}{\FinGLAtyp{n+1/2}}.
\end{equation}
Here, the prime attached to the submodule $\FinGLAtyp{n-1/2}$ serves to remind us that its \hws{} has the opposite parity to the highest weight generators of the other modules appearing in this sequence.

Typical irreducibles turn out to be projective in the category of all finite-dimensional modules upon which $N$ and $E$ act diagonalisably.  The atypical irreducibles have projective covers $\FinGLStag{n}$ which are generated by a vector $\ket{\finite{w}_n}$ satisfying $E \ket{\finite{w}_n} = 0$ and $N \ket{\finite{w}_n} = n \ket{\finite{w}_n}$. The fermionic generators $\psi^+$ and $\psi^-$ act freely on $\ket{\finite{w}_n}$, resulting in the four-dimensional representation illustrated in \figref{fig:fingl11}.  These atypical projectives naturally appear in the representation ring generated by the irreducibles under the (graded) tensor product:
\begin{equation} \label{eq:GL11RepRing}
\begin{gathered}
\FinGLAtyp{n} \otimes \FinGLAtyp{n'} = \FinGLAtyp{n+n'}, \qquad
\FinGLAtyp{n} \otimes \FinGLVer{n',e'} = \FinGLVer{n+n',e'}, \qquad
\FinGLAtyp{n} \otimes \FinGLStag{n'} = \FinGLStag{n+n'}, \\
\FinGLVer{n,e} \otimes \FinGLVer{n',e'} = 
\begin{cases}
\FinGLStag{n+n'}' & \text{if \(e+e'=0\),} \\
\FinGLVer{n+n'+1/2,e+e'} \oplus \FinGLVer{n+n'-1/2,e+e'}' & \text{otherwise,}
\end{cases}
\\
\FinGLVer{n,e} \otimes \FinGLStag{n'} = \FinGLVer{n+n'+1,e}' \oplus 2 \: \FinGLVer{n+n',e} \oplus \FinGLVer{n+n'-1,e}', \qquad
\FinGLStag{n} \otimes \FinGLStag{n'} = \FinGLStag{n+n'+1}' \oplus 2 \: \FinGLStag{n+n'} \oplus \FinGLStag{n+n'-1}'.
\end{gathered}
\end{equation}
The prime on the indecomposables $\FinGLStag{n}$ refers to the relative parity of the generating state $\ket{\finite{w}_n}$.  We remark that the Casimir $Q = NE + \psi^- \psi^+$ acts non-diagonalisably on $\FinGLStag{n}$:  $Q \ket{\finite{w}_n} = \psi^- \psi^+ \ket{\finite{w}_n}$.

\begin{figure}
\begin{center}
\begin{tikzpicture}[thick,>=latex,
	nom/.style={circle,draw=black!20,fill=black!20,inner sep=1pt}
	]
\node (top0) at (0,1.5) [] {$\ket{\finite{w}_n}$};
\node (left0) at (-1.5,0) [] {$\psi^+ \ket{\finite{w}_n}$};
\node (right0) at (1.5,0) [] {$\psi^- \ket{\finite{w}_n}$};
\node (bot0) at (0,-1.5) [] {$\psi^- \psi^+ \ket{\finite{w}_n}$};
\node (top1) at (6,1.5) [] {$\FinGLAtyp{n}$};
\node (left1) at (4.5,0) [] {$\FinGLAtyp{n+1}$};
\node (right1) at (7.5,0) [] {$\FinGLAtyp{n-1}$};
\node (bot1) at (6,-1.5) [] {$\FinGLAtyp{n}$};
\node at (0,0) [nom] {$\FinGLStag{n}$};
\node at (6,0) [nom] {$\FinGLStag{n}$};
\draw [->] (top0) to (left0);
\draw [->] (top0) to (right0);
\draw [->] (left0) to (bot0);
\draw [->] (right0) to (bot0);
\draw [->] (top1) -- (left1);
\draw [->] (top1) -- (right1);
\draw [->] (left1) -- (bot1);
\draw [->] (right1) -- (bot1);
\end{tikzpicture}
\caption{\label{fig:fingl11}The projective cover $\FinGLStag{n}$ of the $\SLSA{gl}{1}{1}$-module $\FinGLAtyp{n}$, illustrated by its states on the left and by its Loewy diagram on the right.}
\end{center}
\end{figure}
 
\subsection{The $\SLSG{GL}{1}{1}$ \WZW{} Model}

\WZW{} models on compact reductive Lie groups give rise to a natural family of rational \cfts{}.  Models on Lie supergroups may be constructed in the same manner. One starts with a supergroup-valued field $g$ and parametrises it using a ``Gauss-like'' decomposition. For $\SLSG{GL}{1}{1}$, this corresponds to
\begin{equation}
g = \ee^{c_- \psi^-} \ee^{XE + YN} \ee^{-c_+ \psi^+},
\end{equation}
so the fields of the theory are the two bosonic fields $\func{X}{z,\ahol{z}}$ and $\func{Y}{z,\ahol{z}}$ and the two fermionic fields $\func{c_{\pm}}{z,\ahol{z}}$. The standard \WZW{} action is then reduced, using the Polyakov-Wiegmann identity, to
\begin{equation} \label{eq:SWZW}
S_{\text{WZW}}[g] = \frac{k}{4 \pi} \int \brac{-\pd X \apd Y - \pd Y \apd X + 2 \ee^Y \pd c_+ \apd c_- } \: \dd z \dd \ahol{z},
\end{equation}
where $k$ is the level. Varying this action leads to the expected equations of motion:
\begin{equation}
\apd \func{J}{z,\ahol{z}} = 0, \qquad \pd \func{\ahol{J}}{z,\ahol{z}} = 0.
\end{equation}
Here, $J = k \pd g g^{-1}$ and $\ahol{J} = -k g^{-1} \apd g$ are Lie superalgebra-valued currents. In component form, they become
\begin{equation} \label{eq:holomorphiccurrents}
J^E = -k \pd Y, \qquad 
J^N = -k \pd X + k c_- \pd c_+ \ee^Y, \qquad 
J^- = k \ee^Y \pd c_+, \qquad 
J^+ = -k \pd c_- - k c_- \pd Y,
\end{equation}
and similarly for the anti-holomorphic current $\ahol{J}$.

Upon quantising, the modes of the holomorphic current satisfy the relations of the affine Kac-Moody superalgebra $\AKMSA{gl}{1}{1}$ at level $k$ (we will mostly ignore the antiholomorphic sector as usual): 
\begin{equation} \label{eq:affine}
\comm{J^E_r}{J^N_s} = r \delta_{r+s=0} k, \qquad 
\comm{J^N_r}{J^{\pm}_s} = \pm J^{\pm}_{r+s},\qquad 
\acomm{J^+_r}{J^-_s} = J^E_{r+s} + r \delta_{r+s=0} k.
\end{equation}
As in the case of the free boson (\secref{sec:FreeBoson}), any non-zero level $k$ can be rescaled to $1$.  We will assume in what follows that such a rescaling has been made.

The energy-momentum tensor $\func{T}{z}$ has $c=0$ and is given by a variant \cite{Rozansky:1992rx} of the Sugawara construction:%
\footnote{The Lie superalgebra $\SLSA{gl}{1}{1}$ is not simple and the space of invariant bilinear forms turns out to be two-dimensional. However, there is a unique choice that leads to a Virasoro field.}
\begin{equation} \label{eqnDefT}
\func{T}{z} = \frac{1}{2} \func{\normord{J^N J^E + J^E J^N - J^+ J^- + J^- J^+}}{z} + \frac{1}{2} \func{\normord{J^E J^E}}{z}.
\end{equation}
We note that, as in the case of $\AKMA{sl}{2}$ discussed in \secref{sec:SL2}, $\AKMSA{gl}{1}{1}$ possesses a family of spectral flow automorphisms $\sfaut^{\ell}$ which are indispensable to understanding its representation theory:
\begin{equation} \label{eq:GL11SpecFlow}
\func{\sfaut^{\ell}}{J^N_r} = J^N_r, \qquad 
\func{\sfaut^{\ell}}{J^E_r} = J^E_r - \ell  \delta_{r,0}, \qquad 
\func{\sfaut^{\ell}}{J^{\pm}_r} = J^{\pm}_{r \mp \ell}, \qquad 
\func{\sfaut^{\ell}}{L_0} = L_0 - \ell J^N_0.
\end{equation}
As before, these automorphisms may be used to construct new modules $\sfmod{\ell}{\mathcal{M}}$ from an arbitrary $\AKMSA{gl}{1}{1}$-module $\mathcal{M}$ by twisting the action on the states as in \eqref{eq:SL2AutActions}.

\subsection{Representation Theory of $\AKMSA{gl}{1}{1}$}

The representation theory of the affine algebra $\AKMSA{gl}{1}{1}$ is very similar to that of its horizontal subalgebra $\SLSA{gl}{1}{1}$.  We define affine Verma modules $\GLVer{n,e}$ and their irreducible quotients, as before, by defining a \hws{} $\ket{v_{n,e}}$ to be one satisfying%
\footnote{As with $\SLSA{gl}{1}{1}$-modules, the label $n$ parametrising modules refers to the average $J^N_0$-eigenvalue of the ground states.}
\begin{equation}
J^N_0 \ket{v_{n,e}} = \brac{n + \tfrac{1}{2}} \ket{v_{n,e}}, \qquad 
J^E_0 \ket{v_{n,e}} = e \ket{v_{n,e}}, \qquad 
J^+_0 \ket{v_{n,e}} = J^{\pm}_r \ket{v_{n,e}} = 0 \quad \text{(\(r>0\)).}
\end{equation} 
The conformal dimension of the \hws{} $\ket{v_{n,e}}$ is then (recalling that $k$ has been set to $1$)
\begin{equation} \label{eqnConfDim}
L_0 \ket{v_{n,e}} = \Delta_{n,e} \ket{v_{n,e}} = \tbrac{ne + \tfrac{1}{2} e^2} \ket{v_{n,e}}.
\end{equation}

It follows that every \sv{} of $\GLVer{n,0}$ has dimension $0$, leading to the non-split exact sequence
\begin{subequations} \label{ses:GL11Atyp}
\begin{equation} \label{ses:GL11Atyp1}
\dses{\GLAtyp{n-1/2,0}'}{}{\GLVer{n,0}}{}{\GLAtyp{n+1/2,0}},
\end{equation}
where the prime again indicates that the submodule's \hws{} has parity opposite to those of the other modules.  A simple counting argument \cite{Creutzig:2011cu} now shows that the $\GLVer{n,e}$ with $0 < \abs{e} < 1$ are irreducible.  By employing the spectral flow automorphisms of \eqref{eq:GL11SpecFlow}, one concludes that the Verma modules $\GLVer{n,e}$ with $e \notin \ZZ$ are irreducible (typical) and that the (atypical) case $e \in \ZZ$ yields reducible Verma modules.  Along with \eqref{ses:GL11Atyp1}, the (non-split) exact sequences turn out to be
\begin{equation}
\begin{aligned}
&\dses{\GLAtyp{n-1,e}'}{}{\GLVer{n,e}}{}{\GLAtyp{n,e}} & &\text{(\(e \in \ZZ_-\)),} \\
&\dses{\GLAtyp{n+1,e}'}{}{\GLVer{n,e}}{}{\GLAtyp{n,e}} & &\text{(\(e \in \ZZ_+\)).}
\end{aligned}
\end{equation}
\end{subequations}
Note that, once again, the vacuum module $\GLAtyp{0,0}$ is atypical.

We remark that, in contrast with the action of spectral flow on $\AKMA{sl}{2}$-modules, spectral flows of $\AKMSA{gl}{1}{1}$-modules do not have states whose conformal dimensions are unbounded below.  This can be traced back to the fact that each of the modes $J^{\pm}_r$ squares to $0$.%
\footnote{The corresponding statement for integrable $\affine{\alg{g}}$-modules, with $\affine{\alg{g}}$ a Kac-Moody algebra, may similarly be traced back to the special form of the \svs{} of the Verma covers of these modules.}  %
Indeed, character methods and some analysis of Verma modules and their contragredient duals allow one to identify the result of applying $\sfaut$:
\begin{equation}
\sfmod{1}{\GLVer{n,e}} = \GLVer{n-1,e+1}', \quad \text{if \(e \notin \ZZ\),} \qquad 
\sfmod{1}{\GLAtyp{n,e}} = 
\begin{cases}
\GLAtyp{n-1/2,0}', & \text{if \(e=-1\),} \\
\GLAtyp{n-1/2,1}, & \text{if \(e=0\),} \\
\GLAtyp{n-1,e+1}', & \text{otherwise.}
\end{cases}
\end{equation}
Note that this immediately explains how the irreducibility of the Verma modules with $e \notin \ZZ$ could be deduced from that of those with $0 < \abs{e} < 1$.

Finally, the fusion rules follow readily from the principles behind the Nahm-Gaberdiel-Kausch algorithm and the tensor product rules \eqref{eq:GL11RepRing} of $\SLSA{gl}{1}{1}$ (we ignore parity to avoid an overabundance of cases):
\begin{equation} \label{FR:GL11}
\begin{gathered}
\GLAtyp{n,e} \fuse \GLAtyp{n',e'} = \GLAtyp{n+n'-\func{\eps}{e,e'},e+e'}, \qquad 
\GLAtyp{n,e} \fuse \GLVer{n',e'} = \GLVer{n+n'-\func{\eps}{e},e+e'}, \\
\GLVer{n,e} \fuse \GLVer{n',e'} = 
\begin{cases}
\GLStag{n+n'+\func{\eps}{e+e'},e+e'}, & \text{if \(e+e' \in \ZZ\),} \\
\GLVer{n+n'+1/2,e+e'} \oplus \GLVer{n+n'-1/2,e+e'} & \text{otherwise,}
\end{cases}
\\
\begin{aligned}
\GLAtyp{n,e} \fuse \GLStag{n',e'} &= \GLStag{n+n'-\func{\eps}{e,e'},e+e'}, \\
\GLVer{n,e} \fuse \GLStag{n',e'} &= \GLVer{n+n'+1-\func{\eps}{e'},e+e'} \oplus 2 \: \GLVer{n+n'-\func{\eps}{e'},e+e'} \oplus \GLVer{n+n'-1-\func{\eps}{e'},e+e'}, \\
\GLStag{n,e} \fuse \GLStag{n',e'} &= \GLStag{n+n'+1-\func{\eps}{e,e'},e+e'} \oplus 2 \: \GLStag{n+n'-\func{\eps}{e,e'},e+e'} \oplus \GLStag{n+n'-1-\func{\eps}{e,e'},e+e'}.
\end{aligned}
\end{gathered}
\end{equation} 
Here, we have defined
\begin{equation}
\tfunc{\eps}{e} = \tfrac{1}{2} \tfunc{\sgn}{e}, \qquad 
\tfunc{\eps}{e,e'} = \tfunc{\eps}{e} + \tfunc{\eps}{e'} - \tfunc{\eps}{e+e'},
\end{equation}
with the convention that the sign function satisfies $\tfunc{\sgn}{0} = 0$.

The modules $\GLStag{n,e}$, with $e \in \ZZ$, generated by the above fusion rules may be constructed by inducing the $\SLSA{gl}{1}{1}$-module $\FinGLStag{n}$ and applying spectral flow.  They are indecomposable and, just as $\FinGLStag{n}$ carries a non-diagonalisable action of the Casimir $Q$, so the action of $L_0$ on $\GLStag{n,e}$ is non-diagonalisable.  The Loewy diagrams are given in \figref{fig:GL11AffLoewy}.  It is often stated that the $\GLStag{n,e}$ are the projective covers of the $\GLAtyp{n,e}$, presumably in the category of vertex algebra modules upon which $J^N_0$ and $J^E_0$ act diagonalisably, but we are not aware of a proof of this statement.

\begin{figure}
\begin{center}
\begin{tikzpicture}[thick,>=latex,
	nom/.style={circle,draw=black!20,fill=black!20,inner sep=1pt}
	]
\node (top1) at (6,1.5) [] {$\GLAtyp{n,e}$};
\node (left1) at (4.5,0) [] {$\GLAtyp{n+1,e}'$};
\node (right1) at (7.5,0) [] {$\GLAtyp{n-1,e}'$};
\node (bot1) at (6,-1.5) [] {$\GLAtyp{n,e}$};
\node at (6,0) [nom] {$\GLStag{n,e}$};
\draw [->] (top1) -- (left1);
\draw [->] (top1) -- (right1);
\draw [->] (left1) -- (bot1);
\draw [->] (right1) -- (bot1);
\end{tikzpicture}
\caption{\label{fig:GL11AffLoewy}The Loewy diagram of the indecomposable $\AKMSA{gl}{1}{1}$-module $\GLStag{n,e}$.}
\end{center}
\end{figure}

\subsection{Modular Transformations and the Verlinde Formula} \label{sec:GL11Mod}

For superalgebras, it is natural to work with supercharacters rather than characters.  Those of the Verma modules $\GLVer{n,e}$ are given by
\begin{align} \label{eq:GL11CharVerma}
\fsch{\GLVer{n,e}}{x;y;z;q} &= \straceover{\GLVer{n,\ell}} x y^{J^E_0} z^{J^N_0} q^{L_0 - c/24} = x y^e z^{n+1/2} q^{\Delta_{n,e}} \prod_{i=1}^{\infty} \frac{\brac{1 - z q^i} \brac{1 - z^{-1} q^{i-1}}}{\brac{1 - q^i}^2} \notag \\
&= \ii x y^e z^n q^{\Delta_{n,e}} \frac{\Jth{1}{z;q}}{\func{\eta}{q}^3},
\end{align}
where we recall that the level $k$ (of which $x$ is supposed to keep track) has been set to $1$.  With $x = \ee^{2 \pi \ii t}$, $y = \ee^{2 \pi \ii u}$, $z = \ee^{2 \pi \ii v}$ and $q = \ee^{2 \pi \ii \tau}$ and the S-transformation $\modS \colon \left( t \middle\vert u \middle\vert v \middle\vert \tau \right) \to \left( t - uv / \tau \middle\vert u / \tau \middle\vert v / \tau \middle\vert -1 / \tau \right)$, the induced transformation of the supercharacters may be computed by a double Gaussian integral. The resulting S-matrix entries are 
\begin{equation}
\Ssch{\GLVer{n,e}} = \int_{-\infty}^{\infty} \int_{-\infty}^{\infty} \modS_{(n,e),(n',e')} \sch{\GLVer{n',e'}} \: \dd n' \dd e', \qquad 
\modS_{(n,e),(n',e')} = -\ii \omega \ee^{-2 \pi \ii \brac{ne'+n'e+ee'}},
\end{equation}
where $\omega$ is a square root of minus one that depends upon how the analytic continuation is performed.  We note that the sign $-\ii \omega$ will cancel in bulk modular invariants and the Verlinde formula.

The exact sequences \eqref{ses:GL11Atyp} may now be used to construct resolutions for the atypicals in terms of Verma modules.  We consider only the vacuum module for brevity:
\begin{equation} \label{eq:GL11VacRes}
\cdots \lra \GLVer{-7/2,0}' \lra \GLVer{-5/2,0} \lra \GLVer{-3/2,0}' \lra \GLVer{-1/2,0} \lra \GLAtyp{0,0} \lra 0.
\end{equation}
Because the relative parities of the Verma modules alternate, the vacuum supercharacter is not an alternating sum of Verma characters.  The vacuum S-matrix entries are therefore
\begin{align}
\modS_{\atyp{\brac{0,0}}, \brac{n',e'}} &= \sum_{j=0}^{\infty} \modS_{\brac{-j-1/2, 0}, \brac{n',e'}} = -\ii \omega \sum_{j=0}^{\infty} \ee^{2 \pi \ii \brac{j+1/2} e'} = \frac{\ii \omega}{\ee^{\ii \pi e'} - \ee^{-\ii \pi e'}} = \frac{\omega}{2 \sin \sqbrac{\pi e'}}.
\end{align}
From this, and the continuous Verlinde formula, we can compute the Grothendieck fusion of the typicals:
\begin{align}
\fuscoeff{\brac{n,e}, \brac{n',e'}}{\brac{n'',e''}} &= -\int_{-\infty}^{\infty} \int_{-\infty}^{\infty} \ee^{-2 \pi \ii \bigl( (n+n'-n'') E + (e+e'-e'') (N+E) \bigr)} \brac{\ee^{\ii \pi E} - \ee^{\ii \pi E}} \: \dd N \dd E \notag \\
&= \func{\delta}{n''=n+n'+1/2} - \func{\delta}{n''=n+n'-1/2} \\
\Rightarrow \qquad \sch{\GLVer{n,e}} \grfuse &\sch{\GLVer{n',e'}} = \sch{\GLVer{n+n'+1/2,e+e'}} - \sch{\GLVer{n+n'-1/2,e+e'}}.
\end{align}
The minus sign appearing in this result indicates that the parity of $\GLVer{n+n'-1/2,e+e'}$ is opposite to that of $\GLVer{n+n'+1/2,e+e'}$, hence we would affix a prime to the former module.  This is therefore in perfect agreement with the genuine typical fusion rule reported in \eqref{FR:GL11}, the case when $e+e' \in \ZZ$ following from the character identity
\begin{equation}
\sch{\GLVer{n+n'+1/2,e+e'}} + \sch{\GLVer{n+n'-1/2,e+e'}'} = \sch{\GLStag{n+n'+\func{\eps}{e+e'},\ell+\ell'}},
\end{equation}
as may be checked using the exact sequences \eqref{ses:GL11Atyp} and the Loewy diagram in \figref{fig:GL11AffLoewy}.  The other Grothendieck fusion products may be checked in a similar fashion.  We remark that one can also use the modular properties of the characters, rather than the supercharacters, to compute Grothendieck fusion rules.  However, closure under the (projective) modular group action then requires the consideration of the characters and supercharacters of twisted $\AKMSA{gl}{1}{1}$-modules on which the generators act with half-integer moding.

\subsection{Bulk Modular Invariants and State Spaces}
 
The bulk state space of the $\SLSG{GL}{1}{1}$ \WZW{} model was first proposed by Saleur and Schomerus in \cite{Schomerus:2005bf}.  The result is
\begin{equation}
\bulkstatespace = \Biggl[ \: \underset{e \notin \ZZ}{\directint_{\RR^2}} \Bigl[ \brac{\GLVer{n,e} \otimes \GLVer{-n,-e}'} \Bigr] \: \dd n \dd e \Biggr] \oplus \GLBulk{\text{atyp}} ,
\end{equation}
where the atypical contributions further decompose as
\begin{equation}
\GLBulk{\text{atyp}} = \bigoplus_{e \in \ZZ} \directint_0^1 \GLBulkatyp{n,e} \: \dd n.
\end{equation}
We illustrate the structure of the indecomposable bulk atypicals $\GLBulkatyp{n,e}$ in \figref{fig:bulkGL}, noting that this proposal corresponds to the charge-conjugate partition function, rather than the diagonal one.  The modular invariance of the corresponding bulk super partition function now follows from the unitarity of the S-matrix and the symmetry $\modS_{\brac{n,e} \brac{n',e'}} = \modS_{\brac{-n,-e} \brac{-n',-e'}}$.
The diagonal super partition function is similarly invariant.  We remark that constructing an ordinary (non-super) modular invariant partition function requires introducing half-integer moded sectors \cite{Creutzig:2011cu} and the result is an invariant for the bosonic subtheory (an orbifold) of $\SLSG{GL}{1}{1}$.

\begin{figure}
\begin{center}
\begin{tikzpicture}[thick,>=latex,scale=1.4]
\node (tr) at (12,4) [] {$\scriptstyle\GLAtyp{n-1,\ell}  \otimes \GLAtyp{-n+1,-\ell}$};
\node (tm) at (8,4) [] {$\scriptstyle \GLAtyp{n,\ell}\otimes \GLAtyp{-n,-\ell}$};
\node (tl) at (4,4) [] {$\scriptstyle \GLAtyp{n+1,\ell} \otimes \GLAtyp{-n-1,-\ell}$};
\node (br) at (12,2) [] {$\scriptstyle \GLAtyp{n-1,\ell}  \otimes \GLAtyp{-n+1,-\ell}$};
\node (bm) at (8,2) [] {$\scriptstyle  \GLAtyp{n,\ell}\otimes \GLAtyp{-n,-\ell}$};
\node (bl) at (4,2) [] {$\scriptstyle \GLAtyp{n+1,\ell} \otimes \GLAtyp{-n-1,-\ell}$};
\node at (13,4) [] {$\cdots$};
\node (mrr) at (13,3) [] {$\cdots$};
\node at (13,2) [] {$\cdots$};
\node (mrl) at (11,3) [] {$\scriptstyle \GLAtyp{n,\ell} \otimes \GLAtyp{-n+1,-\ell}$};
\node (mmr) at (9,3) [] {$\scriptstyle \GLAtyp{n-1,\ell} \otimes \GLAtyp{-n,-\ell}$};
\node (mml) at (7,3) [] {$\scriptstyle \GLAtyp{n+1,\ell} \otimes \GLAtyp{-n,-\ell}$};
\node (mlr) at (5,3) [] {$\scriptstyle \GLAtyp{n,\ell} \otimes \GLAtyp{-n-1,-\ell}$};
\node at (3,4) [] {$\cdots$};
\node (mll) at (3,3) [] {$\cdots$};
\node at (3,2) [] {$\cdots$};
\draw [->] (tr) -- (mrr);
\draw [->] (mrr) -- (br);
\draw [->] (tr) -- (mrl);
\draw [->] (mrl) -- (br);
\draw [->] (tm) -- (mmr);
\draw [->] (mmr) -- (bm);
\draw [->] (tm) -- (mml);
\draw [->] (mml) -- (bm);
\draw [->] (tl) -- (mlr);
\draw [->] (mlr) -- (bl);
\draw [->] (tl) -- (mll);
\draw [->] (mll) -- (bl);
\draw [->,dotted] (tr) -- (mmr);
\draw [->,dotted] (mmr) -- (br);
\draw [->,dotted] (tm) -- (mrl);
\draw [->,dotted] (mrl) -- (bm);
\draw [->,dotted] (tm) -- (mlr);
\draw [->,dotted] (mlr) -- (bm);
\draw [->,dotted] (tl) -- (mml);
\draw [->,dotted] (mml) -- (bl);
\end{tikzpicture}
\caption{\label{fig:bulkGL}A part of the Loewy diagrams for the atypical bulk modules $\GLBulkatyp{n,e}$ corresponding to the charge-conjugate modular invariant partition function.  We have neglected to indicate relative parities because the conjugate of an atypical $\GLAtyp{n,e}$ changes parity unless $e=0$.}
\end{center}
\end{figure}

Other super partition functions which are modular invariant can be found using extended algebras. Every atypical irreducible $\GLAtyp{n,e}$ is a simple current, by \eqref{FR:GL11}.  One can therefore construct a large variety of extended algebras, among the most interesting being \cite{Creutzig:2011cu,Creutzig:2011np} the tensor product of the $\beta\gamma$ ghost algebra and that of a pair of free fermions, the affine Kac-Moody superalgebra $\AKMSA{sl}{2}{1}$ at levels $-\tfrac{1}{2}$ and $1$, as well as an infinite series of superalgebras containing, as subalgebras, the $N=2$ superconformal algebra, the Bershadsky-Polyakov algebra $\mathcal{W}^{(2)}_3$, and its generalisations, the Feigin-Semikhatov algebras $\mathcal{W}^{(2)}_n$.

One very interesting observation is that the supercharacters of the atypical irreducibles of these extended algebras turn out to be \emph{mock modular forms}.  These are familiar, but mysterious, objects in number theory whose modular transformations may be expressed in terms of an integral, the Mordell integral \cite{Zwegers}.  One can evaluate the extended algebras' Verlinde formulae directly by using this integral \cite{Alfes:2012pa}.  It would therefore be extremely interesting to rederive this using the method of atypical resolutions.  We note that mock modular forms in general seem to be closely tied to atypical characters of affine Lie superalgebras, see \cite{Kac:1994kn,Kac:2001,SemHig05}.

\subsection{Correlation Functions}

The three-point functions of the $\SLSG{GL}{1}{1}$ \WZW{} model were computed in \cite{Schomerus:2005bf}. The results bear a striking resemblance to those of symplectic fermions, an observation which was explained in \cite{Creutzig:2008an}. We briefly summarise the computations, generalising the method used for both symplectic fermions and $\AKMA{sl}{2}_{-1/2}$.  First, consider a set of three free bosons, $\tfunc{\nonch\varphi}{z,\bar z}, \tfunc{\nonch y}{z,\bar z}$ and $\tfunc{\nonch x}{z,\bar z}$, whose non-regular \opes{} take the form
\begin{equation}
\func{\nonch\varphi}{z, \ahol{z}} \func{\nonch\varphi}{w, \ahol{w}} = \log \abs{z-w}^2 + \cdots, \qquad 
\func{\nonch y}{z,\ahol{z}} \func{\nonch x}{w, \ahol{w}} = \log \abs{z-w}^2 + \cdots.
\end{equation}
As before, we denote the chiral part of fields by a subscript $L$.  Define four holomorphic fields
\begin{equation}
\func{J^E}{z} = -\func{\pd y}{z}, \quad 
\func{J^N}{z} = -\func{\pd x}{z}, \quad 
\func{J^-}{z} = -\normord{\ee^{\func{y_L}{z} - \func{\varphi_L}{z}}}, \quad 
\func{J^+}{z} = -\normord{\pd \func{\varphi}{z} \ee^{-\func{y_L}{z} + \func{\varphi_L}{z}}},
\end{equation}
and their antiholomorphic analogues similarly.  The non-regular \opes{} of these fields are
\begin{equation}
\func{J^E}{z} \func{J^N}{w} \sim \frac{1}{(z-w)^2}, \quad 
\func{J^N}{z} \func{J^\pm}{w} \sim \frac{\pm \func{J^\pm}{w}}{z-w}, \quad 
\func{J^+}{z} \func{J^-}{w} \sim \frac{1}{(z-w)^2} + \frac{\pm\func{J^E}{w}}{z-w},
\end{equation}
so we have a free field realisation of the currents of $\AKMSA{gl}{1}{1}$.  As these fields again commute with the zero-mode of $e^{-\varphi_L(z)}$, we can use the screening charge \eqref{eq:DefScreen} once again.

In this free field realisation, the interesting bulk fields are
\begin{equation}
\begin{aligned}
V^{--}_{-e,-n+1/2} &= \normord{\ee^{e \brac{\varphi + x} + n y}}, \\
V^{-+}_{-e,-n+1/2} &= \normord{\ee^{e \brac{\varphi + x} + n y + \varphi_R - y_R}},
\end{aligned}
\qquad
\begin{aligned}
V^{+-}_{-e,-n+1/2} &= \normord{\ee^{e \brac{\varphi + x} + n y + \varphi_L - y_L}}, \\
V^{++}_{-e,-n+1/2} &= \normord{\ee^{e \brac{\varphi + x} + n y + \varphi - y}}.\\
\end{aligned}
\end{equation}
We find the following non-regular \opes{} with the currents:
\begin{equation}
\begin{aligned}
\func{J^E}{z} \func{\nonch V^{- \pm}_{-e,-n+1/2}}{w,\ahol{w}} &\sim 
\frac{-e \func{\nonch V^{- \pm}_{-e,-n+1/2}}{w,\ahol{w}}}{z-w}, \\
\func{J^N}{z} \func{\nonch V^{- \pm}_{-e,-n+1/2}}{w,\ahol{w}} &\sim 
\frac{-n \func{\nonch V^{- \pm}_{-e,-n+1/2}}{w,\bar w}}{z-w}, \\
\func{J^+}{z} \func{\nonch V^{- \pm}_{-e,-n+1/2}}{w,\ahol{w}} &\sim 
\frac{e \func{\nonch V^{+ \pm}_{-e,-n+1/2}}{w,\ahol{w}}}{z-w},
\end{aligned}
\quad
\begin{aligned}
\func{J^E}{z} \func{\nonch V^{+ \pm}_{-e,-n+1/2}}{w,\ahol{w}} &\sim 
\frac{-e \func{\nonch V^{+ \pm}_{-e,-n+1/2}}{w,\ahol{w}}}{z-w}, \\
\func{J^N}{z} \func{\nonch V^{+ \pm}_{-e,-n+1/2}}{w,\ahol{w}} &\sim 
\frac{-\brac{n-1} \func{\nonch V^{+ \pm}_{-e,-n+1/2}}{w,\ahol{w}}}{z-w}, \\
\func{J^-}{z} \func{\nonch V^{+ \pm}_{-e,-n+1/2}}{w,\ahol{w}} &\sim 
\frac{\func{\nonch V^{- \pm}_{-e,-n+1/2}}{w,\ahol{w}}}{z-w},
\end{aligned}
\end{equation}
which imply that these fields correspond to the primaries of $\GLVer{-n+1/2,-e}$.

Correlation functions are now defined in almost exactly the same manner as for symplectic fermions and $\AKMA{sl}{2}_{-1/2}$. The only difference is that there are now two free bosons in addition to the screened boson $\tfunc{\varphi}{z,\bar z}$. The results for the three-point functions include, for example,
\begin{multline}
\corrfn{\func{\nonch V^{--}_{-e_1,-n_1+1/2}}{z,\ahol{z}} \func{\nonch V^{++}_{-e_2,-n_2+1/2}}{1,1} \func{\nonch V^{++}_{-e_3,-n_3+1/2}}{0,0}} \\
= -\frac{\func{\Gamma}{1-e_1} \func{\Gamma}{-e_2} \func{\Gamma}{-e_3}}{\func{\Gamma}{e_1} \func{\Gamma}{1+e_2} \func{\Gamma}{1+e_3}} \frac{\delta_{n_1 + n_2 + n_3 = 2} \delta_{e_1 + e_2 + e_3 = 0}}{\abs{z-1}^{2 \brac{e_2 (1-e_1-n_1) + e_1 (1-n_2)}} \abs{z}^{2 \brac{e_3 (1-e_1-n_1) + e_1 (1-n_3)}}},
\end{multline}
from which we again observe singularities at the atypical points $e \in \ZZ$. As before, regularising leads to logarithms signifying the presence of the indecomposables $\GLStag{n,e}$ and the results are consistent with the fusion rules \eqref{FR:GL11}.

\subsection{Further Developments} \label{sec:GL11Future}

Conformal field theories associated to affine Kac-Moody superalgebras provide a rich source of interesting new logarithmic conformal field theories.  Unfortunately, the only examples that are understood in great detail are those with $\AKMSA{psl}{1}{1}$ and $\AKMSA{gl}{1}{1}$ symmetries.  However, this does not mean that no progress has been made on more sophisticated superalgebra models.

As with $\AKMSA{gl}{1}{1}$, the spectrum of the the bulk theory may be conjectured using a combination of harmonic analysis and a first order formulation. By the latter, one means a perturbative description of the \WZW{} model in terms of its bosonic subtheory coupled to free $bc$-ghosts.  One can then compute correlation functions and so on as above. This approach has been considered, with varying degrees of detail, for the bulk theories corresponding to the \WZW{} models on $\SLSG{SL}{2}{1}$ \cite{Saleur:2006tf}, $\SLSG{PSL}{2}{2}$ \cite{Gotz:2006qp} and for general type I supergroups in \cite{Quella:2007hr}.  A different approach determines, and then exploits, correspondences with super-Liouville theories \cite{Hikida:2007sz,Creutzig:2011qm}. Both methods appear to generalise to the boundary \cft{}, but there is an obstacle amounting to identifying the appropriate boundary screening charges. This has so far only been achieved for $\SLSG{GL}{1}{1}$ \cite{Creutzig:2008ek} and $\SLSG{OSP}{1}{2}$ \cite{Creutzig:2010zp}. In both cases, the boundary screening charge was found to be essentially given by the square root of the bulk screening charge. This behaviour is very similar to that observed for matrix factorization in Landau-Ginzburg theories \cite{Kapustin:2002bi} where the boundary and bulk screening charges seem to be similarly related, at least for $\SLSG{GL}{n}{n}$ \cite{Creutzig:2010ne}.

Some supergroup \WZW{} theories and their cosets have remarkable properties that lead to interesting applications in physics. The key point is the rather innocent-seeming observation that the Killing form, the supertrace in the adjoint representation, vanishes identically for the simple Lie superalgebras $\SLSA{psl}{n}{n}$, $\SLSA{osp}{2n+2}{2n}$ and $\SLA{d}{2,1;\alpha}$.  The corresponding \WZW{} models have been argued to possess exactly marginal perturbations \cite{Bershadsky:1999hk,Quella:2007sg,Candu:2010yg} including, as a special case, the \emph{principal chiral model}.  In the case of $\SLSG{PSU}{1,1}{2}$, this describes (the target space supersymmetric part of) superstring theory on $\grp{AdS}_3 \times S^3 \times X$, where $X$ is some four-dimensional manifold \cite{Berkovits:1999im}.  One prediction of the celebrated \emph{AdS/CFT correspondence} is that this string theory is dual to the two-dimensional \cft{} associated with certain symmetric orbifolds of the four-manifold $X$. The superstring theory dual to four-dimensional conformal gauge theory is likewise described by the \cft{} associated to the coset \cite{Maldacena:1997re}
\[
\frac{\SLSG{PSU}{2,2}{4}}{\SLG{SU}{2,2} \times \SLG{SU}{4}}.
\] 
This coset differs from the gauged \WZW{} model, but it can still be argued to be conformally invariant due to the fact that the Killing form of the numerator vanishes \cite{Kagan:2005wt,Babichenko:2006uc,Candu:2010yg}.

Conformal field theories with superalgebra symmetries also appear in statistical physics. Supersymmetric disordered systems are described by perturbations of $n$ pairs of free fermions and $\beta \gamma$ ghosts. The bilinears in these fields are well known to define the currents of $\AKMSA{gl}{n}{n}$ at level $k=1$. The associated disordered system is given by the corresponding current-current perturbation \cite{Guruswamy:1999hi} (although in this case it seems that the perturbed theory is not quite conformal). One of the important open questions in this area is that of finding an effective field theory for the transitions between plateaux in the integer quantum Hall effect. Such a theory might have $\SLSG{GL}{n}{n}$ symmetry and sigma models related to $\SLSG{PSL}{2}{2}$ have also been argued to appear in this context \cite{Zirnbauer:1999ua}. 

\section{Staggered Modules} \label{sec:Stag}

We have seen in the previous sections that \lcfts{} all have certain types of reducible, but indecomposable, modules in their spectra.  The action of the Virasoro zero-mode $L_0$ is not diagonalisable on these modules, leading to logarithmic singularities in correlators.  In this section, we shall discuss the mathematical structure of the simplest class of modules on which $L_0$ cannot be diagonalised, the \emph{staggered modules}.  These were so named by Rohsiepe in his study \cite{RohRed96} of indecomposable Virasoro modules formed by glueing several \hwms{} together.%
\footnote{The ``staggering'' presumably derives from a useful pictorial representation in which the vertical positions of the \hwss{} are ordered according as to their conformal dimensions.  The result bears a passing similarity to the staggered starting positions customarily used for runners racing around a track.}  %
Here, we will discuss indecomposables formed by glueing modules from more general, but still structurally well-understood, classes.  However, we will restrict to glueing only two of these well-understood modules together.  These are the most common types of indecomposables encountered in \lcft{} and the majority of the best understood examples of these theories feature only this type of reducible indecomposable (those studied in \secTref{sec:Trip}{sec:SL2}{sec:GL11} for example).  A further advantage is that we expect a classification of such staggered modules to be feasible, see \cite{RohRed96,RidSta09} for the Virasoro case.

\subsection{Staggered Modules for Associative Algebras} \label{sec:StagMod}

In preparation for defining staggered modules in some generality, we first declare that for a given Lie algebra (or more generally, associative algebra%
\footnote{It is more convenient to consider the \uea{} of the Lie algebra for the mathematical development that follows.}%
), we will choose a collection of \emph{standard modules} whose structure is reasonably well-understood:  For $\AKMSA{gl}{1}{1}$, the standard modules are the Verma modules; for $\AKMA{sl}{2}_{-1/2}$, standard means the spectral flows of the relaxed \hwms{} which we have denoted by $\SLTyp{\lambda}$; for $\SingAlg{1,2}$, standard means the Feigin-Fuchs modules $\SingTyp{\mu}$.  These are the modules whose characters have the most satisfactory modular transformation properties.  We remark that these notions of being ``standard'' may be lifted to the respective simple current extensions by declaring that an extended module is standard whenever it is the orbit, under fusing with the simple current, of a standard module.  In this sense, $\TripIrr{-1/8}$ and $\TripIrr{3/8}$ are standard $\TripAlg{1,2}$-modules, while $\TripIrr{0}$ and $\TripIrr{1}$ are not.

Note that this notion is more general than \lcft{}.  For example, we may choose the standard modules of a (type I) Lie superalgebra like $\SLSA{gl}{1}{1}$ to be its Kac modules \cite{KacCha77}.%
\footnote{Actually, we would have liked to refer to standard modules in general as ``Kac modules''.  However, this term is already in use for \lcfts{} with Virasoro algebra symmetries \cite{RasFus07}.  It is not clear to us at present if these Kac modules for the Virasoro algebra are good candidates for standard modules in the sense we wish.  Moreover, a theory of Virasoro staggered modules, with standard meaning highest weight, has already been developed \cite{RidSta09}.}  %
Similarly, a good choice for quantum groups like $\mathcal{U}_q \brac{\SLA{sl}{2}}$ is that standard means highest weight.  Even the diagram algebras, such as the Temperley-Lieb algebra, which crop up in statistical lattice models have standard modules, in this case the right choices are the cell modules of Graham and Lehrer \cite{GraCel96} (also known as standard modules; we have borrowed the nomenclature from this example).

The feature that these collections have in common is that standard modules are always indecomposable and, moreover, are naturally parametrised so that they are generically irreducible.  In this review, we have referred to irreducible standard modules as typical and reducible ones as atypical.  Furthermore, and this is crucial for the next definition, the central elements of the associative algebra all act diagonalisably on standard modules.  We now define a staggered module $\VirStag{}$ to be one which is (isomorphic to) an extension of a standard module by another upon which there is a central element $Q$ acting non-diagonalisably:
\begin{equation} \label{ses:DefStag}
\dses{\LMod}{\iota}{\VirStag{}}{\pi}{\RMod}.
\end{equation}
In other words, $\VirStag{}$ has a submodule isomorphic to a standard module $\LMod$, which we shall refer to as the \emph{left module}, and the quotient $\VirStag{} / \func{\iota}{\LMod}$ is isomorphic to another standard module $\RMod$, which we shall refer to as the \emph{right module}.  For $\SLSA{gl}{1}{1}$ and $\mathcal{U}_q \brac{\SLA{sl}{2}}$, we may take $Q$ to be quadratic Casimir; for the Temperley-Lieb algebra, $Q$ can be taken to be the ``braid transfer matrix'' \cite{RidTL12}.  For the algebras arising in \lcft{}, it is $L_0$ which acts non-diagonalisably --- to get a central element, one can act with $\ee^{2 \pi \ii L_0}$ instead.  It is not hard to see that $L_0$ and $\ee^{2 \pi \ii L_0}$ may be interchanged with only minor modifications to the arguments that follow.%
\footnote{One could also imagine \cfts{} on which other zero-modes act non-diagonalisably, however very few examples appear to be known (see \cite{BabTak12} for one).}

The first result to note about staggered modules is that they only exist if $Q$ acts on both the left and right modules as the same multiple $\lambda$ of the identity (\appref{app:Exact}).  The second result to note is that the non-trivial Jordan blocks of $Q$ all have rank $2$.  This follows directly from the exactness of \eqref{ses:DefStag}:  If $\ket{v} \in \VirStag{}$ is an arbitrary element of a non-trivial Jordan block for $Q$ of (generalised) eigenvalue $\lambda$, then $\brac{Q - \lambda} \ket{v}$ need not be $0$, but $\pi \brac{Q - \lambda} \ket{v} = \brac{Q - \lambda} \pi \ket{v} = 0$, since $Q$ is diagonalisable on $\RMod$.  Exactness then gives $\brac{Q - \lambda} \ket{v} = \iota \ket{w}$, for some $\ket{w} \in \LMod$ of $Q$-eigenvalue $\lambda$, whence $\brac{Q - \lambda}^2 \ket{v} = \brac{Q - \lambda} \iota \ket{w} = \iota \brac{Q - \lambda} \ket{w} = 0$, because $Q$ is also diagonalisable on $\LMod$.

To further investigate these staggered modules, we introduce some more notation.  Let $\ket{\theta} \in \VirStag{}$ be a generalised eigenvector of $Q$ of eigenvalue $\lambda$, so that $\ket{\chi} = \brac{Q - \lambda} \ket{\theta} \neq 0$.  Then, $\pi \ket{\theta} \in \RMod$ is non-zero because $\pi \ket{\theta} = 0$ implies that $\ket{\theta} \in \tfunc{\iota}{\LMod}$, by exactness, hence that $\ket{\theta}$ is a genuine eigenvector of $Q$.  Now, suppose that there is an element $U$ of our associative algebra such that $U \pi \ket{\theta} = 0$.  Then, $\pi U \ket{\theta} = 0$, hence $U \ket{\theta} \in \tfunc{\iota}{\LMod}$.  Because of this, the centrality of $Q$ now gives
\begin{equation}
U \ket{\chi} = U \brac{Q - \lambda} \ket{\theta} = \brac{Q - \lambda} U \ket{\theta} = 0,
\end{equation}
the last equality again following from the fact that elements of $\tfunc{\iota}{\LMod}$ are genuine eigenvectors of $Q$.  In other words, any $U$ annihilating $\pi \ket{\theta} \in \RMod$ also annihilates $\ket{\chi} \in \tfunc{\iota}{\LMod}$.

This is a bit abstract, so let us consider an important specialisation that occurs for a Lie (super)algebra with standard meaning highest weight and $\ket{\theta}$ chosen to project onto the \hws{} of $\RMod$.  Then, $\ket{\chi} = \brac{Q - \lambda} \ket{\theta}$ is necessarily non-zero because $Q$ would be diagonalisable on all of $\VirStag{}$ otherwise.  But, $\ket{\theta}$ is annihilated by all positive modes $U$, hence we can conclude that $\ket{\chi}$ is too.  It follows that in a staggered module over a Lie (super)algebra, $\ket{\chi}$ is a \emph{non-zero \sv{}} of $\LMod$.  Similarly, if $\pi \ket{\theta}$ is a relaxed \hws{}, then so is $\ket{\chi}$ (we might call it a relaxed \sv{}).

Taking this a step further, we may suppose that $\RMod$ is generated by a set $\tset{\ket{\Theta^j}}_{j \in J}$ and then choose elements $\ket{\theta^j} \in \VirStag{}$ so that $\pi \ket{\theta^j} = \ket{\Theta^j}$.  For each $j \in J$, the elements of our associative algebra which annihilate $\ket{\Theta^j}$ form an ideal whose generators we denote by $U^j_i$, $i \in I$.  We now define $\ket{\omega^j_i} \in \LMod$ by
\begin{equation}
\iota \ket{\omega^j_i} = U^j_i \ket{\theta^j}
\end{equation}
(applying $\pi$ shows that $U^j_i \ket{\theta^j} \in \tfunc{\iota}{\LMod}$).  Because there is an ambiguity in our choice of each $\ket{\theta^j}$ up to adding arbitrary elements $\ket{\eta^j} \in \LMod$, there is a similar ambiguity in our definition of the $\ket{\omega^j_i}$:
\begin{equation}
\ket{\theta^j} \lra \ket{\theta^j} + \iota \ket{\eta^j} \qquad \Rightarrow \qquad 
\ket{\omega^j_i} \lra \ket{\omega^j_i} + U^j_i \ket{\eta^j}.
\end{equation}
With this setup, we can prove that the $\ket{\omega^j_i}$ determine the isomorphism class of a staggered module, generalising the Virasoro result given in \cite[Prop.~3.6]{RidSta09}.

\begin{thm}
Let $\VirStag{}$ and $\overline{\VirStag{}}$ be staggered modules with the same left module $\LMod$ and the same right module $\RMod$.  Then, there exists an isomorphism $\psi \colon \overline{\VirStag{}} \to \VirStag{}$ making the diagram
\begin{equation}
\begin{CD}
0 @>>> \LMod @>{\iota}>> \VirStag{} @>{\pi}>> \RMod @>>> 0 \\
@. @| @AA{\psi}A @| @. \\
0 @>>> \LMod @>{\overline{\iota}}>> \overline{\VirStag{}} @>{\overline{\pi}}>> \RMod @>>> 0
\end{CD}
\end{equation}
commute if and only if there exist $\ket{\eta^j} \in \LMod$, for each $j \in J$, such that
\begin{equation} \label{eq:StagToBeProved}
\ket{\overline{\omega}^j_i} = \ket{\omega^j_i} + U^j_i \ket{\eta^j},
\end{equation}
for all $i \in I$.
\end{thm}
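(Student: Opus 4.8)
The plan is to prove the two implications of the biconditional separately, treating the ``only if'' direction as the source of the intertwining relations and the ``if'' direction as an explicit construction of the isomorphism.

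First I would set up notation carefully. Both $\VirStag{}$ and $\overline{\VirStag{}}$ are extensions of $\RMod$ by $\LMod$, so each carries chosen lifts $\ket{\theta^j}$ and $\ket{\overline{\theta}^j}$ of the generators $\ket{\Theta^j}$ of $\RMod$, and the data $\ket{\omega^j_i}$, $\ket{\overline{\omega}^j_i}$ are defined by $\iota \ket{\omega^j_i} = U^j_i \ket{\theta^j}$ and $\overline{\iota} \ket{\overline{\omega}^j_i} = U^j_i \ket{\overline{\theta}^j}$. For the ``only if'' direction, suppose $\psi \colon \overline{\VirStag{}} \to \VirStag{}$ is an isomorphism making the diagram commute, i.e.\ $\psi \circ \overline{\iota} = \iota$ and $\pi \circ \psi = \overline{\pi}$. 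Then $\pi \brac{\psi \ket{\overline{\theta}^j}} = \overline{\pi} \ket{\overline{\theta}^j} = \ket{\Theta^j} = \pi \ket{\theta^j}$, so $\psi \ket{\overline{\theta}^j} - \ket{\theta^j} \in \ker \pi = \im \iota$; write it as $\iota \ket{\eta^j}$ for some $\ket{\eta^j} \in \LMod$. Applying $U^j_i$ and using linearity of $\psi$ gives $\psi \brac{U^j_i \ket{\overline{\theta}^j}} = U^j_i \ket{\theta^j} + U^j_i \iota \ket{\eta^j} = \iota \ket{\omega^j_i} + \iota \brac{U^j_i \ket{\eta^j}}$; on the other hand $\psi \brac{U^j_i \ket{\overline{\theta}^j}} = \psi \brac{\overline{\iota} \ket{\overline{\omega}^j_i}} = \iota \ket{\overline{\omega}^j_i}$. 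Since $\iota$ is injective, $\ket{\overline{\omega}^j_i} = \ket{\omega^j_i} + U^j_i \ket{\eta^j}$, which is \eqref{eq:StagToBeProved}.

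For the ``if'' direction, given $\ket{\eta^j} \in \LMod$ satisfying \eqref{eq:StagToBeProved}, I would first replace the lifts $\ket{\theta^j}$ by $\ket{\theta^j}' = \ket{\theta^j} + \iota \ket{\eta^j}$; by the ambiguity relation already recorded in the excerpt this changes $\ket{\omega^j_i}$ to $\ket{\omega^j_i} + U^j_i \ket{\eta^j} = \ket{\overline{\omega}^j_i}$, so without loss of generality we may assume $\ket{\omega^j_i} = \ket{\overline{\omega}^j_i}$ for all $i,j$. Now define $\psi$ by declaring $\psi \circ \overline{\iota} = \iota$ on the submodule and $\psi \ket{\overline{\theta}^j} = \ket{\theta^j}$ on the chosen lifts, and extending by the algebra action: since $\overline{\VirStag{}}$ is generated over the algebra by $\overline{\iota} \brac{\LMod}$ together with $\set{\ket{\overline{\theta}^j}}_{j \in J}$ (because $\RMod = \VirStag{}/\iota(\LMod)$ is generated by the $\ket{\Theta^j}$), any algebra homomorphism is determined by these values. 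The work is to check that this is well-defined, i.e.\ that every relation among the generators of $\overline{\VirStag{}}$ is respected. The relations are of two kinds: those internal to $\overline{\iota}(\LMod)$, which are respected because $\iota$ is a module map; and relations of the form $U^j_i \ket{\overline{\theta}^j} = \overline{\iota} \ket{\overline{\omega}^j_i}$ together with any $\LMod$-linear combinations thereof. Applying $\psi$ to the left gives $U^j_i \ket{\theta^j} = \iota \ket{\omega^j_i}$, and applying $\psi$ to the right gives $\iota \ket{\overline{\omega}^j_i}$; these agree precisely because we arranged $\ket{\omega^j_i} = \ket{\overline{\omega}^j_i}$. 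One then verifies $\psi$ is a module homomorphism, that $\pi \circ \psi = \overline{\pi}$ (check on generators: $\pi \psi \ket{\overline{\theta}^j} = \pi \ket{\theta^j} = \ket{\Theta^j} = \overline{\pi} \ket{\overline{\theta}^j}$, and both sides kill $\overline{\iota}(\LMod)$), and that $\psi$ is an isomorphism --- the latter by the five lemma (\appref{app:Exact}), since the outer vertical maps are identities.

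The main obstacle, and the step deserving the most care, is the well-definedness of $\psi$ in the ``if'' direction: one must argue that the generators-and-relations presentation of a staggered module really is captured by the data $\brac{\LMod, \RMod, \set{\ket{\omega^j_i}}}$, so that no hidden relations among the $\ket{\overline{\theta}^j}$ and $\overline{\iota}(\LMod)$ can obstruct the prescription. Concretely, I would phrase $\overline{\VirStag{}}$ as a quotient of the module freely generated by $\LMod$ and symbols $\ket{\overline{\theta}^j}$ modulo exactly the relations $U^j_i \ket{\overline{\theta}^j} = \overline{\iota}\ket{\overline{\omega}^j_i}$ (for $i \in I$, $j \in J$) and the $\LMod$-module relations --- this is where one uses that the $U^j_i$ generate the full annihilating ideal of $\ket{\Theta^j}$ in $\RMod$ --- and then the map $\psi$ is induced on the quotient because it sends relations to relations. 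This mirrors the structure of the proof of \cite[Prop.~3.6]{RidSta09}, with ``highest weight'' replaced by ``standard'' throughout, the only genuinely new input being that central diagonalisability on $\LMod$ and $\RMod$ guarantees the compatibility of the $Q$-action with this presentation, which has already been established in the preceding paragraphs.
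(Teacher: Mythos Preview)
Your proposal is correct and follows essentially the same approach as the paper: the ``only if'' direction is identical, and for the ``if'' direction the paper defines $\psi$ directly by $\psi \ket{\overline{\theta}^j} = \ket{\theta^j} + \iota \ket{\eta^j}$ (your preliminary WLOG reduction is just a cosmetic repackaging of this), with the well-definedness check carried out by writing any $U$ with $U \ket{\overline{\theta}^j} \in \tfunc{\overline{\iota}}{\LMod}$ as $U = \sum_i V^j_i U^j_i$ and comparing the two definitions, exactly the idea you sketch. The only minor difference is that the paper says invertibility is ``easy'' rather than invoking the five lemma.
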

\begin{proof}
If we have such an isomorphism $\psi$, then
\begin{equation}
\pi \tbrac{\psi \ket{\overline{\theta}^j} - \ket{\theta^j}} = \overline{\pi} \ket{\overline{\theta}^j} - \pi \ket{\theta^j} = \ket{\Theta^j} - \ket{\Theta^j} = 0,
\end{equation}
hence $\psi \ket{\overline{\theta}^j} - \ket{\theta^j} = \iota \ket{\eta^j}$ for some $\ket{\eta^j} \in \LMod$.  Applying $U^j_i$ now gives \eqref{eq:StagToBeProved}, as required.

Conversely, if \eqref{eq:StagToBeProved} holds for some $\ket{\eta^j} \in \LMod$, then define $\psi$ to be $\iota \circ \overline{\iota}^{-1}$ on $\tfunc{\overline{\iota}}{\LMod}$ and by
\begin{equation}
\psi \ket{\overline{\theta}^j} = \ket{\theta^j} + \iota \ket{\eta^j}, \qquad \psi U \ket{\overline{\theta}^j} = U \psi \ket{\overline{\theta}^j}
\end{equation}
otherwise.  Now, $\psi$ is clearly a homomorphism and it is easy to invert.  All we need to check is that it is well-defined because it may happen that $U \ket{\overline{\theta}^j} \in \tfunc{\overline{\iota}}{\LMod}$, in which case the two definitions for $\psi$ must agree.  But, $U \ket{\overline{\theta}^j} \in \tfunc{\overline{\iota}}{\LMod}$ leads to $U \ket{\overline{\Theta}^j} = 0$, hence $U$ belongs to the (left) ideal that annihilates $\ket{\overline{\Theta}^j}$.  Thus, we may write $U = \sum_i V^j_i U^j_i$, where the $V^j_i$ are elements of our associative algebra.  The first definition for $\psi$ now gives
\begin{equation}
\psi U \ket{\overline{\theta}^j} = \iota \overline{\iota}^{-1} \sum_i V^j_i \overline{\iota} \ket{\overline{\omega}^j_i} = \iota \sum_i V^j_i \brac{\ket{\omega^j_i} + U^j_i \ket{\eta^j}} = \iota \sum_i V^j_i \ket{\omega^j_i} + \iota U \ket{\eta^j},
\end{equation}
while the second yields
\begin{equation}
\psi U \ket{\overline{\theta}^j} = U \brac{\ket{\theta^j} + \iota \ket{\eta^j}} = \sum_i V^j_i \iota \ket{\omega^j_i} + \iota U \ket{\eta^j},
\end{equation}
completing the proof.
\end{proof}

We illustrate this theorem with staggered Virasoro modules for which standard means Verma.  Then, $\RMod$ is generated by its \hws{} $\ket{\Theta}$ (so $\abs{J} = 1$) and the annihilator of this state is generated by $L_1$, $L_2$ and $L_0 - h$, where $h$ is the conformal dimension of $\ket{\Theta}$.  We therefore obtain three vectors
\begin{equation}
\ket{\omega_0} = \brac{L_0 - h} \ket{\theta}, \qquad 
\ket{\omega_1} = L_1 \ket{\theta}, \qquad \ket{\omega_2} = L_2 \ket{\theta},
\end{equation}
where $\pi \ket{\theta} = \ket{\Theta}$.  Up to the ambiguity in choosing $\ket{\theta}$, these three vectors completely specify the isomorphism class of a staggered module.  In fact, because $\ket{\omega_0}$ is (a rescaling of) $\ket{\chi}$ and Virasoro singular vectors of a given conformal dimension are unique, this vector is already determined by $\LMod$ and $\RMod$, so the staggered module is characterised by $\ket{\omega_1}$ and $\ket{\omega_2}$.%
\footnote{This remains true if standard means, instead, highest weight because the additional generators of the annihilating ideal lead to additional states $\ket{\omega_i}$ which can be computed given the $\ket{\omega_1}$ and $\ket{\omega_2}$ defined in the Verma case \cite[Prop.~3.4]{RidSta09}.}

This theorem therefore allows us to reduce the problem of deciding if two staggered modules are isomorphic to computing a set of vectors $\ket{\omega^j_i} \in \LMod$ and seeing if there exist $\ket{\eta^j} \in \LMod$ such that \eqref{eq:StagToBeProved} holds.  If one can determine which sets of $\ket{\omega^j_i}$ do actually correspond to a staggered module --- this is the existence problem and it is decidedly difficult in general --- then the question of counting the number of isomorphism classes of staggered modules with given left and right modules becomes an exercise in linear algebra.  We remark that this is not quite the same as computing $\ExtGrp{1}{\RMod}{\LMod}$ (see \appref{app:Exact}) because we are restricting to extensions with a non-diagonalisable action of the centre.

\subsection{Logarithmic Couplings} \label{sec:Beta}

Because the counting of staggered module isomorphism classes may be reduced, modulo the existence problem, to a question of linear algebra, it seems plausible that the space of isomorphism classes will be a vector space (or affine space).  It seems reasonable to ask if there is a natural means to parametrise this space.  This is the idea behind logarithmic couplings:  Instead of characterising a staggered module by a collection of vectors $\ket{\omega^j_i}$, subject to the ambiguities \eqref{eq:StagToBeProved}, we try to find a collection of numbers which likewise characterise the staggered module but which are \emph{invariant} under \eqref{eq:StagToBeProved}.

This programme has only been studied in any detail for the Virasoro algebra (related computations for $\AKMA{sl}{2}_{-1/2}$ were detailed in \cite{RidFus10} and for $\AKMA{sl}{2}_{-4/3}$ in \cite{GabFus01,CreMod12}).  When standard means Verma, it turns out \cite[Thm.~6.4 and Thm.~6.14]{RidSta09} that the vector space of isomorphism classes of staggered modules has dimension $0$ when $\ket{\chi}$ is the (generating) \hws{} of $\LMod$ (staggered modules are unique) and has dimension $1$ when $\ket{\chi}$ is a \emph{principal} \sv{}, meaning that it is descended from no other proper \sv{}.  These are the most useful cases, though it is also possible for the dimension to be $2$, and we shall consider the principal case exclusively for the remainder of this section.

We therefore need a single number to identify a staggered Virasoro module, up to isomorphism, when $\ket{\chi}$ is principal.  A method to compute this number was originally proposed by Gaberdiel and Kausch \cite{GabInd96} in the course of explicitly constructing certain $c=-2$ and $c=-7$ staggered modules using their fusion algorithm.  They chose $\ket{\theta}$ to satisfy $L_n \ket{\theta} = 0$, for all $n>1$, and then defined $\beta \in \CC$ by $L_1^{\ell} \ket{\theta} = \beta \ket{\xi}$, where $\ket{\xi}$ denotes the \hws{} of $\LMod$ and $\ell$ is the difference between the conformal dimensions of $\ket{\theta}$ and $\ket{\xi}$.%
\footnote{We remark that $\ket{\theta}$ is naturally restricted to being a generalised eigenvalue of $L_0$ in this development.}  %
One can check that $\beta$ does not depend upon the choice of $\ket{\theta}$, assuming that we only choose among the $\ket{\theta}$ which are annihilated by the $L_n$ with $n>1$.  It follows that isomorphic staggered modules will have the same $\beta$.

More recently, these numbers $\beta$ were generalised to other staggered modules, most notably in \cite{EbeVir06} where it was realised that Gurarie and Ludwig's anomaly numbers (see \eqnref{eq:AnomalyNumbers}) for percolation and dilute polymers were just (differently normalised versions of) the $\beta$ for the $c=0$ staggered modules $\VirStag{1,5}$ and $\VirStag{3,1}$ (see \secDref{sec:NGK}{sec:PercFuture}).  Unfortunately, attention was not always paid to the crucial requirement that $\beta$ not depend upon the choice of $\ket{\theta}$.  In particular, when one considers staggered modules more general than those considered in \cite{GabInd96}, it is not usually possible to find any $\ket{\theta}$ satisfying $L_n \ket{\theta} = 0$ for all $n>1$.  It follows that the proposed recipe for computing $\beta$ does not make sense for general staggered modules.

This was corrected in \cite{RidPer07} where a definition of $\beta$ was given for any staggered Virasoro module with $\ket{\chi}$ principal (see \cite{RidSta09} for the general case).  First, we normalise \cite{AstStr97} the \sv{} $\ket{\chi}$ so that
\begin{equation} \label{eq:NormOmega0}
\ket{\chi} = U \ket{\xi}, \qquad U = L_{-1}^{\ell} + \cdots,
\end{equation}
where the omitted terms are Virasoro monomials involving the $L_{-n}$ with $n \geqslant 1$ and at least one $n>1$.  We then define $\beta$ by
\begin{equation} \label{eq:DefBeta}
U^{\dag} \ket{\theta} = \beta \ket{\xi},
\end{equation}
where $L_n^{\dag} = L_{-n}$ is the usual adjoint (lifted to the \uea{}).  If $\ket{\xi}$ is given norm $1$, then applying $\bra{\xi}$ to both sides of this definition leads to
\begin{equation}
\beta = \bracket{\xi}{U^{\dag}}{\theta} = \braket{\chi}{\theta},
\end{equation}
which is obviously invariant under $\ket{\theta} \to \ket{\theta} + \iota \ket{\eta}$, $\ket{\eta} \in \LMod$, because $\ket{\chi}$ is singular in $\tfunc{\iota}{\LMod}$.  The quantities $\beta$ were christened \emph{logarithmic couplings} in \cite{RidPer07}, though the terms \emph{beta-invariants} \cite{RidSta09} and \emph{indecomposability parameters} \cite{DubCon10} have also been used since.

The logarithmic coupling defined in \eqref{eq:DefBeta} has the property that two staggered modules with the same left and right modules will be non-isomorphic if their couplings are different.  The converse is also true \cite[Thm.~6.15]{RidSta09}:  If the logarithmic couplings of such staggered modules coincide, then the modules are isomorphic.  In other words, $\beta$ is a complete invariant of the space of isomorphism classes of staggered Virasoro modules.  The same is true for staggered Virasoro modules with standard meaning highest weight, though it is then no longer true that every $\beta$ need correspond to a staggered module.

A downside to this theory is that $\beta$ is not particularly easy to compute in general.  One can explicitly construct the staggered module, for example using the Nahm-Gaberdiel-Kausch algorithm \cite{GabInd96,EbeVir06,RidPer07} (or as a space of local martingales for a Schramm-Loewner evolution process \cite{KytFro08}).  In this way, the logarithmic couplings $\beta_{r,s}$ of some of the $c=0$ staggered modules $\VirStag{r,s}$ considered in \secref{sec:Perc} were computed:%
\footnote{We remark that the discrepancy between the logarithmic couplings $\beta_{1,5} = -\tfrac{5}{18}$, $\beta_{3,1} = \tfrac{10}{27}$ and the anomaly numbers $b_{1,5} = -\tfrac{5}{8}$, $b_{3,1} = \tfrac{5}{6}$ of Gurarie and Ludwig (see \eqnref{eq:AnomalyNumbers}) is just a matter of normalisation.  We have chosen to (canonically) normalise $\ket{\chi}$ as $\brac{L_{-1}^2 - \tfrac{2}{3} L_{-2}} \ket{\xi}$.  Identifying $\ket{\xi}$ with the vacuum $\ket{0}$, we find that $\ket{T} = -\tfrac{3}{2} \ket{\chi}$ and $\ket{t} = -\tfrac{3}{2} \ket{\theta}$, hence the anomaly numbers are obtained from the logarithmic couplings by multiplying by $\brac{-\tfrac{3}{2}}^2$.}%
\begin{equation} \label{eq:c=0Betas}
\beta_{1,4} = -\tfrac{1}{2}, \quad \beta_{1,5} = -\tfrac{5}{18}, \quad \beta_{1,7} = -420, \quad \beta_{1,8} = -\tfrac{10780000}{243}, \quad \beta_{3,1} = \tfrac{10}{27}.
\end{equation}
A slightly more efficient method \cite{RidLog07} is to check if the existence of \svs{} in staggered modules fixes $\beta$.  However, the most efficient known seems to be the proposal of \cite{VasInd11} in which a limit formula is obtained for $\beta$ as a byproduct of cancelling divergences in Virasoro primary \opes{} as the conformal dimensions and central charges tend to their required values in a controlled manner.  Surprisingly, the logarithmic couplings of certain classes of staggered modules with $\RDim - \LDim$ small can be computed as a function of the central charge \cite{RidLog07,KytFro08,RidNon12,GaiPhy12}.  These functions are reasonably simple and conjectures for more general formulae have been made.  However, there appears to have been no progress as yet on resolving these conjectures.

\subsection{More Logarithmic Correlation Functions} \label{sec:StagLogCorr}

We now briefly reconsider the (chiral) two-point function calculations of \secDref{sec:LogCorr}{sec:PercLogCorr}.%
\footnote{Three-point correlators may likewise be computed, assuming that one has already determined the three-point coupling constants between primary fields.}  %
Given the states $\ket{\theta}$ and $\ket{\chi} = \brac{L_0 - h} \ket{\theta}$, the global invariance of the vacuum always leads to the following form for the two-point functions of the corresponding fields:
\begin{equation} \label{eq:2ptVirLogCorr}
\corrfn{\func{\chi}{z} \func{\chi}{w}} = 0, \qquad \corrfn{\func{\chi}{z} \func{\theta}{w}} = \frac{B}{\brac{z-w}^{2h}}, \qquad \corrfn{\func{\theta}{z} \func{\theta}{w}} = \frac{A - 2B \func{\log}{z-w}}{\brac{z-w}^{2h}},
\end{equation}
where $A$ and $B$ are constants.  If $\ell = 0$, so $\ket{\theta}$ has the same conformal dimension as $\ket{\xi}$, then $\ket{\chi} = \ket{\xi}$ and $B = \braket{\xi}{\theta}$.  In this case, $\ket{\xi}$ has norm zero, so we are free to normalise the hermitian form so that $B=1$.  Note that $B$ does not depend upon the choice of $\ket{\theta}$, whereas $A$ does.  We remark that if the vacuum $\ket{0}$ has a logarithmic partner $\ket{\Omega}$, so $L_0 \ket{\Omega} = \ket{0}$, then this analysis shows that the one-point function of the identity field vanishes whereas that of its partner will be constant.

When $\ell = 1$, $\ket{\chi}$ can only be $L_{-1} \ket{\xi}$ which is only singular when $\ket{\xi}$ has dimension $0$ and $\ket{\theta}$ has dimension $1$.  It follows that the hermitian form may be normalised via $\corrfn{\func{\xi}{z} \func{\xi}{w}} = \braket{\xi}{\xi} = 1$.  Because $\ket{\omega_1} = L_1 \ket{\theta} = \beta \ket{\xi}$, we can compute $B$ using the \pde{} derived from the $L_1$-invariance of the vacuum.  This equation has inhomogeneous terms proportional to the logarithmic coupling $\beta$ of the staggered module which determine the constant of integration in $\corrfn{\func{\xi}{z} \func{\theta}{w}}$:
\begin{equation} \label{eq:L=1CorrFns}
\corrfn{\func{\xi}{z} \func{\theta}{w}} = \frac{\beta}{z-w} \qquad \Rightarrow \qquad \corrfn{\func{\chi}{z} \func{\theta}{w}} = \frac{-\beta}{\brac{z-w}^2} \qquad \Rightarrow \qquad B = -\beta.
\end{equation}
The constant $A$ appearing in $\corrfn{\func{\theta}{z} \func{\theta}{w}}$ again varies with the choice of $\ket{\theta}$, unless $\beta = 0$.

For $\ell > 1$, computing the proportionality constant between $B$ and $\beta$ is a little more cumbersome.  As above, one way is to determine $\corrfn{\func{\chi}{z} \func{\theta}{w}}$ from $\corrfn{\func{\xi}{z} \func{\theta}{w}}$.  For this, we derive in the usual fashion, for $n \geqslant 2$ and an arbitrary field $\func{\phi}{z}$, the following relation:
\begin{equation} \label{eq:2PtLogDesc}
\corrfn{\func{\brac{L_{-n} \phi}}{z} \func{\theta}{w}} = \frac{\brac{-1}^n}{\brac{z-w}^{n-1}} \sqbrac{\pd_w + \frac{h \brac{n-1}}{z-w}} \corrfn{\func{\phi}{z} \func{\theta}{w}} + \brac{-1}^n \sum_{k=1}^{\ell} \binom{k+n-1}{n-2} \frac{\corrfn{\func{\phi}{z} \func{\omega_k}{w}}}{\brac{z-w}^{n+k}}.
\end{equation}
Here, we have defined $\ket{\omega_k} = L_k \ket{\theta}$.  Moreover, we note that when $\ell > 1$, we may always choose $\ket{\theta}$ so that $\ket{\omega_1} = 0$.%
\footnote{This follows from counting arguments.  If $\func{p}{\ell}$ denotes the number of partitions of $\ell$, then $\ket{\omega_1}$ belongs to a space of dimension $\func{p}{\RDim - \LDim - 1}$, whereas one has $\func{p}{\RDim - \LDim} - 1$ effective independent shifts $\ket{\theta} \to \ket{\theta} + \ket{\eta}$.  Because $\func{p}{\ell} - \func{p}{\ell - 1} \geqslant 1$ whenever $\ell > 2$, we conclude that $\ket{\theta}$ may always be shifted so that $\ket{\omega_1} = 0$.  Strictly speaking, we should also allow for the possibility that $\LMod$ has a vanishing \sv{} of dimension less than that of $\ket{\omega_0}$, but this turns out not to change the result.}  %
For this choice of $\ket{\theta}$, we deduce (from the global conformal invariance of the vacuum) that $\corrfn{\func{\xi}{z} \func{\theta}{w}} = 0$.  (This is certainly not true when $\ell = 1$ --- see \eqref{eq:L=1CorrFns}.)

To illustrate the method, consider the $c=0$ staggered module $\VirStag{1,7}$ that was briefly considered in \secref{sec:NGK}.  Recalling \eqref{ses:S17S18} and \eqref{eq:c=0Betas}, we note that $\ket{\xi}$ has dimension $2$, $\ket{\theta}$ has dimension $h = 5$, $\beta = -420$, and $\ket{\chi} = \brac{L_{-1}^3 - 6 L_{-2} L_{-1} + 6 L_{-3}} \ket{\xi}$.  Setting $\ket{\omega_1} = 0$ forces $\ket{\omega_2} = -\tfrac{1}{48} \beta L_{-1} \ket{\xi}$ and $\ket{\omega_3} = \tfrac{1}{12} \beta \ket{\xi}$.  Using the vanishing of $\corrfn{\func{\xi}{z} \func{\theta}{w}}$ and \eqref{eq:2PtLogDesc}, we now obtain
\begin{equation}
\begin{aligned}
\corrfn{\func{\brac{L_{-3} \xi}}{z} \func{\theta}{w}} &= -4 \frac{\corrfn{\func{\xi}{z} \func{\omega_2}{w}}}{\brac{z-w}^5} - 5 \frac{\corrfn{\func{\xi}{z} \func{\omega_3}{w}}}{\brac{z-w}^6} = \frac{-\beta / 12}{\brac{z-w}^{10}}, \\
\corrfn{\func{\brac{L_{-2} L_{-1} \xi}}{z} \func{\theta}{w}} &= \frac{\corrfn{\func{\brac{L_{-1} \xi}}{z} \func{\omega_2}{w}}}{\brac{z-w}^5} = \frac{5 \beta / 12}{\brac{z-w}^{10}},
\end{aligned}
\end{equation}
and $\corrfn{\func{\brac{L_{-1}^3 \xi}}{z} \func{\theta}{w}} = 0$.  Therefore,
\begin{equation}
\corrfn{\func{\chi}{z} \func{\theta}{w}} = \frac{-3 \beta}{\brac{z-w}^{10}}
\end{equation}
and $B = -3 \beta = 1260$.  Unfortunately, we are not aware of any general results concerning the ratio $B / \beta$.  However, the limit formula for $\beta$ given in \cite{VasInd11} has a variant which gives $B$ directly \cite{GaiLat13}.

Whether one prefers $\beta$ or $B$, it is clear that such a parameter is mathematically important and physically relevant.  For example, the logarithmic couplings for the $c=0$ staggered modules containing the vacuum distinguish the percolation and dilute polymers theories.  In this case, the staggered modules have different right modules, but this might be too difficult to check explicitly in a more general physical situation.  Another example occurs in the symplectic fermions theory in which one can identify $c=-2$ staggered Virasoro modules where $\ket{\xi}$ and $\ket{\theta}$ have dimensions $0$ and $1$, respectively.  For each such module, one example being $\ket{\theta} = J^+_{-1} \ket{\Omega}$ and $\ket{\xi} = -J^+_0 \ket{\Omega}$ (see \secref{sec:SF} for notation), $\beta$ is found to be $-1$.  However, there is another $c=-2$ theory, the \emph{abelian sandpile model} (see \cite{DhaAbe99} for example), in which a staggered module with $\ket{\xi}$ dimension $0$ and $\ket{\theta}$ dimension $1$ is present.  However, the logarithmic coupling here has been measured to be $\beta = \tfrac{1}{2}$ \cite{JenHei06}, indicating that this theory is not equivalent to symplectic fermions.

\subsection{Further Developments} \label{sec:StagFuture}

As noted at the beginning, the staggered modules which have occupied us throughout this section are among the simplest, structurally, which give rise to logarithmic singularities in two-point functions.  However, more complicated structures may arise:  Examples of indecomposable Virasoro modules on which $L_0$ acts with Jordan blocks of rank $3$ were first discovered in \cite{EbeVir06} through fusion.  These appear structurally as a glueing of four \hwms{} and their existence was posited more generally in \cite{RasFus07,RasFus07b} using a (conjectured) means of analysing Virasoro representations using lattice techniques.  Moreover, indecomposables formed by glueing three \hwms{}, but with only rank $2$ Jordan blocks, have been shown to arise in the percolation \cft{} \cite{RidPer08}.  More recent lattice computations \cite{GaiPhy12} suggest that indecomposables with Jordan blocks of arbitrary rank are physically relevant.  Unfortunately, there is almost nothing known about the finer mathematical structure of these more complicated indecomposables.

For completeness, we consider an example illustrating a Virasoro indecomposable with a rank $3$ Jordan block for $L_0$.  It may be realised as the fusion product of the $c=0$ irreducibles $\VirIrr{1/8}$ and $\VirIrr{-1/24}$, but has only been explicitly constructed to grade $6$ \cite{EbeVir06}, so its deeper structure remains unknown (see \cite{RasFus07} for a conjectured character).  There are two ground states of dimension $0$ and they form a Jordan block for $L_0$.  We denote the eigenvector by $\ket{\xi}$ and its partner by $\ket{\theta}$, normalised so that $L_0 \ket{\theta} = \ket{\xi}$.  At dimension $1$, one finds four states with all but one forming a rank $3$ Jordan block.  $L_{-1} \ket{\xi}$ is found to be non-zero, hence is the $L_0$-eigenstate belonging to the Jordan block; the other eigenstate will be denoted by $\ket{\xi'}$.  The generator of the Jordan block will be denoted by $\ket{\zeta}$ and, along with $L_{-1} \ket{\theta}$, this completes a basis for the dimension $1$ subspace.

The states $\ket{\zeta}$ and $\ket{\xi'}$ must satisfy the following relations:
\begin{equation}
\brac{L_0 - 1} \ket{\zeta} = a_1 L_{-1} \ket{\theta} + a_2 \ket{\xi'} + a_3 L_{-1} \ket{\xi}, \quad 
L_1 \ket{\zeta} = b_1 \ket{\xi} + b_2 \ket{\theta}, \quad 
L_1 \ket{\xi'} = c_1 \ket{\xi} + c_2 \ket{\theta}.
\end{equation}
However, $a_1 \neq 0$ as $\ket{\zeta}$ generates a rank $3$ block, hence we may scale $\ket{\zeta}$ so that $a_1 = 1$.  Moreover, $\ket{\xi'}$ is an $L_0$-eigenvector, hence $0 = L_1 \brac{L_0 - 1} \ket{\xi'} = L_0 L_1 \ket{\xi'} = c_1 L_0 \ket{\xi} + c_2 L_0 \ket{\theta} = c_2 \ket{\xi}$, giving $c_2 = 0$.  One can check explicitly that $c_1$ is non-zero, hence we may normalise $\ket{\xi'}$ so that $c_1 = 1$.  We now make use of the freedom we have in defining $\ket{\zeta}$:  Shifting by multiples of $L_{-1} \ket{\theta}$ and $\ket{\xi'}$ allows us to tune $a_3$ and $b_1$ to $0$.  Finally, $b_2 \ket{\xi} = L_0 L_1 \ket{\zeta} = L_1 \brac{L_0 - 1} \ket{\zeta} = \brac{a_2 + 2} \ket{\xi}$ reduces us to a single unknown:
\begin{equation}
\brac{L_0 - 1} \ket{\zeta} = L_{-1} \ket{\theta} + \brac{b_2 - 2} \ket{\xi'}, \qquad 
L_1 \ket{\zeta} = b_2 \ket{\theta}, \qquad 
L_1 \ket{\xi'} = \ket{\xi}.
\end{equation}
These conclusions may be checked explicitly with the result that the fusion product has $b_2 = -\tfrac{1}{12}$.  It is now straight-forward to calculate the corresponding two-point correlators:
\begin{equation}
\begin{gathered}
\corrfn{\func{\xi}{z} \func{\xi}{w}} = \corrfn{\func{\xi}{z} \func{\xi'}{w}} = 0, \quad 
\corrfn{\func{\xi}{z} \func{\theta}{w}} = 1, \quad 
\corrfn{\func{\theta}{z} \func{\theta}{w}} = A - 2 \func{\log}{z-w}, \\
\corrfn{\func{\theta}{z} \func{\xi'}{w}} = \frac{1}{z-w}, \quad 
\corrfn{\func{\xi'}{z} \func{\xi'}{w}} = \frac{1}{\brac{z-w}^2}, \quad
\corrfn{\func{\xi}{z} \func{\zeta}{w}} = \frac{b_2}{z-w}, \\
\corrfn{\func{\theta}{z} \func{\zeta}{w}} = \frac{B - 2 b_2 \func{\log}{z-w}}{z-w}, \quad 
\corrfn{\func{\xi'}{z} \func{\zeta}{w}} = \frac{C - \brac{b_2 - 3} \func{\log}{z-w}}{\brac{z-w}^2}, \\
\corrfn{\func{\zeta}{z} \func{\zeta}{w}} = \frac{D + E \func{\log}{z-w} - \brac{b_2 - 1} \brac{b_2 - 6} \func{\log^2}{z-w}}{\brac{z-w}^2}.
\end{gathered}
\end{equation}
The constants $A$, $B$, $C$, $D$ and $E$ all depend upon the precise choices that we have made for the fields, while $b_2$ does not.  We remark that a $\log^2$ term in two-point functions is indicative of a rank $3$ Jordan block for $L_0$.

We conclude by noting that there are large gaps in our knowledge regarding how one can completely specify the isomorphism class of staggered modules for general chiral algebras and for generalised staggered modules such as the one studied above.  While the ``generalised logarithmic coupling'' $b_2$ seems to characterise the module above, we cannot say whether each choice of $b_2$ corresponds to a consistent indecomposable structure or whether $b_2 = -\tfrac{1}{12}$ is the only consistent choice.  A related issue is our lack of knowledge concerning logarithmic couplings for bulk modules.  We have throughout specified Loewy diagrams for atypical bulk indecomposables without considering whether these diagrams completely determine the module structure.  This seems to be a very difficult problem which has only received attention very recently \cite{VasPuz11,RidNon12,GaiPhy12}.  Hopefully, the years to come will improve our understanding of bulk atypicals considerably.

\section{Summary and Conclusions} \label{sec:Conc}

In this review, we have covered various aspects of \lcft{}.  With the exception of motivational sections like \secref{sec:Crossing}, we have worked exclusively in the continuum setting.  The indecomposable structures that arise in logarithmic theories were introduced for the Virasoro algebra through a discussion of the horizontal crossing probability for critical percolation.  While these structures lead directly to the type of singularities in correlation functions that logarithmic theories are named for, the percolation theory is still not particularly well understood.  In particular, we are still not sure of its spectrum.  To contrast this, we then embarked on detailed summaries of three of the best understood examples (we refer to them as ``archetypes'') of \lcfts{}.  Here, the indecomposability arose for modules over certain affine Kac-Moody superalgebras and we were able to provide a relatively complete picture including the spectrum and fusion rules, the chiral characters and their modular transformations, as well as bulk modular invariants and structures for the bulk state space.  We concluded with a brief introduction to the general (and as yet unknown) mathematical theory of the type of indecomposables responsible for logarithmic structure, the staggered modules.

The continuum approach to \lcft{} that we advocate here consists of two essential steps.  First, one needs to thoroughly understand the representation theory of the chiral algebra.  The observed pattern in each of our archetypes is that one has a large spectrum, preferably continuous, of standard modules which are generically irreducible.  The irreducible standard modules are said to be typical, whereas there are also atypical standard modules that are reducible, but indecomposable, the vacuum module being a notable example.  The crucial, but difficult in general, step is to understand the structure of the projective covers (in an appropriate category of vertex algebra modules) of these atypicals.  We expect that there is a generalisation of Bern\v{s}te\u{\i}n-Gel'fand-Gel'fand reciprocity \cite{BerCer76} at work here, where the structure of the atypical standards in terms of irreducibles is related to the structure of the projectives in terms of atypical standards.

The second step is the modular data.  Here, we have seen that the characters of the standard modules span a (projective) representation of the modular group and that, when the spectrum is continuous, this representation may be (topologically) completed so as to include the characters of the atypical irreducibles.  Then, we can apply the continuum Verlinde formula to get Grothendieck fusion coefficients, determine the bulk modular invariants, and discuss extended algebras.  These steps together constitute a procedure which is just a logarithmic adaptation of that detailed in \secref{sec:FreeBoson} for the free boson.  We also view it as a wide-reaching generalisation of the formalism proposed for type I supergroup \WZW{} models in \cite{Quella:2007hr}.  While it has thus far only been tested on logarithmic theories related to affine algebras, we are convinced that our procedure is much more generally applicable.  We therefore summarise it as follows:
\begin{enumerate}[leftmargin=*]
\item Start with a theory possessing a continuous spectrum of well understood standard modules whose characters are all linearly independent.
\item Compute the modular S-transforms of the characters of the standards, the typical irreducibles being a proper subset.
\item Splice short exact sequences to obtain resolutions for the atypical irreducibles in terms of standards. This allows one to express atypical characters as infinite sums of standard characters.
\item Use these sums to deduce the S-matrix entries between atypical irreducibles and standards.
\item Apply the natural continuum version of the Verlinde formula to compute the Grothendieck fusion rules, checking that the resulting multiplicities are non-negative integers.
\item Check the modular invariance of the diagonal bulk partition function and look for simple currents in the fusion rules.  Construct extended algebras with discrete spectra and modular invariant partition functions.
\item Use the Grothendieck fusion rules of the continuum theory to deduce those of its simple current extensions.
\end{enumerate}

We remark that the determination of the genuine fusion rules and the structure of the projective covers of the atypicals still remains.  For this, the Nahm-Gaberdiel-Kausch fusion algorithm will be computationally prohibitive in all but the easiest cases.  However, the Grothendieck rules indicate exactly where one should look for indecomposable structure, especially if one has mathematical knowledge of the irreducibles in the spectrum (extension groups in particular).  We therefore suggest that combining this knowledge with the computation of carefully chosen correlators may be sufficient to determine the structures of the indecomposables that arise by fusing irreducibles.  In this article, we have only partially capitalised upon the power of free field realisations.  All three archetypes admitted a free field realisation inside a lattice theory, the chiral algebra being identified with the kernel of a screening charge.  Such a description can be a powerful tool in representation-theoretic investigations and provide one of the only really general tools available for computing the correlators required to completely determine the fusion rules.

It is worth stressing that the derivation of atypical S-matrix elements, necessary for applying the Verlinde formula, requires \emph{non-periodic} resolutions of atypical irreducibles in terms of indecomposable standards.  The simple current extensions that we construct usually have discrete, or even finite, spectra and so the resolutions for the extended atypicals are periodic.  In many such examples, the S-transformations are known \cite{Flohr:1995ea} to depend upon the modular parameter $\tau$, invalidating a straight-forward application of the Verlinde formula.  There is a proposal \cite{FucNon04} for a generalisation of the Verlinde formula appropriate to certain logarithmic theories with discrete spectra (see also \cite{MiyMod04,FloVer07,GabFro08,GaiRad09,PeaGro10}), so it would be very interesting to see if this proposal can be derived using our extended algebra formalism.  (Of course, the fusion rules for an extended algebra are easily obtained from those of its parent.)

Finally, we remark that there is a real need to extend our detailed knowledge of \lcft{} to more non-trivial examples.  The three archetypes considered in this review all have similar logarithmic structures and are, in fact, very closely related.  Namely, one can construct $\AKMA{sl}{2}_{-1/2}$ as a $\AKMA{u}{1}$-coset of (an extension of) $\AKMSA{gl}{1}{1}$ \cite{Creutzig:2008an}, while the singlet algebra $\SingAlg{1,2}$ and triplet algebra $\TripAlg{1,2}$ are $\AKMA{u}{1}$-cosets of $\AKMA{sl}{2}_{-1/2}$ and its maximal extended algebra, respectively \cite{RidSL210}.%
\footnote{Interestingly, the latter observation is repeated for $\SingAlg{1,3}$, $\TripAlg{1,3}$ and $\AKMA{sl}{2}_{-4/3}$ \cite{AdaCon05}, but the generalisation is that $\SingAlg{1,p}$ and $\TripAlg{1,p}$ are $\AKMA{u}{1}$-cosets of certain fractional level Feigin-Semikhatov algebras $W_n^{(2)}$ \cite{CRWCos13}.}  %
In particular, all the atypical irreducibles of these theories may be resolved in terms of (indecomposable) standards.  The next step in terms of difficulty would be to consider logarithmic theories in which one has \emph{degree $2$} atypicals which are resolved in terms of degree $1$ atypicals, which are themselves resolved in terms of standards.  An example of such a theory has been worked out in \cite{BabTak12}.  We expect that the $\AKMSA{gl}{2}{2}$ and $\AKMSA{psl}{2}{2}$ \WZW{} models will provide further examples which, with applications to both string theory and statistical physics in mind, seem to us to be the most obvious candidates for future study.

\addtocontents{toc}{\setcounter{tocdepth}{-1}}
\section*{Acknowledgements}
\addtocontents{toc}{\setcounter{tocdepth}{2}}

We would like to thank the many many people who have contributed to the writing of this logarithmic review through teaching us, discussing with us and telling us when we were talking rubbish.  These people include, but are not limited to, Drazen Adamovi\'{c}, John Cardy, Matthias Gaberdiel, Azat Gainutdinov, Yasuaki Hikida, Kalle Kyt\"{o}l\"{a}, Andrew Linshaw, Pierre Mathieu, Antun Milas, Paul Pearce, Thomas Quella, J\o{}rgen Rasmussen, Peter R\o{}nne, Philippe Ruelle, Ingo Runkel, Yvan Saint-Aubin, Hubert Saleur, Volker Schomerus, Alyosha Semikhatov, Romain Vasseur and Simon Wood.  The research of DR is supported by an Australian Research Council Discovery Project DP1093910.

\appendix

\section{Homological Algebra:  A (Very Basic) Primer} \label{app:HomAlg}

The quantum state space of a (bulk) \cft{} is a module (representation space) over a symmetry algebra which must contain two commuting copies of the Virasoro algebra (one copy in the boundary case).  The \emph{raison d'\^{e}tre} of \lcft{} is that there are occasions for which this module cannot be written as a direct sum of irreducible submodules over the symmetry algebra, but rather that one must include, in the direct sum, some submodules which are reducible but indecomposable.  When discussing reducible but indecomposable modules, the language of homological algebra and category theory becomes very convenient, indeed almost unavoidable.  In this appendix, we will introduce some of the basic concepts and terminology that are used freely throughout the text.

\subsection{Exact Sequences and Extensions} \label{app:Exact}

First, recall that a module is said to be \emph{reducible} if it contains a non-trivial proper submodule and \emph{decomposable} if it may be written as the direct sum of two non-trivial submodules:  $M = M_1 \oplus M_2$ with $M_1, M_2 \neq \set{0}$.  We remark that in the latter case, $M_1$ and $M_2$ are called \emph{direct summands} of the decomposable module $M$.  An \emph{irreducible} (or \emph{simple}) module is then one which has only the trivial module and itself as submodules.  A direct sum of irreducible modules is said to be \emph{completely reducible} (or \emph{semisimple}).  Similarly, an \emph{indecomposable} module is one for which $M = M_1 \oplus M_2$ implies that either $M_1$ or $M_2$ is trivial.  Finally, we define a \emph{proper} submodule of $M$ to be one which is not all of $M$.  A \emph{maximal} proper submodule of $M$ is then a proper submodule $N$ for which $N \subset M' \subseteq M$ (with $M'$ also a submodule of $M$) implies that $M' = M$.

The structure-preserving maps between modules $M$ and $N$, the \emph{module homomorphisms} (or intertwiners), are defined to be those linear maps $f \colon M \to N$ which commute with the action of the symmetry algebra $A$:
\begin{equation}
a \cdot \func{f}{m} = \func{f}{a \cdot m} \qquad \text{for all \(a \in A\) and \(m \in M\).}
\end{equation}
Canonical examples include the identity map $\id \colon M \to M$, the inclusion map $\iota \colon M \to N$ of a submodule $M$ of $N$, and the canonical projection $\pi \colon N \to N / M$ onto the quotient by a submodule.  We note that both the kernel and image of a module homomorphism $f \colon M \to N$ are submodules (of $M$ and $N$, respectively).

One of the central notions of homological algebra is the \emph{exact sequence}.  This is a chain of modules connected by module homomorphisms,
\begin{equation}
\cdots \lra M_{-2} \overset{f_{-2}}{\lra} M_{-1} \overset{f_{-1}}{\lra} M_0 \overset{f_0}{\lra} M_1 \overset{f_1}{\lra} M_2 \lra \cdots,
\end{equation}
which is \emph{exact}:  At each position $n$ of the chain, the kernel of the outgoing homomorphism $f_n$ coincides with the image of the incoming one $f_{n-1}$, that is, $\ker f_n = \im f_{n-1}$.  One may also consider chains which terminate at either end, in which case one does not require exactness at the endpoints.  As examples, note that the identity, inclusion and quotient maps give rise to the following exact sequences:
\begin{equation}
0 \lra M \overset{\id}{\lra} M \lra 0, \qquad 0 \lra M \overset{\iota}{\lra} N, \qquad N \overset{\pi}{\lra} N/M \lra 0.
\end{equation}
Conversely, the first sequence says that $\id$ is bijective, the second that $\iota$ is injective and the third that $\pi$ is surjective.  We remark that it is standard to abbreviate the trivial module $\set{0}$ to just $0$.

A \emph{short exact sequence} is an exact sequence of the form
\begin{equation} \label{ses}
\dses{M}{\iota}{N}{\pi}{Q}
\end{equation}
(note that $\iota$ is automatically an injection and $\pi$ is automatically a surjection).  This concisely summarises the common situation in which $M$ is a submodule $N$ and the quotient $N/M$ is isomorphic to $Q$.  An obvious question that arises now is whether there is a submodule of $N$ isomorphic to $Q$ such that $N \cong M \oplus Q$ --- if so, then we say that the short exact sequence \emph{splits}.  Certainly, if $N$ is such a direct sum, then the sequence \eqref{ses} is exact.  We are therefore asking if a direct sum is the only possibility.  In the very important case in which both $M$ and $Q$ are indecomposable, this amounts to asking if $M$ and $Q$ can be ``glued together'' to form a new indecomposable $N$, or whether they may only be combined as $M \oplus Q$.

Answering this question is a very subtle and difficult business in general.  The general machinery of homological algebra reduces this to the computation of the first \emph{extension group} $\ExtGrp{1}{Q}{M}$ (when \eqref{ses} is exact, $N$ is said to be an \emph{extension} of $Q$ by $M$).  In particular, $\ExtGrp{1}{Q}{M} = 0$ guarantees that \eqref{ses} always splits, hence that $N \cong M \oplus Q$ is the only possibility.  We will not need to concern ourselves with the formal machinery required to compute extension groups in general.  Instead, we only remark that there is one easy test for deciding when a short exact sequence must split:  If some \emph{central} element of the symmetry algebra $A$ acts as a multiple of the identity on $M$ and as a \emph{different} multiple of the identity on $Q$, then it acts on $N$ as a linear map with two distinct eigenspaces which may easily be checked to be submodules isomorphic to $M$ and $Q$.  In other words, it follows that $N$ splits as $M \oplus Q$.

We will typically apply this test on modules that are graded by the (generalised) eigenvalues of a zero-mode such as that of the Virasoro algebra $L_0$.  In this case, the central element should be identified with $\ee^{2 \pi \ii L_0}$ and the test reduces to the remark that if the states of the indecomposable modules $M_1$ and $M_2$, have conformal dimensions ($L_0$-eigenvalues) in $\ZZ + h_1$ and $\ZZ + h_2$, respectively, then $\ExtGrp{1}{M_1}{M_2} = 0$ will follow if $h_1 \neq h_2 \bmod{\ZZ}$.

We also note that there may exist modules $Q$ for which $\ExtGrp{1}{Q}{M} = 0$ for all modules $M$ (in the category under consideration).  Such $Q$ are said to be \emph{projective}.  These modules give a rough measure of maximal complexity among indecomposable modules.  More precisely, they have the property that if they appear as a quotient of any module, then that module decomposes as the direct sum of $Q$ and something else.  We mention that if an irreducible module $Q$ is a quotient of an indecomposable projective module $P$, then we say that $P$ is the \emph{projective cover} of $Q$ (projectivity guarantees that these covers are unique when they exist).  This is a remarkably useful concept.  Unfortunately, progress in rigorously identifying projective covers for the module categories of interest in \lcft{} is lamentably slow.

\subsection{Splicing Exact Sequences} \label{app:Splicing}

One construction that we will take advantage of is that of \emph{splicing} two short exact sequences together.  This procedure starts from two short exact sequences of the forms
\begin{equation}
\dses{M_1}{\iota_1}{N_1}{\pi_1}{M_0}, \qquad \dses{M_2}{\iota_2}{N_2}{\pi_2}{M_1}
\end{equation}
and produces the sequence
\begin{equation}
0 \lra M_2 \overset{\iota_2}{\lra} N_2 \xrightarrow{\iota_1 \circ \pi_2} N_1 \overset{\pi_1}{\lra} M_0 \lra 0.
\end{equation}
It is a simple exercise to check that the resulting sequence is also exact.  Moreover, if one has a short exact sequence with $M_2$ in the third position, $\ses{M_3}{N_3}{M_2}$ say, then the splicing process may be repeated to obtain a new, longer exact sequence.  The most interesting case occurs when one can splice infinitely many times, thereby obtaining a long exact sequence called a \emph{resolution} of $M_0$ in terms of the $N_i$:
\begin{equation}
\cdots \lra N_5 \xrightarrow{\iota_4 \circ \pi_5} N_4 \xrightarrow{\iota_3 \circ \pi_4} N_3 \xrightarrow{\iota_2 \circ \pi_3} N_2 \xrightarrow{\iota_1 \circ \pi_2} N_1 \overset{\pi_1}{\lra} M_0 \lra 0.
\end{equation}
When the $N_i$ belong to a class of well-behaved modules, one can use this to understand the behaviour of $M_0$.

\subsection{Grothendieck Groups and Rings} \label{app:Grothendieck}

One point that deserves emphasising is that the indecomposable structures that modules may exhibit become irrelevant when modules are replaced by their underlying vector spaces.  This is because exact sequences of vector spaces always split.  It follows that quantities attached to modules which only depend upon their vector space structure, characters being prime examples, will be blind to any indecomposable structure.  One is therefore motivated to consider the effect of forgetting the indecomposable structure of modules such as the quantum state space.  Certainly, it will usually be easier to ignore questions of whether a module is completely reducible or not and for some applications, computing modular invariant partition functions for example, maintaining such ignorance is perfectly justified.

This ``forgetting'' may be formalised through the notion of a \emph{Grothendieck group}.  Here, one starts with a collection (category) of modules, preferably finitely-generated, and forms an abelian group $\Groth$  whose generators are formal elements $\tGr{M}$, where $M$ is a module from the collection, and whose relations are
\begin{equation}
\tGr{M} - \tGr{N} + \tGr{Q} = 0, \qquad \text{whenever} \quad \dses{M}{}{N}{}{Q} \quad \text{is exact.}
\end{equation}
The point of these relations is to ensure that all extensions of $Q$ by $M$ are identified with $M \oplus Q$ in $\Groth$.  In favourable circumstances, such as when the characters of the irreducible modules are linearly independent, then the abelian group generated by the characters will be isomorphic to the Grothendieck group.  Note that the collections of modules that we will be considering will always carry a tensor product structure (fusion) and that this structure will (almost) always induce a well-defined product on $\Groth$.  We will therefore usually refer to $\Groth$ as a \emph{Grothendieck ring} or Grothendieck fusion ring.

\subsection{Socle Series and Loewy Diagrams} \label{app:Socle}

Finally, it is often convenient to go beyond the formalism of exact sequences in order to visualise the structure of an indecomposable module.  One way to do this is through \emph{composition series}.  This is a filtration of a module $M$ by submodules,
\begin{equation} \label{filt}
0 = M_0 \subset M_1 \subset \cdots \subset M_{\ell - 1} \subset M_{\ell} = M,
\end{equation}
such that the \emph{subquotients} $C_i = M_i / M_{i-1}$ of $M$ are all irreducible.  Composition series are not unique, but if $M$ does possess a composition series, then the \emph{length} $\ell$ and the \emph{composition factors} $C_i$ are common to every composition series of $M$ (only the ordering of the $C_i$ may change).

A common variation on this theme is the \emph{socle series}.  Here, one again has a filtration \eqref{filt} by modules, but the condition that $M_i / M_{i-1}$ be irreducible is replaced by the requirement that $M_i / M_{i-1}$ be the \emph{socle} of $M / M_{i-1}$.  The socle of a module $M$ is defined to be its maximal completely reducible submodule.  Equivalently, it is the (direct) sum of the irreducible submodules of $M$.  As the socle is unique (if it exists), the same is true of socle series.  In essence, composition series describe how irreducible modules are glued together to form $M$, whereas socle series describe how completely reducible modules are glued and so are usually more efficient.  The dual notion to the socle is the maximal semisimple quotient, sometimes called the \emph{head}.

We conclude with an abstract example illustrating an indecomposable structure which commonly arises in \lcft{} (and elsewhere).  The socle series is
\begin{equation}
0 = M_0 \subset M_1 \subset M_2 \subset M_3 = M, \quad \text{with} \quad \text{\(M_1 \cong C_1\), \(M_2 / M_1 \cong C_2 \oplus C_3\) and \(M_3 / M_2 \cong C_4\),}
\end{equation}
where the $C_i$ are the irreducible subquotients (composition factors) of $M$.  A common way of visualising this information is through the associated \emph{Loewy diagram}.  This is constructed by ``layering'' with the $i$-th layer consisting of the direct summands comprising the $i$-th filtration quotient $M_i / M_{i-1}$.  In the example at hand, the bottom layer is $\soc M = M_1 \cong C_1$, the middle layer is $\soc \brac{M / M_1} = M_2 / M_1 = C_2 \oplus C_3$, and the top layer is $\soc \brac{M / M_2} = M / M_2 = C_4$.  This is illustrated in \figref{fig:Loewy} (left).

\begin{figure}
\begin{center}
\begin{tikzpicture}[thick,scale=0.8,>=latex]
\node (t) at (-6,1.5) [] {$C_4$};
\node (l) at (-7.5,0) [] {$C_2$};
\node (r) at (-4.5,0) [] {$C_3$};
\node (b) at (-6,-1.5) [] {$C_1$};
\node (top) at (0,1.5) [] {$C_4$};
\node (left) at (-1.5,0) [] {$C_2$};
\node (right) at (1.5,0) [] {$C_3$};
\node (bot) at (0,-1.5) [] {$C_1$};
\draw [->] (top) -- (left);
\draw [->] (top) -- (right);
\draw [->] (left) -- (bot);
\draw [->] (right) -- (bot);
\end{tikzpicture}
\caption{ \label{fig:Loewy} An example of a Loewy diagram (left) and its annotated version (right).}
\end{center}
\end{figure}

It is often convenient to annotate Loewy diagrams with arrows detailing the finer structure of the indecomposable module as in \figref{fig:Loewy} (right).  Such an arrow will always point down from a composition factor $C_j$ at layer $i$ to another $C_k$ at layer $i-1$.  Roughly speaking, both $C_j$ and $C_k$ may be associated to certain states of $M$ and the arrow indicates that the action of the algebra can take a state associated with $C_j$ to a state associated with $C_k$, but not vice-versa.  More precisely, it is possible to isolate $C_j$ and $C_k$ by canonically constructing a length $2$ subquotient of $M$ whose composition factors are precisely $C_j$ and $C_k$.  If this subquotient is indecomposable, then we draw an arrow from $C_j$ to $C_k$ (if $\ExtGrp{1}{C_j}{C_k}$ has dimension greater than $1$, we should also affix a label to the arrow to precisely identify the subquotient).  We will not try to demonstrate or verify this precise criterion for drawing arrows on Loewy diagrams here.  Suffice to say that it will be clear in the examples considered how composition factors are associated to states and arrows will be drawn on this rough basis.

\raggedright 


\end{document}